\documentclass[onecolumn, unpublished, a4paper]{quantumarticle}
\pdfoutput=1 
\usepackage{amsmath}
\usepackage{amssymb}
\usepackage[dvipsnames]{xcolor}
\usepackage[xcolor]{changes}
\usepackage[colorlinks=true,
            linkcolor=blue,
            citecolor=blue,
            urlcolor=blue]{hyperref}

\usepackage{mleftright}
\mleftright
\usepackage{quantikz}
\usepackage[T1]{fontenc}
\usepackage{bm}
\usepackage{mathrsfs}
\newcommand{\mathbfcal}[1]{\bm{\mathcal{#1}}}
\usepackage[noabbrev]{cleveref}

\newcommand{\tr}{\textnormal{tr}}
\usepackage{thm-restate}
\usepackage{comment}

\newtheorem{theorem}{Theorem}

\newtheorem{corollary}[theorem]{Corollary}

\newtheorem{definition}[theorem]{Definition}

\newtheorem{lemma}[theorem]{Lemma}

\newtheorem{remark}[theorem]{Remark}

\newenvironment{proof}[1][Proof]{\noindent\textbf{#1.} }{\ \rule{0.5em}{0.5em}}

\newcommand{\ketbra}[2]{|#1\rangle\!\langle#2|}

\usepackage{tikz}
\usepackage{subcaption}
\usetikzlibrary{
positioning,
calc,
arrows,
matrix,
decorations.pathreplacing,
decorations.markings,
backgrounds
}
\usepackage{tikz-3dplot}
\usepackage{braket}
\usepackage{graphicx}
\usepackage{mdwlist}
\usepackage{color}
\usepackage{bbold}

\usetikzlibrary{decorations.pathreplacing,decorations.markings}
\usetikzlibrary{quantikz2}

\tikzset{
    >=stealth',
    punkt/.style={
           rectangle,
           rounded corners,
           draw=black, very thick,
           text width=6.5em,
           minimum height=2em,
           text centered},
    pil/.style={
           ->,
           thick,
           shorten <=2pt,
           shorten >=2pt,},
  on each segment/.style={
    decorate,
    decoration={
      show path construction,
      moveto code={},
      lineto code={
        \path [#1]
        (\tikzinputsegmentfirst) -- (\tikzinputsegmentlast);
      },
      curveto code={
        \path [#1] (\tikzinputsegmentfirst)
        .. controls
        (\tikzinputsegmentsupporta) and (\tikzinputsegmentsupportb)
        ..
        (\tikzinputsegmentlast);
      },
      closepath code={
        \path [#1]
        (\tikzinputsegmentfirst) -- (\tikzinputsegmentlast);
      },
    },
  },
  mid arrow/.style={postaction={decorate,decoration={
        markings,
        mark=at position .5 with {\arrow[#1]{stealth'}}
      }}}
}

\tikzset{%
    mycontrol/.style={draw,circle,fill=black,inner sep=0.5mm},
    myoffcontrol/.style={draw,circle,fill=white,inner sep=0.5mm},
    mySQG/.style={draw=black,rectangle,fill=white,minimum size=0.6cm,inner sep=0.2mm},
    mytarget/.style={inner sep=-0.2mm},
    mymeasurement/.style={circle,draw,scale=1,inner sep=0.2mm},
    mymeasureselect/.style={shape=rounded rectangle, rounded rectangle left arc=none,draw,fill=white,scale=1,inner sep=0.2mm,minimum size=0.6cm},
    myMQG/.style n args={1}{
    draw,fill=white,fit=#1,minimum width=1.2cm,inner ysep=0.3cm
    },
    myMQM/.style n args={1}{
    draw,fill=white,shape=rounded rectangle, rounded rectangle left arc=none,fit=#1,minimum size=1.2cm,inner ysep=0.3cm},
    every edge/.append={thick},
}
\tikzset{Bell/.pic={[pic actions]
    \path 
    (0,0)           coordinate (-i1)
    ++(0,-1.5cm)      coordinate (-i2)
    ++(0,-1.5cm)      coordinate (-i3)
    (-i1)++(4cm,0)   coordinate (-m1)
    (-i2)++(4cm,0)   coordinate (-m2)
    (-i3)++(4cm,0)   coordinate (-m3)
    (-i1) edge (-m1)
    (-i2) edge (-m2)
    (-i3) edge (-m3)
    (-i2) edge[in=180,out=180] (-i3)
    node[myMQM=(-m1)(-m2),anchor=west] (-M) {}
    node[mymeasureselect,anchor=west] at (-m3) (-cM) {}
    ;
}
}
\tikzset{Bell0/.pic={[pic actions]
    \path 
    (0,0)           coordinate (-i1)
    ++(0,-1.5cm)      coordinate (-i2)
    ++(0,-1.5cm)      coordinate (-i3)
    (-i1)++(4cm,0)   coordinate (-m1)
    (-i2)++(4cm,0)   coordinate (-m2)
    (-i3)++(4cm,0)   coordinate (-m3)
    (-i1) edge (-m1)
    (-i2) edge (-m2)
    (-i3) edge (-m3)
    (-i2) edge[in=180,out=180] (-i3)
    node[mymeasureselect,anchor=west] at (-m1) (-cM1) {}
    node[mymeasureselect,anchor=west] at (-m2) (-cM2) {}
    node[mymeasureselect,anchor=west] at (-m3) (-cM3) {}
    ;
}
}

\tikzset{Bell1/.pic={[pic actions]
    \path 
    (0,0)           coordinate (-i1)
    ++(0,-1.5cm)      coordinate (-i2)
    ++(0,-1.5cm)      coordinate (-i3)
    (-i1)++(4cm,0)   coordinate (-m1)
    (-i2)++(4cm,0)   coordinate (-m2)
    (-i3)++(4cm,0)   coordinate (-m3)
    (-i1) edge (-m1)
    (-i2) edge (-m2)
    (-i3) edge (-m3)
    (-i2) edge[in=180,out=180] (-i3)
    (-m1) edge[in=0,out=0] (-m2)
    node[mymeasureselect,anchor=west] at (-m3) (-cM3) {}
    ;
}
}
\tikzset{Bell1b/.pic={[pic actions]
    \path 
    (0,0)           coordinate (-i1)
    ++(0,-1.5cm)      coordinate (-i2)
    ++(0,-1.5cm)      coordinate (-i3)
    (-i1)++(4cm,0)   coordinate (-m1)
    (-i2)++(4cm,0)   coordinate (-m2)
    (-i3)++(8cm,0)   coordinate (-m3)
    (-i1) edge (-m1)
    (-i2) edge (-m2)
    (-i3) edge (-m3)
    (-i2) edge[in=180,out=180] (-i3)
    (-m1) edge[in=0,out=0] (-m2)
    node[mymeasureselect,anchor=west] at (-m3) (-cM3) {}
    ;
}
}
\tikzset{Bell1c/.pic={[pic actions]
    \path 
    (0,0)           coordinate (-i1)
    ++(0,-1.5cm)      coordinate (-i2)
    ++(0,-1.5cm)      coordinate (-i3)
    (-i1)++(4cm,0)   coordinate (-m1)
    (-i2)++(4cm,0)   coordinate (-m2)
    (-i3)++(4cm,0)   coordinate (-m3)
    (-i1) edge (-m1)
    (-i2) edge (-m2)
    (-i3) edge (-m3)
    (-i2) edge[in=180,out=180] (-i3)
    (-m1) edge[in=0,out=0] (-m2)
    node[mymeasureselect,anchor=west,minimum width=1.2cm] at (-m3) (-cM3) {}
    ;
}
}

\tikzset{Bell1d/.pic={[pic actions]
    \path 
    (0,0)             coordinate (-i3)
    (-i3)++(2cm,0)   coordinate (-m3)
    (-i3) edge (-m3)
    node[mymeasureselect,anchor=west,minimum width=1.2cm] at (-m3) (-cM3) {}
    ;
}
}
\tikzset{AlternatingBell/.pic={[pic actions]
    \path 
    (0,0)           coordinate (-i1)
    ++(0,-1.25cm)      coordinate (-i2)
    ++(0,-1.25cm)      coordinate (-i3)
    ++(0,-1.25cm)      coordinate (-i4)
    ++(0,-1.25cm)      coordinate (-i5)
    (-i1)++(4cm,0)   coordinate (-m1)
    (-i2)++(4cm,0)   coordinate (-m2)
    (-i3)++(4cm,0)   coordinate (-m3)
    (-i4)++(4cm,0)   coordinate (-m4)
    (-i5)++(4cm,0)   coordinate (-m5)
    (-i1) edge (-m1)
    (-i2) edge (-m2)
    (-i3) edge (-m3)
    (-i4) edge (-m4)
    (-i5) edge (-m5)
    (-i2) edge[in=180,out=180] (-i4)
    (-i3) edge[in=180,out=180] (-i5)
    node[myMQM=(-m1)(-m2)(-m3),anchor=west] (-M) {}
    node[mymeasureselect,anchor=west] at (-m4) (-cM0) {}
    node[mymeasureselect,anchor=west] at (-m5) (-cM1) {}
    ;
}
}
\pgfdeclarelayer{front}
\pgfdeclarelayer{bg1}
\pgfdeclarelayer{bg2}
\pgfdeclarelayer{bg3}
\pgfdeclarelayer{bg4}
\pgfdeclarelayer{bg5}
\pgfsetlayers{background,bg5,bg4,bg3,bg2,bg1,main,front}

\title{Equivalence of non-local computation tasks beyond Clifford operations}

\author[1]{Andreas Bluhm}
\email{andreas.bluhm@univ-grenoble-alpes.fr}
\orcid{0000-0003-4796-7633}

\author[1]{Simon H\"{o}fer}
\email{simon.hofer@univ-grenoble-alpes.fr}
\orcid{}

\author[2,3]{Alex May}
\email{amay@perimeterinstitute.ca}
\orcid{0000-0002-4030-5410}

\author[4]{Florian Speelman}
\email{f.speelman@uva.nl}
\orcid{0000-0003-3792-9908}

\author[5]{Philip Verduyn Lunel}
\email{philip.verduyn-lunel@lip6.fr}
\orcid{0000-0001-5419-6027}

\affiliation[1]{Univ. Grenoble Alpes, CNRS, Grenoble INP, LIG}
\affiliation[2]{Perimeter Institute for Theoretical Physics}
\affiliation[3]{Institute for Quantum Computing, Waterloo, Ontario}
\affiliation[4]{QuSoft, University of Amsterdam}
\affiliation[5]{Sorbonne Universit\'e, Paris}

\begin{document} 

\abstract{Non-local quantum computation (NLQC) studies how two collaborating players can implement channels on distributed systems using a single simultaneous round of quantum communication and shared entanglement. 
NLQC has applications in diverse areas, ranging from quantum position-verification to quantum gravity. 
Recently, it has been realized that the relationships among families of NLQC tasks are highly structured: many seemingly distinct tasks are related by reductions, wherein implementations of one task can be used to efficiently implement a second task. 
This is analogous to the notion of reduction in complexity theory, and reveals the relative hardness of NLQC tasks. 
In this work we continue the study of reductions among NLQC tasks. 
We focus on NLQC examples of the greatest interest in quantum position-verification; in particular examples involving large classical inputs and fixed-size quantum inputs, since these constitute the most feasible protocols for position-verification schemes.
Within this setting, we find many new relationships among NLQC tasks. 
For instance, protocols for the simplest example of redirecting a quantum system based on a classical control imply protocols for controlled single qubit measurements in arbitrary bases, the controlled application of any Clifford unitary, and even the controlled application of any unitary of the form $U=C_1DC_0$ with $D$ an arbitrary diagonal unitary and $C_0, C_1$ Clifford circuits. 
This implies that many feasible position-verification schemes have the same asymptotic scaling for their entanglement cost, and hence a similar level of security. 
Our techniques rely on ideas from gate teleportation and measurement based quantum computation, among other areas, bringing several new strategies into NLQC which may be of independent interest. 
}

\maketitle

\tableofcontents

\section{Introduction}\label{sec:introduction}

Suppose two parties, Alice and Bob, hold quantum systems $A$ and $B$ at distant locations and wish to implement a joint quantum channel $\mathcal{N}_{AB\rightarrow A'B'}$ on their systems (\cref{fig:local}).
This would be easy if they would be able to freely exchange messages, but what if they are allowed only a single simultaneous round of communication, as in \cref{fig:non-localcomputation}?
This task is known as non-local quantum computation (NLQC), and it turns out that, surprisingly, any quantum channel can be implemented in this form \cite{buhrman2014position,beigi2011simplified}.
However, the known general constructions require that Alice and Bob pre-share an exponentially large entangled state; we are interested in understanding how much entanglement is necessary to implement given channels as NLQC.

\begin{figure}
    \centering
    \begin{subfigure}{0.45\textwidth}
    \centering
    \begin{tikzpicture}[scale=0.6]
                
    \draw[thick] (-1,-1) -- (-1,1) -- (1,1) -- (1,-1) -- (-1,-1);
                
    \draw[thick,mid arrow] (-3.5,-3) to [out=90,in=-90] (-0.5,-1);
    \draw[thick,mid arrow] (3.5,-3) to [out=90,in=-90] (0.5,-1);
                
    \draw[thick,mid arrow] (0.5,1) to [out=90,in=-90] (3.5,3);
    \draw[thick,mid arrow] (-0.5,1) to [out=90,in=-90] (-3.5,3);
                
    \node at (0,0) {$\mathbfcal{N}$};
                
    \node at (0,-5) {$ $};
                
    \end{tikzpicture}
    \caption{}
    \label{fig:local}
    \end{subfigure}
    \hfill
    \begin{subfigure}{0.45\textwidth}
    \centering
    \begin{tikzpicture}[scale=0.4]
                
    \draw[thick] (-5,-5) -- (-5,-3) -- (-3,-3) -- (-3,-5) -- (-5,-5);
    \node at (-4,-4) {$\mathbfcal{V}^L$};
                
    \draw[thick] (5,-5) -- (5,-3) -- (3,-3) -- (3,-5) -- (5,-5);
    \node at (4,-4) {$\mathbfcal{V}^R$};
                
    \draw[thick] (5,5) -- (5,3) -- (3,3) -- (3,5) -- (5,5);
    \node at (4,4) {$\mathbfcal{W}^R$};
                
    \draw[thick] (-5,5) -- (-5,3) -- (-3,3) -- (-3,5) -- (-5,5);
    \node at (-4,4) {$\mathbfcal{W}^L$};
                
    \draw[thick,mid arrow] (-4.5,-3) -- (-4.5,3);
                
    \draw[thick,mid arrow] (4.5,-3) -- (4.5,3);
                
    \draw[thick,mid arrow] (-3.5,-3) to [out=90,in=-90] (3.5,3);
                
    \draw[thick,mid arrow] (3.5,-3) to [out=90,in=-90] (-3.5,3);
                
    \draw[thick] (-3.5,-6) -- (3.5,-6) -- (0,-8) -- (-3.5,-6);
    \draw[thick] (-3.25,-6) -- (-3.25,-5);
    \draw[thick] (3.25,-6) -- (3.25,-5);
    \node at (0,-7) {$\chi$};
                
    \draw[thick] (-4.5,-6) -- (-4.5,-5);
    \draw[thick] (4.5,-6) -- (4.5,-5);
                
    \draw[thick] (4.5,5) -- (4.5,6);
    \draw[thick] (-4.5,5) -- (-4.5,6);
                
    \end{tikzpicture}
    \caption{}
    \label{fig:non-localcomputation}
    \end{subfigure}
    \caption{(a) Local implementation of a channel $\mathbfcal{N}$. (b) Non-local quantum computation. $\mathbfcal{V}^L$, $\mathbfcal{V}^R$, $\mathbfcal{W}^L$, and $\mathbfcal{W}^R$ are channels and $\chi$ is a resource state. Figure reproduced from \cite{bluhm2025complexity}.}
    \label{fig:non-localandlocal}
\end{figure}

Historically, ideas very similar to NLQC were first considered in the context of understanding which measurements on distributed systems could be made instantaneously \cite{vaidman2003}. 
Since then, NLQC has appeared in many settings, including in
quantum position-verification (QPV) \cite{kent2011quantum, buhrman2014position}, quantum gravity \cite{may2019quantum,may2020holographic,may2022connected,may2022complexity}, information-theoretic classical and quantum cryptography \cite{allerstorfer2024relating,asadi2025conditional,girish2026new,girish2026comparing}, uncloneable secret sharing \cite{ananth2024unclonable}, communication complexity \cite{girish2026magic}, and other settings \cite{yu2012fast,apel2024security}. 
In all of these settings, understanding the entanglement cost of NLQC is central to the application. 
In this work, we focus on those NLQC examples which are of the greatest interest in the context of quantum position-verification, specifically settings where a large classical input is combined with a small quantum input. 

In the simplest QPV setting, two verifiers interact remotely with a prover to try to establish the physical location of the prover. 
If the prover is acting honestly, they implement a desired channel locally, as in \cref{fig:local}. 
If the prover cheats, however, they will consist of a coalition of agents, none at the honest location, who together attempt to replace the local channel with an NLQC implementing the same transformation. 
A useful QPV scheme then involves a choice of channel which is easy to implement locally (by an honest prover) but hard to implement nonlocally (by a cheating prover). 
To quantify how hard an NLQC is in this context, one is typically interested in how much entanglement must be used in the NLQC; since entanglement is difficult to prepare over long distances and then store, we expect schemes where a cheating player needs large entanglement to be secure. 

In this work, we explore the hardness of NLQC tasks and the corresponding security of QPV schemes by following an approach begun in \cite{bluhm2025complexity}, which involves relating the hardness of NLQC families to one another. 
We make use of a notion of reduction among NLQC families, which lets us compare their relative hardness. 
This elucidates a rich structure among NLQC tasks, and among bipartite channels more generally. 

To better motivate this approach to NLQC, it is useful to consider the best studied NLQC family, known as $f$-routing. 
The $f$-routing task involves classical inputs $x\in\{0,1\}^n$ and $y\in\{0,1\}^n$ given to Alice and Bob respectively, along with a constant-size quantum system $Q$ given to Alice. 
The goal is to bring $Q$ to Bob if $f(x,y)=1$, or keep $Q$ with Alice if $f(x,y)=0$. 
Despite most of the inputs being classical, all known NLQC implementations of this task require entanglement that grows with $n$.
This is advantageous from the perspective of QPV, since it means that an honest player only needs to implement mostly classical operations along with an $O(1)$ size quantum operation, while a dishonest player seems to need growing quantum entanglement to attack the scheme. 
Indeed, this can even be proven under certain restrictions on the protocol~\cite{bluhm2022single}, proven when assuming the protocol is perfectly correct~\cite{asadi2024rank}, or proven when taking the quantum resource to be quantum gates rather than entanglement~\cite{asadi2025linear}.
Another approach was pursued in~\cite{unruh2014quantum}, which considers a random oracle model and lower bounds the needed oracle calls. 

Despite these features, $f$-routing still faces certain security limitations.
For instance, while it was initially hoped that in the worst case $f$-routing would require $2^{O(n)}$ EPR pairs of entanglement, in fact $2^{O(\sqrt{n\log n})}$ suffices \cite{allerstorfer2024relating}. 
It is not known if this is the worst-case cost of $f$-routing, or if further improvements are possible.
Further, the status of lower bounds is quite weak, with the best bounds being linear and only holding under extra assumptions, as noted above.  
As well, there are upper bounds on entanglement based on various measures of the complexity of $f$, including the formula size of $f$ and the span program size \cite{buhrman2013garden,cree2023code}. 
These upper bounds provide efficient attacks for certain classes of functions. 
Further, these upper bounds mean that proving super-linear lower bounds (as would be desirable for improved security) is unlikely in the near term, since proving (for instance) even a quadratic lower bound on span programs is a difficult open problem in complexity theory. 
Given this status it is natural to explore other possible protocols, which may avoid these issues. 
For instance, perhaps one can give a quantum system to Bob as well, call it $Q'$, and take the task to be implementing $U_{QQ'}^{f(x,y)}$ (implement $U$ classically controlled on the value of $f(x,y)$). 
Or, many other variations are possible; see e.g. \cite{lau2011insecurity,chakraborty2015practical} for earlier considerations of such protocols. 
We can ask if perhaps some variation which is not too much harder to implement locally will evade the limitations of $f$-routing. 

To address this, we use the notion of reduction among NLQC families, and gain more understanding of which families are all related to $f$-routing. 
Our main contribution in this work is to show that in fact many new examples are no harder as NLQCs, and hence no more secure, than $f$-routing. 
This includes showing that all $f$-controlled single-qubit measurement protocols can be reduced to $f$-routing.
Even implementing an $f$-controlled unitary acting on $O(1)$ qubits of the form $U=C_1DC_0$ with $D$ an arbitrary diagonal unitary and $C_0, C_1$ Clifford can be reduced to $O(1)$ copies of $f$-routing. 
 
The $f$-routing task amounts to a classical controlled SWAP operation. 
Surprisingly, although SWAP is Clifford, we find that many classically controlled non-Clifford operations can be reduced to the SWAP.
Indeed, earlier upper bounds suggest non-Clifford operations are much harder than Clifford operations to perform as NLQCs \cite{speelman2015instantaneous}.

Along the way to proving our reductions, we establish several new techniques for implementing NLQCs which may be of independent interest. 
In particular, we exploit gate-teleportation like gadgets in \cref{sec:angleaddition}, and ideas from measurement based quantum computation in \cref{sec:controlleddiagonal}. 
We hope that these new reduction techniques may find further use in upper bound strategies for NLQC. 

\subsection{Previous work}\label{sec:previouswork}

We focus on previous work on NLQC tied to how different examples of NLQC are related. 
A broader review of NLQC can be found in \cite{may2026entanglement}. 

A similar notion of reduction as discussed here first appeared in \cite{allerstorfer2024relating}, where examples of NLQC were related to primitives studied in information-theoretic cryptography. 
There, $f$-routing was related to a primitive known as conditional disclosure of secrets (CDS) \cite{gertner1998protecting,gay2015communication,applebaum2021placing}, and another NLQC family known as coherent function evaluation was related to private simultaneous message passing \cite{feige1994minimal}. 
These relationships were not true reductions in the sense we define here. 
In particular, it was shown that the existence of an $f$-routing protocol implies the existence of a similarly efficient CDS protocol for the same choice of Boolean function, but that CDS protocols imply $f$-routing protocols only when using the purification of the CDS scheme's resource state. 
In contrast, our notion of reduction requires use of the same resource state in both protocols. 
However, this relationship was already strong enough to imply several new results for $f$-routing and CDS, and motivated the study of relationships among NLQC examples outside the context of information-theoretic cryptography.  

The notion of reduction among NLQC tasks was first formalized in \cite{bluhm2025complexity}.
There, a family of NLQC tasks $\mathcal{F}=\{F_n\}$ is said to be reducible to another family $\mathcal{G}=\{G_n\}$ if $F_n$ can be implemented using a small number of copies of the resource state for $G_n$.
They also consider a second, stronger notion of reduction. 
In an \emph{oracle reduction}, implementations of $G_n$ are accessed as a black box and used to implement $F_n$, perhaps along with some additional small shared resources.\footnote{We review notions of reduction in detail in the main text.}
We also say that $G_n$ implies $F_n$ when $F_n$ can be reduced to $G_n$.
The key tasks related in that work were $f$-measure and $f$-route. The
$f$-route task we reviewed above. 
The $f$-measure task involves a single qubit quantum input on Alice's side in the state $H^{f(x,y)}\ket{b}$, $x\in\{0,1\}^n$ given to Alice, and $y\in\{0,1\}^n$ given to Bob. 
The goal is for $b$ to be produced by both Alice and Bob at the end of the protocol.  
$f$-measure is another well studied NLQC example because of its favourable properties as a QPV protocol \cite{buhrman2014position,asadi2024rank,asadi2025linear,bluhm2022single,escola2025quantum,allerstorfer2025making, lau2011insecurity}, and it (or variations thereof) is being studied experimentally \cite{kavuri2025device,fan2026relativistic,kanneworff2025towards}.
The work \cite{bluhm2025complexity} showed that $f$-route implies $f$-measure using an oracle reduction. 
We state this as the following theorem.

\begin{theorem} \label{thm:frouteTOfmeasure}
    \textbf{(From \cite{bluhm2025complexity}) $f$-route $\Rightarrow$ $f$-measure:} An $\epsilon$-correct $f$-route protocol for function $f$ using resource system $\Psi$ implies the existence of an $\epsilon$-correct $f$-measure protocol for the same function using resource system $\Psi^{\otimes 2}$.
\end{theorem}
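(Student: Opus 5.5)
We build the $f$-measure protocol from two black-box calls to a given $f$-route protocol, each using its own copy of $\Psi$. The idea is that $f$-measure needs the qubit measured in the computational basis when $f(x,y)=0$ and in the Hadamard basis when $f(x,y)=1$. Neither player knows $f(x,y)$, but routing lets the qubit end up with the player who is effectively told which basis to use. Concretely, Alice receives $H^{f(x,y)}\ket{b}$ and runs the first $f$-route instance on this qubit $Q$. After this instance, $Q$ sits with Alice if $f=0$ and with Bob if $f=1$, up to error $\epsilon$.

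The routing must happen in the one simultaneous round, so the plan has to be arranged carefully. The honest resolution is to have both players act in the first (local) stage and measure in fixed bases, placing the basis dependence into where the system goes. Alice always measures whatever she holds in the computational basis $Z$, and Bob always measures whatever he holds in the Hadamard basis. If $f=0$, $Q=\ket{b}$ stays with Alice, whose $Z$ measurement yields $b$. If $f=1$, $Q=H\ket{b}$ goes to Bob, whose $X$ measurement yields $b$. The difficulty is that an $f$-route protocol produces its output only after the communication round, whereas the players must both announce $b$ at the end. The routed system appears at the end, so the measured outcome is known to only one party. Making it known to both requires a second round, which is unavailable.

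This is where the second copy of $\Psi$ comes in. We run two $f$-route instances in parallel on two halves of a shared state. Before the round, Alice applies a CNOT from $Q$ onto a fresh ancilla $\ket{0}$, and sends the two qubits into two routing instances. To handle the basis issue, I would instead use the following approach. Each player measures in their basis \emph{after} routing, and we exploit that an $f$-route protocol's second-round operations are local. The classical output of a measurement made by the receiver can be copied, and the players' final outputs are determined jointly. Concretely, the symmetric choice is to route $Q$ in instance one, and in instance two to route the second half of a Bell pair in the \emph{opposite} direction, implemented by $f$-route with the roles relabelled, which is still the same function.

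The main obstacle is making the outcome available to both parties despite the single round. I expect the resolution to be that $f$-route protocols can be assumed to have the structure of a first-round message exchange followed by local decoding. By deferring measurement, the receiver of the routed qubit in instance one can measure in both bases coherently correlated with instance two, so that both final outputs equal $b$. Correctness would follow by the triangle inequality, giving error at most $\epsilon$ provided only one instance's error contributes. I would check that case carefully, since naively the errors add to $2\epsilon$.
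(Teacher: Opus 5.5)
There is a genuine gap: the proposal never reaches a protocol. The decisive step (the receiver ``measuring in both bases coherently correlated with instance two'') is not a construction, and the concrete pieces you name do not work.

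\textbf{The CNOT-copy idea fails.} Copying $Q$ with a CNOT and routing both qubits through $f$-route leaves $b$ with only one party. Routing the copy the other way fails for $f=1$: the pair is then in $(\ket{00}+(-1)^b\ket{11})/\sqrt{2}$, split between the players, and each sees only a uniformly random bit whose XOR with the other's is $b$.

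\textbf{Two $f$-route instances cannot suffice.} Suppose your pre-processing does not depend on $x,y$, and every system is either kept by Alice, sent to Bob, or $f$-routed. Then Bob's holdings when $f=0$ and Alice's holdings when $f=1$ are disjoint. Bob's $Z$-readout and Alice's $X$-readout therefore commute, and together they would give a joint measurement of $Z$ and $X$ on the input, which is impossible. Some system must reach Bob when $f=0$ and Alice when $f=1$, so you need a $\neg f$-route instance. The paper takes this theorem from \cite{bluhm2025complexity} and remarks that its proof is really an oracle implication from one $f$-route and one $\neg f$-route instance.

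\textbf{Role relabelling does not give the reversed direction.} Exchanging Alice's and Bob's roles in an $f$-route protocol gives routing of a Bob-side system for the transposed function $f(y,x)$. That coincides with what you need only when $f$ is symmetric. What works with the same $\Psi$ is to change which first-round registers travel. Alice sends $K_a$ and keeps $M_b$; Bob sends $K_b$ and keeps $M_a$; each then applies the other's second-round map. This swaps the locations of $O$ and $O'$, giving an $\epsilon$-correct $\neg f$-route.

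\textbf{The missing encoding.} Even with both directions available, you need an encoding from which each party can read the $Z$-value out of what it holds when $f=0$, and the $X$-value when $f=1$. For example:
\begin{enumerate}
\item Alice CNOTs $Q$ onto a fresh qubit $S$, prepares $\ket{\Psi_{00}}_{KN}$, and applies CNOTs from $K$ onto $Q$ and onto $S$.
\item She keeps $K$, sends $S$ to Bob, inputs $Q$ to $f$-route and $N$ to $\neg f$-route.
\item On the resulting state $Z_KZ_NZ_QZ_S$ and $X_KX_NX_QX_S$ are stabilizers. Hence $Z_KZ_Q=Z_NZ_S$ acts as the logical $Z$ and $X_KX_N=X_QX_S$ as the logical $X$.
\item In the second round each player measures its two qubits in the $Z$ or $X$ basis according to $f$ and outputs the parity. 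If $f=0$, Alice holds $K,Q$ and Bob holds $N,S$; if $f=1$, Alice holds $K,N$ and Bob holds $Q,S$.
\end{enumerate}
The two parities always agree and are distributed as the required measurement of the input. The error accounting is then a routine application of \cref{remark:diamondadditive}; it is not where the difficulty of this statement lies.
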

More precisely, inspecting the proof of this theorem gives an oracle implication from one instance of $f$-route and one instance of $\neg f$-route to $f$-measure; we define oracle reductions and implications in \cref{sec:oraclereductions}. 

As a partial converse to this result, \cite{bluhm2025complexity} showed that access to a few copies of \emph{purifications} of the $f$-measure resource state suffices to implement $f$-route.
This does not quite satisfy the formal notion of a reduction, as we define it here.
One consequence of our work here is an equivalence between $f$-route and $f$-measure under this stricter notion of reduction (which does not allow purifications), and in fact under the even stronger notion of oracle reductions. 

\subsection{Our results}\label{sec:ourresults}

The most basic object of study in this paper is a $2\rightarrow 2$ task, which involves two input systems $A,B$ and two output systems $A',B'$. 
Systems $A,A'$ are held by Alice and $B,B'$ by Bob. 
We are interested in comparing the relative hardness of such tasks, and are specifically interested in how hard they are to implement in the form of an NLQC. 

To understand this, we define the notion of a reduction between $2\rightarrow 2$ tasks. 
Reductions come in two forms, resource state reductions and oracle reductions. 
In a resource state reduction, task 1 reduces to task 2 if task 1 can be implemented using a small number of copies of any shared resource state that allows task 2 to be implemented as an NLQC, plus some small additional resources (see \cref{def:reduction} for a formal statement). 
In an oracle reduction, task 1 reduces to task 2 if task 1 can be implemented using a few copies of any channel that implements task 2, plus some small additional resources. 
Resource state reductions are clearly a weaker notion of reduction than oracle reductions. 
However, if we are interested mainly in the shared entanglement resources needed to implement an NLQC, establishing a resource state reduction suffices to compare the hardness of the tasks. 

Meanwhile, oracle reductions imply resource state reductions, but also say more. 
An oracle reduction says that the two tasks are related in how hard they are to implement when viewed as abstract interactions. 
Since all of the reductions we prove in this article are in fact oracle reductions, we can interpret them both as statements about the relative hardness of these tasks as NLQCs, but also their relative hardness as interactions.

\begin{figure}
\centering
\makebox[\textwidth][c]{
\begin{tabular}{|p{3.5cm}|p{6.5cm}|p{5.5cm}|}
     \hline 
     Task & Definition & Implied by \\
     \hline 
     $f$-route & Redirect a quantum system (\cref{def:frouting}) & Implied by $f$-measure$(I,H)$ (\cref{cor:routereducedtomeasureH})\\
     \hline
     $f$-measure$(I,H)$ &  Measure in either computational or Hadamard basis (special case of \cref{def:f-measure}) & Implied by $f$-route plus $\neg f$-route (\cref{thm:frouteTOfmeasure}) \\
     \hline 
     $f$-measure$(U_1,V_1)$, single qubit $U_1,V_1$ & Measure in either basis $\{U_1\ket{0},U_1\ket{1}\}$ or basis $\{V_1\ket{0}, V_1\ket{1}\}$ (special case of \cref{def:f-measure}) & Implied by any other non-trivial single qubit $f$-measure$(U_2,V_2)$ (\cref{thm:fmeasureEquivalence}) \\
     \hline 
     $f$-measure(Bell) & Measure in either the computational basis or the Bell basis (\cref{def:fBell}) & Implied by $f$-measure$(I,H)$ (\cref{lem:fBB84tofBell})\\
     \hline 
     $f$-measure(Clifford) & Measure in either the basis $\{C_0\ket{i}\}_i$ or $\{C_1\ket{i}\}_i$ for (multi-qubit) $C_0,C_1$ Clifford (special case of \cref{def:f-measure}) & Implied by $f$-measure(Bell) (\cref{thm:cliffordmeasure}) \\
     \hline 
     $f$-unitary(Clifford) & Apply one of two Clifford unitaries (special case of \cref{def:f-unitary}) & Implied by $f$-measure(Bell) (\cref{thm:fcliffordunitary})\\
     \hline 
     $f$-unitary$(I,C_1DC_0)$ & Apply either the identity, or $U=C_1DC_0$ with $D$ diagonal, $C_1,C_0$ Clifford (special case of \cref{def:f-unitary}) & Oracle reduced to $f$-measure$(I,H)$ via $f$-measure$(I,R_X(\theta))$ (\cref{thm:beyondClifford}) \\
     \hline 
\end{tabular}
}
\caption{Summary of the $2\rightarrow 2$ tasks we study and the implications we prove. All implications are oracle implications.}\label{fig:reductiontable}
\end{figure}

The set of oracle reductions we establish among $2\rightarrow 2$ tasks is shown in \cref{fig:reductiontable}. 
To understand them, a useful starting point is $f$-routing, which we introduced above. 
Our new reductions significantly extend the set of tasks which can be reduced to $f$-routing, and consequently extend the class of tasks which can be implemented efficiently as NLQCs. 
In this paper we focus on tasks which involve implementing a small quantum operation controlled off of the value of a classical Boolean function $f:\{0,1\}^n\times \{0,1\}^n\rightarrow \{0,1\}$. 
We give Alice $x\in\{0,1\}^n$ and Bob $y\in \{0,1\}^n$. 
As discussed above, these are the examples of the most interest for QPV. 

As a starting point, consider the task which requires that we measure a single qubit in one of two bases, which we take to be $\{U\ket{i}\}_i$ and $\{V\ket{i}\}_i$, with the measurement basis fixed by $f(x,y)$.  
The outputs from the task should be the measurement outcome copied to both Alice and Bob's lab. 
We call this task $f$-measure$(U,V)$. 
Note that similar tasks have been studied since QPV was first defined and studied, see e.g.\ \cite{kent2011quantum, olivo2020breaking, lau2011insecurity}. 
We show that for any fixed $U,V$, there is an $O(1)$ oracle reduction to the case where $U=I,V=H$, with $H$ the Hadamard gate. 
Earlier results (see \cref{thm:frouteTOfmeasure}) reduce the Hadamard case to $f$-routing. 
Conversely, we show that $f$-measure$(I,H)$ can be oracle reduced to $f$-measure$(U,V)$ for all $U,V$ except when $U=V$.
This is proven as the following theorem. 

\begin{restatable}{theorem}{fmeasureEquivalence}\label{thm:fmeasureEquivalence}
    Consider two non-trivial single qubit $f$-measure protocols\footnote{We define what non-trivial means here in the main text. As an example, $f$-measure$(U,V)$ with $U=V$ is trivial, as is $U=I,V=R_Z(\theta)$, since $R_Z(\theta)$ leaves the measurement basis elements $\{\ketbra{0}{0}, \ketbra{1}{1}\}$ unchanged under conjugation.} $f$-measure$(U_1,V_1)$ and $f$-measure$(U_2,V_2)$.
    Then they are equivalent under $O(1)$ oracle reductions. 
\end{restatable}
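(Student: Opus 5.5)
Both directions are routed through the hub $f$-measure$(I,H)$. Once we show that every non-trivial $f$-measure$(U,V)$ is $O(1)$-oracle equivalent to $f$-measure$(I,H)$, the theorem follows by composing reductions. Composing a constant number of $O(1)$ oracle reductions is again an $O(1)$ oracle reduction, and the errors add up to a constant multiple of $\epsilon$.

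The first step is normalisation. A measurement in basis $\{U\ket{i}\}$ is fixed by the projectors $U\ketbra{i}{i}U^\dagger$, so only the Bloch axes $\vec u, \vec v$ of $U\ket{0}$ and $V\ket{0}$ matter, up to sign. Relabelling outcomes handles the sign. Alice can apply any fixed single-qubit unitary $W$ to her input before the protocol, since $W$ does not depend on $f$. We therefore conjugate by a common rotation so that $\vec u=\hat z$ and $\vec v$ lies in the $xz$-plane. The task then becomes $f$-measure$(I,R_Y(\theta))$ for some angle $\theta$. Non-triviality means $\vec v\neq\pm\vec u$, that is, $\theta\notin\{0,\pi\}$. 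Take $\theta\in(0,\pi)$; the case $\theta=\pi/2$ is exactly $f$-measure$(I,H)$ up to a Clifford relabelling.

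The direction $f$-measure$(I,H)\Rightarrow f$-measure$(I,R_Y(\theta))$ goes through $f$-route, which is implied by $f$-measure$(I,H)$ by \cref{cor:routereducedtomeasureH}. With one $f$-route instance and one $\neg f$-route instance, the qubit ends up with Bob exactly when $f=1$ and with Alice otherwise. The holder measures in the appropriate basis ($Z$ for Alice, rotated for Bob) and sends the outcome to the other party in the single communication round. Both then output $b$. The catch is timing: the communication round is used up by the routing oracles themselves. One fix is to have both parties measure locally, each in their own basis, on whatever system they hold after the oracle calls. The subtle point is that the outcome must reach both parties; I expect to fix this by using the $f$-measure output convention itself, with each oracle copying its output to both sides. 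Concretely, I would first implement the $\theta$-measurement as a classically controlled Clifford-free operation and then feed the result through one more $f$-measure$(I,H)$ call, with outputs copied to both sides. Chaining oracle calls in this way is allowed in the oracle-reduction model defined in \cref{sec:oraclereductions}.

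The converse, $f$-measure$(I,R_Y(\theta))\Rightarrow f$-measure$(I,H)$, is the main obstacle. I would try to realise the Hadamard measurement as a sequence of rotated measurements by ``angle addition.'' The plan is a gate-teleportation gadget: entangle the input with an ancilla and perform the $\theta$-measurement controlled by $f$. The outcome then induces a Pauli byproduct together with a net rotation of $\pm\theta$ on the remaining system. Alice and Bob both know the byproduct, so they can correct it. Repeating this a constant number $k$ of times, with fixed Clifford-plus-rotation adjustments chosen so that the accumulated angle is $\pi/2$ when $f=1$ and $0$ when $f=0$, yields the Hadamard-basis measurement. Two things remain to check: that the random $\pm$ signs can always be steered to the target with $O(1)$ steps for every $\theta$, which needs either a deterministic correction or an argument that the number of steps is bounded independently of the outcomes; and that the final outcome is copied to both parties. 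This random-walk control on the angle is where I expect the real work to lie.
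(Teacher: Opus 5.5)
Your normalisation to $f$-measure$(I,R_Y(\theta))$ and your choice of $f$-measure$(I,H)$ as a hub are fine, but neither direction through the hub is actually established.

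\textbf{From $f$-measure$(I,H)$ to $f$-measure$(I,R_Y(\theta))$.} The fix you propose for the timing problem is not permitted. An oracle reduction (\cref{def:oraclereduction}) uses $\alpha$ \emph{parallel} oracle instances inside a single NLQC. Every oracle output therefore exists only in the second round and can never be fed into another oracle call; that would need a second communication round. Without chaining, routing the input system itself leaves it with one party after the communication round. That party's measurement outcome cannot reach the other, so the copy requirement fails. The paper avoids routing here and works only with $f$-measure oracles, whose outputs are already broadcast to both sides. \cref{lemma:addingangles} combines a $\theta_1$-oracle and a $\theta_2$-oracle into a $(\theta_1+\theta_2)$-oracle: CNOT-copy the input into the two oracle inputs, measure the input in the Hadamard basis, and output $a\oplus b\oplus c$. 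The byproduct $Z^{a\oplus b}$ commutes with the $R_Z$ rotations, so no outcome-dependent sign appears. \cref{lemma:anglesplitting} obtains $\theta$ from one $2\theta$-oracle and two $f$-measure$(I,H)$ oracles, all used in parallel. It relies on a fixed pre-processing circuit and a nonlinear post-processing that outputs $(1-f)[(1-b)c+ba]+f[b\oplus ac]$. A dyadic expansion of the target angle then yields any $\theta_2$ to accuracy $\epsilon$, with a number of calls depending only on $\theta_1,\theta_2,\epsilon$.

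\textbf{From $f$-measure$(I,R_Y(\theta))$ to $f$-measure$(I,H)$.} Your angle-steering plan cannot work in general, independently of the random-sign issue you flag. Even with the deterministic same-axis addition above, the angles reachable from $\theta$ are integer multiples $k\theta$, modulo $\pi$ and up to sign. For $\theta=\pi/3$ this set is $\{0,\pi/3\}$, which never contains $\pi/2$. Halving is not available either, because \cref{lemma:anglesplitting} itself consumes $f$-measure$(I,H)$ oracles. Rotations about non-commuting axes, on the other hand, produce exactly the outcome-dependent sign flips you describe, and these cannot be corrected since nothing can be adapted within one round.

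The missing idea is amplification (\cref{lemma:backtofH}), which gives an approximate rather than exact reduction:
\begin{itemize}
\item Copy the input in the Hadamard basis, $\ket{\pm}\mapsto\ket{\pm}^{\otimes m}$, and feed each copy into its own $f$-measure$(I,R_Y(\theta))$ oracle.
\item If $f=0$, output the parity of the $m$ outcomes. This is exactly a $Z$-measurement, since in the computational basis the encoded state puts weight $|\alpha|^2$ on even-parity strings and $|\beta|^2$ on odd-parity strings.
\item If $f=1$, output a threshold vote.
\end{itemize}
Non-triviality is precisely the condition $|\bra{+}R_Y(\theta)\ket{0}|^2\neq|\bra{-}R_Y(\theta)\ket{0}|^2$. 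Hoeffding's inequality then bounds the error by $e^{-\Omega(m)}$, with $m$ depending only on $\theta$ and $\epsilon$, not on $n$.
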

This theorem shows that all such protocols are not more secure as QPV protocols than $f$-measure or $f$-route, and furthermore that we can study their security by considering the simplest such example, $f$-measure$(I,H)$. 

Moving beyond single qubit measurements, we also show that the classically controlled application of (multi-qubit) Clifford unitaries, or measurements in bases related by Cliffords, can also be reduced to $f$-measure$(I,H)$. 
To do this, an intermediate step is to show that $f$-measure$(I,B)$ where $B$ is the unitary mapping from the computational to the Bell basis can be reduced to $f$-measure$(I,H)$. 
We prove this as \cref{lem:fBB84tofBell} in the main text. 
With this in hand, we are then able to apply controlled Cliffords by using gate-teleportation strategies. 
This leads to the following theorem. 

\begin{restatable}{theorem}{CliffordMeasure}\label{thm:cliffordmeasure}
    \textbf{$f$-Bell $\Rightarrow$ $f$-measure(Clifford):} There is an $O(n_Q)$ oracle implication from $f$-Bell to $f$-measure$(C_0,C_1)$, for any pair of Cliffords $C_0,C_1$ acting on $n_Q$ qubits.
\end{restatable}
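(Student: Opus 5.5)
We will build the protocol for $f$-measure$(C_0,C_1)$ by gate teleportation, with the classical choice of which Clifford to undo pushed into a single $f$-controlled Bell measurement per qubit. First we simplify the target. Measuring in the basis $\{C_b\ket{i}\}_i$ is the same as applying $C_b^\dagger$ and then measuring in the computational basis. Alice can apply $C_0^\dagger$ locally before doing anything else. So it is enough to implement $f$-measure$(I,C)$ with $C = C_0^\dagger C_1$ Clifford: the input is $\rho$, and the goal is to measure $C^\dagger\rho C$ (if $f=1$) or $\rho$ (if $f=0$) in the computational basis, with the outcome known to both parties.

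The construction is as follows. Alice holds the $n_Q$-qubit input $Q$. She prepares $n_Q$ EPR pairs on registers $E_1E_2$ and applies $C^\dagger$ to $E_2$. This gives the Choi-type state $(I\otimes C^{\dagger})\ket{\Phi}^{\otimes n_Q}$, which is a stabilizer state, and it costs no shared entanglement. Then, qubit by qubit, the pair $(Q_j,E_{1,j})$ is fed into one instance of $f$-Bell, i.e. $f$-measure$(I,B)$ from \cref{def:fBell}.
\begin{itemize}
\item If $f=1$, this Bell-measures $Q$ with $E_1$. This teleports $\rho$ through the resource state, so that $E_2$ holds $C^\dagger P\rho P^\dagger C$. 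Here $P$ is a Pauli determined by the Bell outcomes, which both parties learn.
\item If $f=0$, the computational-basis measurement of $(Q_j,E_{1,j})$ gives the computational outcome of $\rho$ directly. Since $E_1$ is measured in the computational basis, $E_2$ collapses to the known state $C^\dagger\ket{e}$.
\end{itemize}
This uses $n_Q$ oracle calls, which gives the $O(n_Q)$ count.

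The remaining step is to finish with a measurement that works in both branches. In both cases both players know a string $s$ from the oracle outputs, but they do not know $f$. We measure $E_2$ in the computational basis; it can be sent as part of the single round, or, better, Alice measures it locally and broadcasts the result, which is allowed because only the outcome needs to be shared. In the $f=1$ branch, Pauli propagation through the Clifford gives $C^\dagger P = P' C^\dagger$. The measured outcome $z$ of $C^\dagger\rho C$ is therefore obtained from the raw outcome by the bit flips coming from the $X$-part of $P'$, and these depend only on $s$. In the $f=0$ branch the answer is simply read from $s$.

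The main obstacle is that the players must combine these two branches without knowing $f$. I would try to fix this by arranging the bookkeeping so that the final output is a fixed function of $(s,\text{raw outcome})$ that is correct in both branches. Alternatively, I would use a second $f$-Bell (or $\neg f$) instance to coherently select which data to output, in the spirit of the proof of \cref{thm:frouteTOfmeasure}, which uses both $f$-route and $\neg f$-route. If that fails, I would fall back to \cref{lem:fBB84tofBell} together with an additional call that reveals $f$ masked inside the measurement outcomes. The $\epsilon$-correctness follows by a union bound over the $O(n_Q)$ oracle calls.
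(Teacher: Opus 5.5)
Your construction is the paper's protocol. Alice locally prepares $n_Q$ EPR pairs, applies $C_0^\dagger$ to $Q$ and $C_1^\dagger C_0$ to the far half $E_2$, measures $E_2$ in the computational basis and broadcasts the result, and feeds each pair $(Q_j,E_{1,j})$ into one $f$-Bell oracle. Your analysis of the two branches is also the paper's: in the $f=0$ branch you read off the computational outcome of $Q$, and in the $f=1$ branch you teleport and then correct the Pauli frame using $C^\dagger P=P'C^\dagger$.

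\textbf{The gap.} As written the proof is not finished. You leave the final decoding open because you assume the players ``do not know $f$''. That premise is false. In a classically controlled NLQC (\cref{def:classicalNLQC}) the second-round operations are $\mathcal{W}^{L,x,y}$ and $\mathcal{W}^{R,x,y}$: Alice and Bob exchange their classical inputs during the single communication round, so both know $f(x,y)$ afterwards. The paper relies on this throughout, for example in the outcome relabeling in \cref{lemma:anglecharacterization} and the $f$-dependent output rules in \cref{lemma:backtofH} and \cref{lem:fBB84tofBell}. The missing step is simply to branch on $f$ in post-processing:
\begin{itemize}
\item if $f(x,y)=0$, output the $Q$-part of the oracle outcomes;
\item if $f(x,y)=1$, output the broadcast $E_2$ outcome XORed with the $X$-part of $P'=C^\dagger P C$, which both players compute from the Bell outcomes.
\end{itemize}
With this the argument is complete and coincides with the paper's.

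\textbf{Your proposed workarounds.} These are unnecessary. The first one, a single $f$-independent decoding rule, cannot work for non-trivial $C$. Take $C_0=I$, $C_1=H$, so $C=H$:
\begin{itemize}
\item In the $f=1$ branch the Bell outcomes $s$ are uniform and independent of the input. Correctness on both $C_1$-basis input states then forces the rule to be ``raw $E_2$ outcome XOR $x_s$'' for each $s$.
\item In the $f=0$ branch $E_2$ is left in $H\ket{e}$. Its outcome is then uniform and independent of both $s$ and the input, so that same rule outputs a uniform bit instead of the input's computational-basis outcome.
\end{itemize}
So branching on $f$ is genuinely needed, and the NLQC model provides it at no cost.
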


An $f$-controlled Clifford is a plausible QPV candidate for realistic implementation: the honest player needs only to implement classical computations, plus a small number of Clifford operations. 
Clifford operations are the easiest to perform in an error-corrected way. 
Thus the above theorem shows that a large class of plausible QPV candidates reduce to the simplest scheme, $f$-measure$(I,H)$. 

Since $f$-measure has some limitations (for instance subexponential attacks for all functions), we can ask if this can be remedied, at the expense of making the quantum operation more general. 
We give a partial answer in the negative with the following theorem. 

\begin{restatable}{theorem}{beyondClifford}\label{thm:beyondClifford}
    Let $U$ be any $n_Q$ qubit unitary of the form $U=C_1 DC_0$, where $D$ is a diagonal unitary and $C_0,C_1$ are Cliffords. Then, there is an $O(n_Q)$ oracle implication from $f$-measure$(I,H)$ to $f$-unitary$(I,U)$. 
\end{restatable}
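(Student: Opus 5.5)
The plan is to reduce everything to the $f$-controlled Clifford case, which the paper has already handled, plus an $f$-controlled diagonal gate. For the Clifford part we chain \cref{thm:fcliffordunitary} (from $f$-measure(Bell)) with \cref{lem:fBB84tofBell} (from $f$-measure$(I,H)$); this gives $O(n_Q)$ oracle implications from $f$-measure$(I,H)$ to $f$-unitary$(I,C)$ for any Clifford $C$.

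We then write
\[
U^{f}=C_1^{f} D^{f} C_0^{f},
\]
where superscript $f$ means ``applied iff $f(x,y)=1$''. Since a composition of oracle implications is an oracle implication, and the numbers of oracle calls add, it suffices to implement $f$-unitary$(I,D)$ for an arbitrary diagonal $D$ using $O(n_Q)$ calls to $f$-measure$(I,H)$. The $C_0^f$ and $C_1^f$ factors are supplied by the Clifford result, and the intermediate system is passed along between the sub-protocols.

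For the diagonal part I would use the route indicated in \cref{fig:reductiontable}: go through $f$-measure$(I,R_X(\theta))$, which \cref{thm:fmeasureEquivalence} equates with $f$-measure$(I,H)$ up to $O(1)$ calls.

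The key idea comes from measurement-based quantum computation. A diagonal unitary $D=\sum_z e^{i\phi(z)}\ketbra{z}{z}$ can be implemented by gate teleportation.
\begin{itemize}
\item Entangle the $n_Q$ input qubits with ancillas so that the phase information sits on a register that is then measured.
\item For one qubit, couple the input to an ancilla $\ket{+}$ by a CZ or CNOT. Measuring the ancilla in the $X$ basis implements the identity up to a Pauli correction. Measuring it in the rotated basis $R_X(\theta)$ (equivalently, $R_Z(\theta)$ followed by $X$) implements $R_Z(\pm\theta)$ up to a Pauli correction.
\item Choosing the basis according to $f(x,y)$ is exactly an instance of $f$-measure$(I,R_X(\theta))$, up to a fixed Clifford conjugation absorbed locally.
\end{itemize}

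For general $D$ on $n_Q$ qubits I would first reduce to a sequence of diagonal phase gates, each acting on a parity $z\cdot a$ of the inputs. A phase $e^{i\theta\, z\cdot a}$ can be teleported by computing the parity onto an ancilla with CNOTs (Clifford, done locally), applying the controlled-rotation gadget, and uncomputing.

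The main obstacle is the count. A general diagonal needs exponentially many parity phases, so I would instead aim to implement $D$ with the diagonal gadget applied to an $n_Q$-qubit register. Concretely, we teleport the whole register through ancillas, and the single non-Clifford step is an $f$-controlled measurement of $n_Q$ qubits in the basis $\{D^\dagger H^{\otimes n_Q}\ket{i}\}$ versus $\{H^{\otimes n_Q}\ket{i}\}$.

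The hard point is showing that this multi-qubit measurement reduces to $O(n_Q)$ single-qubit $f$-measure calls, and that the Pauli byproducts are handled. Two features make the byproducts tractable:
\begin{itemize}
\item $D$ commutes with $Z$ byproducts.
\item $X$ byproducts conjugate $D$ into another diagonal $D'$, which is the source of the adaptivity difficulty.
\end{itemize}
Since NLQC forbids adaptivity, I would handle this by letting the byproduct Paulis, known only as split classical data, be absorbed into the final $C_1^f$ stage. Alice and Bob apply the Pauli corrections after the single round, as in the Clifford case.

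Failing that, I would restrict to one qubit at a time and use the fact that $X$-byproducts on a single-qubit $R_Z(\theta)$ turn it into $R_Z(-\theta)$. Both outcomes of the $f$-measure$(I,R_X(\pm\theta))$ family are available at equal cost, since the sign can be chosen by a local Pauli conjugation before the call.

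Finally, $\epsilon$-errors add over the $O(n_Q)$ calls.
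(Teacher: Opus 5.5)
There are two genuine gaps.

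\textbf{The sequential composition is not allowed.} The step ``write $U^f=C_1^fD^fC_0^f$ and pass the intermediate system between sub-protocols'' is not available in this model. In an oracle reduction, all oracle calls are made in parallel inside the single round. The output of an $f$-unitary oracle only exists after the communication round, possibly split between Alice and Bob. So it cannot be fed into a second $f$-controlled oracle, nor can the $D$-gadget's couplings be applied to it. Nesting one reduction inside another is fine, since the inner pre-processing is still first-round; chaining $f$-controlled channels on the same register is not.

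The paper avoids this as follows. Alice applies $C_0$ and $C_1$ \emph{unconditionally} as local pre-processing, with the diagonal gadgets' couplings in between. The data register then goes through exactly one oracle, an $f$-unitary$((C_1C_0)^\dagger,I)$ obtained from the Clifford results, which undoes the Cliffords when $f(x,y)=0$. The gadget oracles act only on ancillas, in parallel. Because the ancilla measurements commute with everything later done to the data, their effect lands between $C_0$ and $C_1$, as required.

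\textbf{The diagonal part is never established.} You set aside the parity decomposition because of its $2^{n_Q}$ count. In its place you put an $f$-controlled $n_Q$-qubit measurement, and you explicitly leave open its reduction to single-qubit calls. The byproduct handling you sketch also fails. An $X$ byproduct before $D$ becomes $DXD^\dagger$ after it, which is not a Pauli and cannot be absorbed into the Clifford stage. Choosing the sign of $R_X(\pm\theta)$ ``before the call'' requires a byproduct that is only revealed after the round.

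The missing idea is a gadget that produces no $X$ byproducts at all. Couple a $\ket{+}$ ancilla to the qubits in $S$ by $\prod_{j\in S}CZ_{A_jB}$, and feed the \emph{ancilla}, not the data, into $f$-measure$(I,R_X(-\phi_S))$. By \eqref{eq:RXandRZ} and circuit identity 2, this applies $R_{Z,S}(\phi_S)^{f(x,y)}$ up to the byproduct $(\bigotimes_{j\in S}Z_j)^a$. Your single-qubit description is off here: after a $CZ$ to a $\ket{+}$ ancilla, an $X$-basis measurement of the ancilla measures the data, and the identity branch needs a computational-basis measurement instead.

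$Z$-type byproducts commute with every diagonal gate, so no adaptivity problem arises. Pushed through $C_1$, they become Paulis that are corrected in the second round. Together with \cref{lemma:Fourierdecomposition}, this gives one gadget per subset $S$. The paper simply pays for up to $2^{n_Q}$ gadgets, justified by $n_Q=O(1)$. It does not achieve a literally linear count either, so you need not chase one.
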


This result is surprising in that the unitary $D$ can be high $T$-depth, so that $U=C_1 DC_0$ is in some sense highly non-Clifford, but nonetheless implementing it can be oracle reduced to a small number of copies of $f$-measure$(I,H)$, which involves only (classically controlled) Clifford operations. 
We leave open whether all classically controlled $n_Q$-qubit unitaries can be $O(F(n_Q))$ reduced to $f$-measure for some function $F=F(n_Q)$ (not dependent on $n$, the classical input size), or if perhaps two layers of diagonal gates suffice to obtain a higher level of security in QPV schemes.

\vspace{0.2cm}
\noindent \textbf{Acknowledgements:} 
ChatGPT was used to write code that helped find earlier versions of the circuits used in \cref{lemma:anglesplitting}.
Research at Perimeter Institute is supported in part by the Government of Canada through the Department of Innovation, Science and Economic Development Canada and by the Province of Ontario through the Ministry of Colleges and Universities.
AB was supported by the ANR project PraQPV, grant number ANR-24-CE47-3023. FS was supported by the Dutch Ministry of Economic Affairs and Climate Policy (EZK), as part of the Quantum Delta NL program, and the project Divide and Quantum `D\&Q' NWA.1389.20.241 of the program `NWA-ORC', which is partly funded by the Dutch Research Council (NWO). PVL is supported by France 2030 under the French National Research Agency award number ANR-22-PETQ-0007.
SH is supported by the program QuanTEdu-France n°ANR-22-CMAS-0001 France 2030 and by the MSCA Cofund QuanG (Grant Number: 101081458), funded by the European Union \scalebox{0.05}{\includegraphics{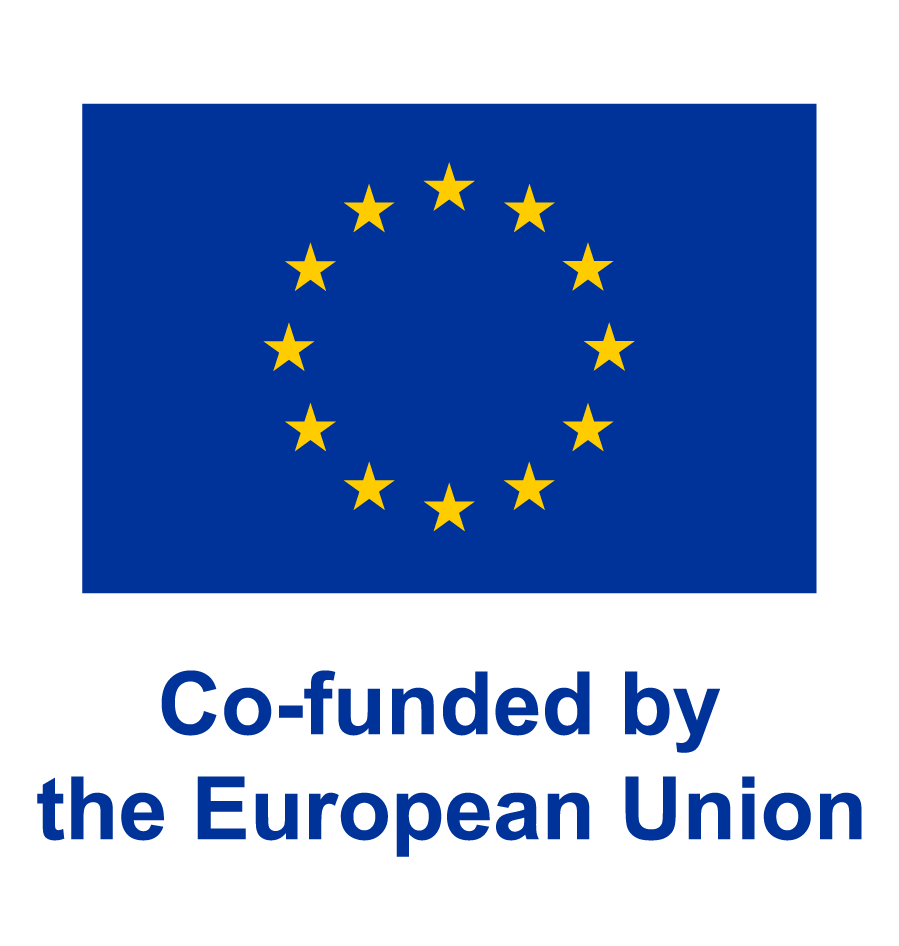}}.

\section{Preliminaries}\label{sec:preliminaries}

\subsection{Tools from quantum information theory}\label{sec:QItools}

We let $d_A$ denote the dimension of Hilbert space $\mathcal{H}_A$, and $n_A=\log d_A$. 
We always take $\log$ to denote the base 2 logarithm. 
We label the maximally entangled state of two qubits by $\ket{\Psi_{00}}$, and the maximally entangled state of any two $d$ dimensional systems by $\ket{\Psi^+}$. 
We quantify the distance between quantum states with the one-norm distance, 
\begin{align}
    \Vert\rho-\sigma\Vert_1 = \tr|\rho-\sigma|.
\end{align}
Note that $D(\rho,\sigma)=\frac{1}{2}\Vert\rho-\sigma\Vert_1$ is the trace distance. 

We label the identity channel on Hilbert space $A$ by $\mathcal{I}_A$. 
We quantify the distance between quantum channels using the diamond norm distance. 
\begin{definition} Let $\mathbfcal{N}_{B\rightarrow C}, \mathbfcal{M}_{B\rightarrow C}: \mathcal{L}(\mathcal{H}_B)\rightarrow \mathcal{L}(\mathcal{H}_C)$ be quantum channels. 
The \textbf{diamond norm distance} is defined by 
\begin{align}
    \Vert\mathbfcal{N}_{B\rightarrow C}-\mathbfcal{M}_{B\rightarrow C}\Vert_\diamond = \sup_{d} \max_{\Psi_{A_dB}}\Vert\mathbfcal{N}_{B\rightarrow C}(\Psi_{A_dB}) - \mathbfcal{M}_{B\rightarrow C}(\Psi_{A_dB})\Vert_1
\end{align}
where $\mathcal{H}_{A_d}$ is a $d$ dimensional Hilbert space. 
\end{definition}

It will be useful to have the following statement regarding the diamond norm.

\begin{restatable}{lemma}{densitytoDN}\label{lemma:densitytoDN}
    Let $\mathcal{M}$, $\mathcal{N}$ be quantum channels acting on a $d$ dimensional Hilbert space. Suppose that for all density matrices $\rho$, 
    \begin{align}
        \Vert \mathcal{N}(\rho)-\mathcal{M}(\rho)\Vert_1\leq \epsilon.
    \end{align}
    Then we have that
    \begin{align}
        \Vert \mathcal{N}-\mathcal{M}\Vert_\diamond \leq 2\epsilon d.
    \end{align}
\end{restatable}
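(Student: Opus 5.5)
Let $\Delta=\mathcal{N}-\mathcal{M}$; it is linear and Hermiticity-preserving. The plan is to pass from inputs that are density matrices on the system alone to entangled inputs by decomposing operators in a product basis. Two facts drive this. First, in the definition of the diamond norm we may restrict to pure states $\Psi_{A_dB}$ with $d_{A}=d$, by convexity of the trace norm and the Schmidt decomposition. Second, the hypothesis extends by linearity from density matrices to arbitrary operators, at the price of a constant factor.

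\emph{Step 1 (extending the hypothesis).} For an arbitrary operator $X$ on $\mathcal{H}$, write $X=H_1+iH_2$ with $H_1,H_2$ Hermitian and $\Vert H_j\Vert_1\le\Vert X\Vert_1$. Decompose each $H_j=P_j-Q_j$ into positive and negative parts, so that $\Vert P_j\Vert_1+\Vert Q_j\Vert_1=\Vert H_j\Vert_1$. Since $P_j$ and $Q_j$ are nonnegative multiples of density matrices, the hypothesis together with linearity of $\Delta$ gives $\Vert\Delta(H_j)\Vert_1\le\epsilon\Vert H_j\Vert_1$. Hence $\Vert\Delta(X)\Vert_1\le 2\epsilon\Vert X\Vert_1$.

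\emph{Step 2 (expanding the entangled input).} Take a pure state $\ket{\psi}=\sum_i\sqrt{p_i}\ket{i}_A\ket{\phi_i}_B$ with $d$ Schmidt terms. Then
\begin{align}
(\mathrm{id}\otimes\Delta)(\ketbra{\psi}{\psi})=\sum_{i,j}\sqrt{p_ip_j}\,\ketbra{i}{j}\otimes\Delta(\ketbra{\phi_i}{\phi_j}).
\end{align}
By the triangle inequality and $\Vert\ketbra{i}{j}\otimes Y\Vert_1=\Vert Y\Vert_1$, this has trace norm at most $\sum_{i,j}\sqrt{p_ip_j}\,\Vert\Delta(\ketbra{\phi_i}{\phi_j})\Vert_1$. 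Applying Step 1 with $\Vert\ketbra{\phi_i}{\phi_j}\Vert_1=1$, each term is at most $2\epsilon$. Finally, Cauchy--Schwarz gives $\sum_{i,j}\sqrt{p_ip_j}=\big(\sum_i\sqrt{p_i}\big)^2\le d$. Together these yield the bound $2\epsilon d$.

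\emph{Main obstacle.} The delicate step is the constant in Step 1. A careless split into four positive pieces costs a factor of $4$; the Hermitian-part decomposition avoids this, since each Hermitian part contributes at most $\epsilon\Vert X\Vert_1$. I would also handle the diagonal terms $i=j$ directly, where the bound is only $\epsilon$. Note that the Cauchy--Schwarz factor $d$ already absorbs the slack, so the stated $2\epsilon d$ follows without further optimization.
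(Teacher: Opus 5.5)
Your proposal is correct and is essentially the paper's proof: Schmidt-decompose the optimal input, expand $(\mathrm{id}\otimes(\mathcal{N}-\mathcal{M}))$ termwise, bound each $\Vert(\mathcal{N}-\mathcal{M})(\ketbra{s}{t})\Vert_1$ by $2\epsilon$, and finish with $(\sum_s\sqrt{\lambda_s})^2\le d$. The only difference is cosmetic: the paper writes $\ketbra{s}{t}$ explicitly as $\tfrac12$ times a combination, with coefficients $\pm1,\pm i$, of four density matrices, which is exactly your Hermitian-part plus Jordan decomposition specialized to this rank-one operator (so that four-piece split already gives the factor $2$, not $4$).
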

This is proven in \cref{sec:lemmaproofs}.

Next, we give some circuit identities \cite{Zhou_2000} that will prove useful. 
\begin{itemize}
    \item \textbf{Identity 1:} Moving $Z$-rotations:
    \\
    \begin{center}
    \begin{tikzpicture}[on grid,node distance=10mm]
          \node at (0,0)      (i1) {$\ket{\psi}$};
          \node[below=of i1]  (i2) {$\ket{0}$};
          \path (i1)  ++(3cm,0)     node (o1) {}
                      ++(0,-10mm)   node (o2) {};
          \path   
                  (i1)    ++(1cm,0) node[mycontrol]     (c1) {}
                          ++(0,-1cm) node[mytarget]     (t2) {$\bigoplus$}
                  (c1) edge (t2)
                  (i1)    ++(2cm,0) node[mySQG]         (g1)  {$R_z(\theta)$}
                  ;
      \begin{pgfonlayer}{background}
          \path       (i1) edge (o1)
                      (i2) edge (o2);
      \end{pgfonlayer}
        \path (o1) ++(1cm,-0.5cm) node (e1) {$=$};
          \path (o1) ++(2cm,0cm)  node  (Ri1)   {$\ket{\psi}$};
          \node[below=of Ri1]           (Ri2)   {$\ket{0}$};
          \path (Ri1) ++(3cm,0)      node     (Ro1) {}
                      ++(0,-10mm)    node     (Ro2) {};
          \path   
                  (Ri1)   ++(1cm,0) node[mycontrol]     (Rc1) {}
                          ++(0,-1cm) node[mytarget]     (Rt2) {$\bigoplus$}
                  (Rc1) edge (Rt2)
                  (Ri2)    ++(2cm,0) node[mySQG]         (Rg1)  {$R_z(\theta)$}
                  ;
      \begin{pgfonlayer}{background}
          \path       (Ri1) edge (Ro1)
                      (Ri2) edge (Ro2);
      \end{pgfonlayer}
\end{tikzpicture}
    \end{center}

    \item \textbf{Identity 2:} Measurements and controls: \\
    \begin{center}
    \begin{tikzpicture}[on grid,node distance=10mm]
          \node at (0,0)      (i1) {};
          \node[below=of i1]  (i2) {};
          \path (i1)  ++(2cm,0)     node (o1) {}
                      ++(0,-10mm)   node[mymeasureselect] (m2) {$a$};
          \path   
                  (i1)    ++(1cm,0)  node[mySQG]       (t1) {$U$}
                          ++(0,-1cm) node[mycontrol]     (c2) {}
                  (t1) edge (c2)
                  ;
      \begin{scope}[on background layer]
          \path       (i1) edge (o1)
                      (i2) edge (m2);
      \end{scope}
        \path (o1) ++(1cm,-0.5cm) node (e1) {$=$};
          \path (o1) ++(2cm,0cm)  node  (Ri1)   {};
          \node[below=of Ri1]           (Ri2)   {};
          \path (Ri1) ++(2cm,0)      node     (Ro1) {}
                      ++(0,-1cm)    node[mymeasureselect]     (Rm2) {$a$};
          \path   
                  (Ri1)    ++(1cm,0) node[mySQG]         (Rg1)  {$U^a$}
                  ;
      \begin{scope}[on background layer]
          \path       (Ri1) edge (Ro1)
                      (Ri2) edge (Rm2);
      \end{scope}
\end{tikzpicture}      
    \end{center}

Note that the symbol at the end of the second wire represents a post-selection onto $\ket{a}$.
\item \textbf{Identity 3:} We also highlight a useful special case of identity 2 above,\\
\begin{center}
\begin{tikzpicture}[on grid,node distance=10mm]
          \node at (0,0)      (i1) {};
          \node[below=of i1]  (i2) {};
          \path (i1)  ++(3cm,0)     node (o1) {}
                      ++(0,-10mm)   node[mymeasureselect] (m2) {$a$};
          \path   
                (i1)    ++(1cm,0)  node[mycontrol]       (c1) {}
                        ++(0,-1cm) node[mytarget]        (t2) {$\bigoplus$}
                (c1) edge (t2)
                (i2)    ++(2cm,0) node[mySQG] (g2) {$H$}
                ;
      \begin{scope}[on background layer]
          \path       (i1) edge (o1)
                      (i2) edge (m2);
      \end{scope}
        \path (o1) ++(1cm,-0.5cm) node (e1) {$=$};
          \path (o1) ++(2cm,0cm)  node  (Ri1)   {};
          \node[below=of Ri1]           (Ri2)   {};
          \path (Ri1) ++(3cm,0)      node     (Ro1) {}
                      ++(0,-1cm)    node[mymeasureselect]     (Rm2) {$a$};
          \path   
                (Ri1)    ++(2cm,0)  node[mySQG]       (Rt1) {$Z$}
                        ++(0,-1cm) node[mycontrol]        (Rc2) {}
                (Rt1) edge (Rc2)
                (Ri2)    ++(1cm,0) node[mySQG] (Rg2) {$H$}
                ;
      \begin{scope}[on background layer]
          \path       (Ri1) edge (Ro1)
                      (Ri2) edge (Rm2);
      \end{scope}
        \path (Ro1) ++(1cm,-0.5cm) node (e1) {$=$};
          \path (Ro1) ++(2cm,0cm)  node  (R2i1)   {};
          \node[below=of R2i1]           (R2i2)   {};
          \path (R2i1) ++(2cm,0)      node     (R2o1) {}
                      ++(0,-1cm)    node[mymeasureselect]     (R2m2) {$a$};
          \path   
                (R2i1)    ++(1cm,0) node[mySQG]         (R2g1)  {$Z^a$}
                (R2i2)    ++(1cm,0) node[mySQG] (R2g2) {$H$}
                ;
      \begin{scope}[on background layer]
          \path       (R2i1) edge (R2o1)
                      (R2i2) edge (R2m2);
      \end{scope}
\end{tikzpicture}
\end{center}

This follows from the previous identity by using that $H_B\operatorname{CNOT}_{A\rightarrow B}H_B=CZ_{AB} = CZ_{BA}$. 
\end{itemize}

We will make use of Hoeffding's inequality \cite{hoeffding1963probability}, which we state below. 
\begin{theorem}\label{thm:hoeffding} Let $X_1, \dots ,X_n$ be independent random variables with ranges $a_i\leq X_i \leq b_i$. 
Consider the sum
\begin{align}
    S_n= X_1 + \dots  +X_n.
\end{align}
Then we have that
\begin{align}
    \text{Pr}\left[|S_n-\mathbb{E}[S_n]|\geq t \right] \leq 2\exp\left(- \frac{2t^2}{\sum_{i}(b_i-a_i)^2}\right).
\end{align}
\end{theorem}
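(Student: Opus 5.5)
The plan is the standard Chernoff-bound argument: bound the moment generating function of each centred summand, multiply the bounds using independence, and optimise the free parameter. Write $Y_i = X_i - \mathbb{E}[X_i]$, so that $S_n - \mathbb{E}[S_n] = \sum_i Y_i$. Each $Y_i$ has mean zero and takes values in the interval $[a_i - \mathbb{E}X_i,\, b_i - \mathbb{E}X_i]$, whose length is still $b_i - a_i$. I would first prove the upper-tail bound
\begin{align}
\text{Pr}\left[S_n - \mathbb{E}[S_n] \geq t\right] \leq \exp\left(-\frac{2t^2}{\sum_i (b_i-a_i)^2}\right).
\end{align}
Applying the same bound to the variables $-X_i$, which lie in $[-b_i,-a_i]$ with the same range lengths, gives the lower tail. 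A union bound over the two tails then produces the factor $2$ in the statement.

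For the upper tail, fix $\lambda > 0$. Markov's inequality applied to the nonnegative variable $e^{\lambda \sum_i Y_i}$ gives
\begin{align}
\text{Pr}\left[\textstyle\sum_i Y_i \geq t\right] \leq e^{-\lambda t}\, \mathbb{E}\left[e^{\lambda \sum_i Y_i}\right] = e^{-\lambda t} \prod_i \mathbb{E}\left[e^{\lambda Y_i}\right],
\end{align}
where the factorisation uses independence of the $X_i$. The key ingredient is Hoeffding's lemma: if $Y$ has mean zero and $a \le Y \le b$, then $\mathbb{E}[e^{\lambda Y}] \le \exp\left(\lambda^2 (b-a)^2/8\right)$. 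Granting this, the right-hand side above is at most $\exp\left(-\lambda t + \lambda^2 \sum_i (b_i-a_i)^2/8\right)$. Minimising over $\lambda$ gives the choice $\lambda = 4t / \sum_i (b_i - a_i)^2$, and substituting it yields exactly the claimed exponent $-2t^2/\sum_i (b_i-a_i)^2$.

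I expect Hoeffding's lemma to be the main obstacle, and I would prove it as follows.
\begin{itemize}
\item \textbf{Reduce to two points.} By convexity of $y \mapsto e^{\lambda y}$ on $[a,b]$, we have $e^{\lambda y} \le \frac{b-y}{b-a} e^{\lambda a} + \frac{y-a}{b-a} e^{\lambda b}$. Taking expectations and using $\mathbb{E}Y = 0$ gives $\mathbb{E}[e^{\lambda Y}] \le (1-p)e^{\lambda a} + p e^{\lambda b}$, where $p = -a/(b-a) \in [0,1]$.
\item \textbf{Write the bound as $e^{L(h)}$.} Set $h = \lambda(b-a)$. Then the right-hand side equals $e^{L(h)}$ with $L(h) = -hp + \log(1 - p + p e^{h})$.
\item \textbf{Taylor-expand $L$.} Direct computation gives $L(0) = 0$ and $L'(0) = 0$. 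The second derivative is $L''(h) = q(1-q)$, where $q = p e^h / (1 - p + p e^h) \in [0,1]$, so $L''(h) \le 1/4$.
\item \textbf{Conclude.} Taylor's theorem with Lagrange remainder then gives $L(h) \le h^2/8 = \lambda^2 (b-a)^2/8$, which is the lemma.
\end{itemize}
The only delicate points are the exact computation of $L'$ and $L''$ and the elementary bound $q(1-q) \le 1/4$. The rest of the argument is routine.
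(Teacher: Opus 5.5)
The paper does not prove this statement; it quotes Hoeffding's inequality as a standard tool and cites Hoeffding's 1963 paper. Your argument is correct and is essentially Hoeffding's own proof: a Chernoff bound, factorisation by independence, and Hoeffding's lemma obtained through the convexity reduction and the bound $L''(h)=q(1-q)\le 1/4$. The one thing to state explicitly is that $t>0$, since your choice $\lambda = 4t/\sum_i (b_i-a_i)^2$ needs this to be positive, and the statement implicitly assumes it (for $t<0$ the inequality as written is false).
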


\subsection{Non-local quantum computation}\label{sec:NLQC}

We begin by recalling a definition from \cite{bluhm2025complexity}, which establishes what an NLQC is trying to accomplish. 
\begin{definition}
    A \textbf{$2\rightarrow 2$ quantum task} is defined by a pair of input systems $A$, $B$, a pair of output systems $A'$, $B'$, and a set of input/output state pairs $\mathcal{S}=\{ (\rho_{RAB},\sigma_{RA'B'})\}$. We require that there exists a quantum channel $\mathbfcal{M}_{AB \to A'B'}$ such that
    \begin{align}
        \forall \,\, (\rho_{RAB},\sigma_{RA'B'})\in \mathcal{S}, \,\,\,\left(\rho_{RAB},\mathbfcal{M}_{AB\rightarrow A'B'}(\rho_{RAB})\right)\in \mathcal{S}.
    \end{align}
\end{definition}
Note that this definition is more general than requiring a certain quantum channel be implemented. 
The above instead requires just that a channel be implemented that satisfies a given condition, as expressed by the list of allowed input/output pairs. 
The definition does not require a unique output density matrix for a given input density matrix, since input states can appear more than once in the list of allowed input/output pairs. 

We further define \emph{families} of $2\rightarrow 2$ quantum tasks, which are collections of $2\rightarrow 2$ tasks parameterized by a natural number $n$.
The parameter $n$ will correspond to an input size with the exact relation specified in the definition of each family of tasks.  
We will label families of tasks with capital letters $F$, $G$, etc., where these denote sets of tasks, so that $F=\{F_n\}_n$, $G=\{G_n\}_n$, etc. 
As an example, the $f$-route example mentioned in the introduction with a choice of Boolean function family $\{f_n\}_n$ defines a family of $2\rightarrow 2$ tasks, with each member of the family labelled by an element of $\{f_n\}_n$.  

\begin{definition}
    A \textbf{non-local quantum computation (NLQC)} is a channel in the form
    \begin{align}
        \mathcal{N}_{AB\rightarrow A'B'} =(\mathcal{W}^L_{K_aM_a\rightarrow A'}\otimes\mathcal{W}^R_{K_bM_b\rightarrow B'})\circ (\mathcal{V}^L_{AL\rightarrow K_{a}M_b}\otimes \mathcal{V}^R_{RB\rightarrow M_aK_b})\circ (I_{AB}\otimes\mathcal{S}_{\emptyset\rightarrow LR})
    \end{align}
    We refer to $\mathcal{S}_{\emptyset\rightarrow LR}$ as the resource preparation channel and its output, which we label $\Psi_{LR}$, as the resource state. 
    The $\mathcal{V}^L, \mathcal{V}^R$ channels are called the (left and right) first round operations, and the $\mathcal{W}^L, \mathcal{W}^R$ channels are called the (left and right) second round operations.
    
    We say an NLQC is an $\epsilon$-correct implementation of a $2\rightarrow 2$ task $F_n$ if the channel $\mathcal{N}_{AB\rightarrow A'B'}$ implemented as an NLQC is $\epsilon$-close in diamond norm to at least one channel $\mathcal{M}_{AB\rightarrow A'B'}$ relating the input and output states in the definition of the $2\rightarrow 2$ task.
\end{definition}

We will often be interested in settings where the $A$ and $B$ input consist of mostly classical data. 
In this case, it is useful to adopt some further language. 
We let Alice's classical input be $x\in X=\{0,1\}^n$ and Bob's classical input be $y\in Y=\{0,1\}^n$. 
Meanwhile, Alice's quantum input is labelled $Q$ and Bob's quantum input $Q'$; their quantum outputs (if any) are labelled $O$ and $O'$ respectively.
When considering the classical inputs explicitly, we say the task is a classically controlled $2\rightarrow 2$ task.
Every classically controlled $2\rightarrow 2$ task then amounts to a family of channels $\{\mathcal{M}^{x,y}\}_{x,y}$, where the players should implement $\mathcal{M}^{x,y}$ when given inputs $x,y$. 
Meanwhile, the players' actions in the form of an NLQC have the following form. 
\begin{definition}
    \label{def:classicalNLQC}
    A \textbf{classically controlled NLQC} consists of a choice of four sets of quantum channels, $\{\mathcal{V}^{L,x}\}_x, \{\mathcal{V}^{R,y}\}_y, \{\mathcal{W}^{L,x,y}\}_{x,y}, \{\mathcal{W}^{R,x,y}\}_{x,y}$, which define a family of NLQCs according to
    \begin{align}
        \mathcal{N}^{x,y}_{QQ'\rightarrow OO'} =(\mathcal{W}^{L,x,y}_{K_aM_a\rightarrow O}\otimes\mathcal{W}^{R,x,y}_{K_bM_b\rightarrow O'})\circ (\mathcal{V}^{L,x}_{QL\rightarrow K_{a}M_b}\otimes \mathcal{V}^{R,y}_{Q'R\rightarrow M_aK_b})\circ (I_{QQ'}\otimes\mathcal{S}_{\emptyset\rightarrow LR}).
    \end{align}
    We say that a classically controlled $2\rightarrow 2$ NLQC completes a $2\rightarrow 2$ classically controlled task $\epsilon$-correctly if, for all inputs $(x,y)$, we have that $\Vert \mathcal{N}^{x,y}-\mathcal{M}^{x,y} \Vert_\diamond \leq \epsilon$. 
\end{definition}

Usually, we will be interested in classically controlled $2\rightarrow 2$ tasks where the goal is to implement one of just two possible channels, call them $\mathcal{M}^{0}$ and $\mathcal{M}^1$. 
The map from the inputs $(x,y)$ to the choice of channel then is specified by a Boolean function, $f:\{0,1\}^n\times \{0,1\}^n\rightarrow \{0,1\}$. 
This leads to the next definition. 

\begin{definition}
    A \textbf{$f$-controlled} $2 \rightarrow 2$ task is specified by a Boolean function $f:\{0,1\}^n\times \{0,1\}^n\rightarrow \{0,1\}$ and two input/output sets of state pairs $\mathcal{S}^0=\{ (\rho_{RQQ'}^0,\sigma_{ROO'}^0)\}$, $\mathcal{S}^1=\{ (\rho_{RQQ'}^1,\sigma_{ROO'}^1)\}$.  
    We say that a classically controlled NLQC $\{\mathcal{N}^{x,y}\}_{x,y}$ implements the task if 
    \begin{align}
        \forall (x,y) \in f^{-1}(0), &\quad \sigma_{ROO'}^0=\mathcal{N}^{x,y}(\rho^0_{RQQ'}), \nonumber \\
        \forall (x,y) \in f^{-1}(1), &\quad \sigma_{ROO'}^1=\mathcal{N}^{x,y}(\rho^1_{RQQ'}).
    \end{align}
\end{definition}
In words, this definition is saying that if the classical input satisfies $f(x,y)=0$, we require condition $\mathcal{S}^0$ on the quantum inputs and outputs, while if $f(x,y)=1$ we require condition $\mathcal{S}^1$.

\subsection{NLQC reductions}\label{sec:oraclereductions}

In this section we give our definitions of reductions between $2\rightarrow 2$ tasks. 
We introduce two notions of reduction, beginning with a resource state reduction. 

\begin{definition}\label{def:reduction}
    Let $F=\{F_n\}_n$ and $G=\{G_n\}_n$ be families of $2\rightarrow 2$ tasks, and $\alpha$, $\beta$, $\delta$ all be functions of $(n,\lambda,\epsilon)$, with $\delta\rightarrow 0$ as $\lambda\rightarrow \infty, \epsilon\rightarrow 0$.
    Then we say there is an $(\alpha, \beta,\delta)$-\textbf{reduction} from $G$ to $F$ if, for any resource state $\ket{\Psi^{n,\epsilon}}_{LR}$ which can be used to implement $F_n$ at least $\epsilon$-correctly as an NLQC, it is possible to implement $G_n$ $\delta$-correctly using $\alpha$ copies of $\ket{\Psi^{n,\epsilon}}_{LR}$ along with $\beta$ additional qubits of shared resource state.
\end{definition}
When there is a reduction from family $F$ to family $G$, we will also say there is an \textbf{implication} from $G$ to $F$, and write $G \Rightarrow F$.
When $\alpha, \beta$ are both $O(g(n))$, we say we have a \emph{$O(g(n))$ reduction}. 
In practice, we will construct $O(1)$ reductions, by which we mean that $\alpha,\beta$ have no $n$ dependence. 
Note that $\alpha,\beta$ may still depend on $\lambda,\epsilon$. 
The parameter $\lambda$ should be interpreted as parameterizing a family of protocols, which implement $F$ well as we increase $\lambda$. 
For example, when using port-teleportation as a step in a reduction, the port-teleportation is only approximate but becomes exact as the number of ports $N\rightarrow \infty$. 
In this case the parameter $\lambda$ could be taken to be $N$, so that the reduction becomes exact as $\lambda \rightarrow \infty$. 

We should comment on the requirement that $\delta\rightarrow 0$ as $\lambda\rightarrow \infty, \epsilon\rightarrow 0$, which appears in both notions of reduction. 
This requirement ensures that our notion of reduction is non-trivial. 
For instance, consider $f$-routing: without use of any entanglement, it is always possible to bring the input quantum system to the correct side with probability $1/2$, and hence trivially achieve some $\epsilon=\epsilon_0$ correctness parameter. 
Thus, without the $\delta\rightarrow 0$ requirement, any NLQC implies $f$-routing under an $(\alpha,\beta,\delta=\epsilon_0)$ reduction. 
Our definition excludes trivial constructions like this from being considered as reductions. 

We also comment on the requirement that $\delta\rightarrow 0$ in particular in the double limit, $\lambda\rightarrow \infty$ and $\epsilon\rightarrow 0$. 
This is requiring that as the implementations of $F$ become perfect, and we are allowed to increase how many of them we use ($\alpha$ can grow with $\lambda$) or how much additional resources we use ($\beta$ can grow with $\lambda$), the implementation of $G$ becomes perfect. 
In the work \cite{bluhm2025complexity}, there was no parameter $\lambda$ included in the definition of a reduction. 
We have introduced this parameter because this work introduces new kinds of reduction (for instance reductions using port-teleportation), which do not have $\delta\rightarrow 0$ as $\epsilon\rightarrow 0$ alone, but where the resources used in the reduction can be increased in some way to obtain $\delta\rightarrow 0$. 
We have made this more general definition to capture these interesting constructions. 

The notion of reduction given as \cref{def:reduction} focuses on the key resource we want to study in NLQC, which is the shared entangled state. 
In practice, the reductions we find will also satisfy a stricter notion, which we give next.

\begin{definition}\label{def:oraclereduction}
    Let $F=\{F_n\}_n$ and $G=\{G_n\}_n$ be families of $2\rightarrow 2$ tasks, and $\alpha$, $\beta$, $\delta$ all be functions of $(n,\lambda,\epsilon)$, with $\delta\rightarrow 0$ as $\lambda\rightarrow \infty, \epsilon\rightarrow 0$.
    Then we say that there is an $(\alpha,\beta,\delta)$-\textbf{oracle reduction} from $G$ to $F$ if $G$ can be implemented $\delta$-correctly by using $\alpha$ parallel implementations of $F$ along with $\beta$ additional qubits of resource system and communication. 
\end{definition}

\begin{figure}
    \centering
    \begin{tikzpicture}[scale=0.5]

    \draw[thick, fill=gray,opacity=0.5] (-3,-1.5) -- (3,-1.5) -- (3,1.5) -- (-3,1.5) -- (-3,-1.5);
    
    \draw[thick] (-5,-5) -- (-5,-3) -- (-3,-3) -- (-3,-5) -- (-5,-5);
    
    \draw[thick] (5,-5) -- (5,-3) -- (3,-3) -- (3,-5) -- (5,-5);
    
    \draw[thick] (5,5) -- (5,3) -- (3,3) -- (3,5) -- (5,5);
    
    \draw[thick] (-5,5) -- (-5,3) -- (-3,3) -- (-3,5) -- (-5,5);
    
    \draw[thick,mid arrow,dashed] (-4.5,-3) -- (-4.5,3);
    
    \draw[thick,mid arrow,dashed] (4.5,-3) -- (4.5,3);
    
    \draw[thick,mid arrow, dashed] (-4,-3) -- (-4,-1.5) to [out=90,in=-90] (4,3);
    
    \draw[thick,mid arrow, dashed] (4,-3) -- (4,-1.5) to [out=90,in=-90] (-4,3);

    \draw[thick,gray] (-3.5,-3) -- (-2,-1.5);
    \draw[thick,gray] (3.5,-3) -- (2,-1.5);
    \draw[thick,gray] (3.5,3) -- (2,1.5);
    \draw[thick,gray] (-3.5,3) -- (-2,1.5);
    
    \draw[thick] (-3.5,-6) -- (3.5,-6) -- (0,-8) -- (-3.5,-6);
    \draw[thick] (-3.25,-6) -- (-3.25,-5);
    \draw[thick] (3.25,-6) -- (3.25,-5);
    \node at (0,-7) {$\Phi$};

    \node at (0,-0.25) {$F_n^{\otimes \alpha(n)}$};
    
    \draw[thick] (-4.5,-6) -- (-4.5,-5);
    \draw[thick] (4.5,-6) -- (4.5,-5);
    
    \draw[thick] (4.5,5) -- (4.5,6);
    \draw[thick] (-4.5,5) -- (-4.5,6);
    
    \end{tikzpicture}
    \caption{An oracle reduction from an NLQC $G$ to $F$. The protocol for $G_n$ uses $\alpha$ instances of $F_n$, plus a $\beta$ qubit resource system $\Phi$.}
    \label{fig:oraclereduction}
\end{figure}

The structure of an implementation of $G$ using $F$ as an oracle is shown in \cref{fig:oraclereduction}. 
We will call the local operations before and after use of the oracles \emph{pre-processing} and \emph{post-processing} operations, respectively.

A convenient fact about oracle reductions is that they have nice error-propagation properties. 
This follows from the next remark. 
\begin{remark}\label{remark:diamondadditive}
    Consider two sets of quantum channels $\{\mathcal{N}^i\}$ and $\{\mathcal{M}^i\}$, with the property that $\Vert \mathcal{N}^i-\mathcal{M}^i \Vert_\diamond \leq \epsilon_i$ for $\epsilon_i\in[0,1]$. 
    Then 
    \begin{align}
        \Vert \mathcal{N}^k\circ \dots \circ \mathcal{N}^1 - \mathcal{M}^k\circ \dots \circ \mathcal{M}^1\Vert_\diamond \leq \sum_i \epsilon_i.
    \end{align}
\end{remark}
This remark follows by applying the triangle inequality to the diamond norm. 
A consequence is that if an implementation of $G_n$ using $\alpha$ copies of a perfect $F_n$ oracle works exactly, then an $\epsilon$-correct implementation of $F_n$ gives an $\alpha \epsilon$-correct implementation of $G_n$ immediately by the above. 
We use this repeatedly in the rest of the article.

In this paper, we are primarily interested in how the difficulty of $f$-controlled $2\rightarrow 2$ tasks relates to the size of the classical input and choice of Boolean function $f$. 
We show in appendix \ref{sec:PTtrick} that this difficulty does not strongly depend on the configuration of the input quantum systems for the task in the following sense.

\begin{remark}\label{remark:redistributequantuminputs}
Consider an $f$-controlled $2\rightarrow 2$ task with quantum input $Q$ on the left and quantum input $Q'$ on the right. 
Consider a new task with the same inputs and outputs, but which has quantum inputs $A$ on the left and $B$ on the right, where $\mathcal{H}_A\otimes \mathcal{H}_B=\mathcal{H}_Q\otimes \mathcal{H}_{Q'}$. 
Then there is always an $(\alpha,\beta,\delta)$-oracle equivalence between these tasks, where $\alpha,\beta$ are of order $2^{O(m)}/\delta^2$ where $m=\max\{n_Q,n_{Q'}\}$.
\end{remark}
Note that the reduction used in this remark is an example of one where $\delta\rightarrow 0$ as $\lambda\rightarrow \infty$, but not as $\epsilon\rightarrow 0$ alone.

A consequence of this reduction is that when the quantum inputs are $O(1)$ size (which is the case we are usually interested in), where the quantum inputs are given does not significantly affect the hardness of the task. 
For this reason, we typically assume for simplicity that the quantum inputs are all given on the left.

\section{Some \texorpdfstring{$2\rightarrow 2$}{TEXT} tasks}

First we define $f$-route, introduced in \cite{kent2011quantum, buhrman2013garden}. 
$f$-routing is a special case of the general $f$-controlled $2\rightarrow 2$ task, where we choose the constraints to be such that the map $\mathcal{M}^{x,y}_{Q\rightarrow OO'}$ satisfies $\tr_{O'}\circ \mathcal{M}^{x,y}_{Q\rightarrow OO'}=\mathcal{I}_{Q\rightarrow O}$ if $f(x,y)=0$, and the condition $\tr_{O}\circ \mathcal{M}^{x,y}_{Q\rightarrow OO'}=\mathcal{I}_{Q\rightarrow O'}$ if $f(x,y)=1$.
We define this more carefully next. 

\begin{definition}\label{def:frouting}
    A \textbf{$f$-route} task is defined by a choice of Boolean function $f:\{ 0,1\}^{2n}\rightarrow \{0,1\}$, and a $d_Q$ dimensional Hilbert space $\mathcal{H}_Q$.
    Inputs $x\in \{0,1\}^{n}$ and system $Q$ are given to Alice, and input $y\in \{0,1\}^{n}$ is given to Bob.
    The goal is to bring $Q$ to Alice if $f(x,y)=0$, and to Bob if $f(x,y)=1$. 
    We say the protocol is $\epsilon$-correct if it implements a family of channels $\mathcal{M}^{x,y}_{Q\rightarrow OO'}$ satisfying
    \begin{align}
        \forall (x,y)\in f^{-1}(0), \quad \Vert \tr_{O'}\circ \mathcal{M}^{x,y}_{Q\rightarrow OO'}-\mathcal{I}_{Q\rightarrow O}\Vert_\diamond\leq \epsilon, \nonumber \\
        \forall (x,y)\in f^{-1}(1), \quad \Vert \tr_{O}\circ \mathcal{M}^{x,y}_{Q\rightarrow OO'}-\mathcal{I}_{Q\rightarrow O'}\Vert_\diamond \leq \epsilon.
    \end{align}
    In words, Alice can recover the input to $Q$ if $f(x,y)=0$ and Bob can recover $Q$ if $f(x,y)=1$. 
\end{definition}

We introduce another important special case of an $f$-controlled $2\rightarrow 2$ task next. 
First though, we introduce the following notion. 
\begin{definition}
    Let $M=\{\Lambda^i\}_{i}$ be a POVM on $QQ'$. We define the \textbf{measure-and-copy channel} as the channel $\mathcal{C}_{QQ'\rightarrow OO'}$
    \begin{equation}
        \mathcal{C}_M(\rho):=\sum_{i}\tr (\rho \Lambda^i) \ketbra{i}{i}_{O}\otimes \ketbra{i}{i}_{O'}
    \end{equation}
    that measures the input state $\rho$ with the POVM $\{\Lambda^i\}_{i}$ and then copies the classical outcome to two output systems $OO'$.
\end{definition}
We will be interested in implementing measure-and-copy channels as $f$-controlled $2\rightarrow 2$ tasks, where the value of $f$ fixes which POVM should be used. 

\begin{definition}\label{def:f-measure}
    The $f$\textbf{-measure}$(M_1,M_2)$ task is defined by a choice  of Boolean function $f:\{ 0,1\}^{2n}\rightarrow \{0,1\}$, and two POVMs, $M_1=\{\Lambda^i_1\}_i$, $M_2=\{\Lambda^i_2\}_i$.
    We say that $f$-measure$(M_1, M_2)$ is completed $\epsilon$-correctly if we execute a family of channels $\mathcal{N}^{x,y}$ such that 
    \begin{align}
        \forall (x,y), \,\,\,\Vert \mathcal{C}_{M_{f(x,y)}}(\rho)-\mathcal{N}^{x,y}(\rho)\Vert_\diamond \leq \epsilon
    \end{align}
    where $\mathcal{C}_{M_{f(x,y)}}$ is the measure and copy channel for measurement $M_{f(x,y)}$. 
\end{definition}
A well studied example is $f$-measure$(I, H)$, where $Q$ is a single qubit, $Q'$ is empty, and where $M^0$, $M^1$ are measurements in the computational and Hadamard bases respectively. 

Another type of $f$-controlled $2\rightarrow 2$ task considers the controlled application of a pair of unitaries. 
\begin{definition}\label{def:f-unitary}
    The $f$\textbf{-unitary}$(U^0,U^1)$ task is defined by a choice of Boolean function $f:\{ 0,1\}^{2n}\rightarrow \{0,1\}$, and two unitaries $U^0_{QQ'\rightarrow OO'}, U^1_{QQ'\rightarrow OO'}$.
    Let $\mathcal{U}^b(\cdot)\equiv U^b(\cdot) (U^b)^\dagger$.
    Then we say that the $f$\textbf{-unitary}$(U^0,U^1)$ task is completed $\epsilon$-correctly by a family of channels $\{\mathcal{N}^{x,y}\}_{x,y}$ if
    \begin{align}
        \forall (x,y),\,\,\, \Vert \mathcal{U}^{f(x,y)} - \mathcal{N}^{x,y} \Vert_\diamond \leq \epsilon.
    \end{align}
\end{definition}

Motivated by \cref{remark:redistributequantuminputs}, we assume $QQ'$ are both held by Alice. 
This allows her to apply arbitrary isometries before implementing any NLQC, which means that various $f$-unitary and $f$-measure examples are equivalent. 
For instance, considering $f$-unitary$(U^0,U^1)$, Alice can first choose to apply any unitary $V$ then run the $f$-unitary$(U^0,U^1)$ protocol. 
These combined actions then implement $f$-unitary$(U^0V,U^1V)$. 
This reasoning leads to the following remarks. 

\begin{remark}
    $f$-unitary$(U^0,U^1)$ and $f$-unitary$(U^0V,U^1V)$ are equivalent under $O(1)$ oracle reductions for any choice of unitary $V$. 
\end{remark}

\begin{remark}\label{remark:fmeasurepreprocessing}
    $f$-measure$(M^0,M^1)$ and $f$-measure$(M_V^0,M_V^1)$  are equivalent under $O(1)$ oracle reductions, where $M_V^b=\{V^\dagger \Lambda^i_b V\}_i$, for any choice of unitary $V$. 
\end{remark}

A special case of an $f$-controlled measurement that will be of special interest is the controlled Bell measurement. 
We define $\mathcal{B}^0$ to be the measure and copy channel which measures two input qubits in the computational basis, and $\mathcal{B}^1$ to be the measure and copy channel which measures in the Bell basis. 

\begin{definition}\label{def:fBell}
    A \textbf{$f$-measure(Bell)} task is defined by a choice of Boolean function $f:\{ 0,1\}^{2n}\rightarrow \{0,1\}$, and a two qubit Hilbert space $\mathcal{H}_Q$.
    Inputs $x\in \{0,1\}^{n}$ and system $Q$ are given to Alice, and input $y\in \{0,1\}^{n}$ is given to Bob.
    The $f$-Bell task is completed $\epsilon$-correctly on input $(x,y)$ if channel $\mathcal{N}^{x,y}$ executed on input $x,y$ satisfies
    \begin{align}
        \Vert \mathcal{B}^{f(x,y)} - \mathcal{N}^{x,y} \Vert_\diamond \leq \epsilon \enspace.
    \end{align}
    We say the task is implemented $\epsilon$-correctly if the above holds for all inputs $(x,y)$. 
\end{definition}

\section{Single qubit \texorpdfstring{$f$}{TEXT}-measure protocols}\label{sec:singlequbitfmeasure}

In this section we study $f$-measure protocols, restricted to projective measurements of single qubits. 
Our main result is to show that all of these are equivalent under $O(1)$ oracle reductions.\footnote{Actually, there is a trivial exception: $f$-measure$(M^0,M^1)$ with $M^0=M^1$ is a trivial task, and does not imply any other $f$-measure task.} 

\subsection{Angle characterization} \label{sec:anglecharacterization}

To show the equivalence of single qubit projective rank 1 measurement $f$-measure protocols, we begin by noting that we can always describe projective rank 1 measurements by starting with the computational basis measurement and then conjugating by a unitary, 
\begin{align}
    \Pi^i_0&=U^i\ketbra{0}{0}(U^i)^\dagger, \nonumber \\
    \Pi^i_1&=U^i\ketbra{1}{1}(U^i)^\dagger.
\end{align}
Because of this we also use the notation $f$-measure$(U^0, U^1)$ when dealing with projective measurements. 
Recall from \cref{remark:fmeasurepreprocessing} that by applying a unitary $W$ to the input, the POVM elements of measurement $M^0, M^1$, are transformed according to
\begin{align}
    \Lambda_i^j\rightarrow W^\dagger \Lambda_i^j W.
\end{align}
Thus when describing the measurement in terms of a unitary, we have
\begin{align}
    \Pi_i^j = U^j\ketbra{i}{i}(U^j)^\dagger \rightarrow W^\dagger U^j\ketbra{i}{i}(U^j)^\dagger W
\end{align}
so that pre-processing by $W$ transforms $f$-measure$(U^0, U^1)$ to $f$-measure$(W^\dagger U^0, W^\dagger U^1)$, i.e.\ the $W$ appears daggered and on the left. 

Our first step in showing the equivalence of all such tasks is to show that for each $f$-measure$(U^0, U^1)$ task, there is a choice of angle $\theta$ such that it is equivalent under $O(1)$ oracle reductions to $f$-measure$(I, R_X(\theta))$, where we recall that $R_X(\theta)$ is a rotation around the $x$ axis of the Bloch sphere, 
\begin{align}
    R_X(\theta)=\exp\left(-i X \theta/2 \right).
\end{align}
The rotations $R_Z(\phi)$ and $R_Y(\varphi)$ are defined similarly. 

\begin{lemma}
    \label{lemma:anglecharacterization}
    For any $\theta\in [0,4\pi]$ the following are all equivalent under oracle reductions using single oracle calls:
    \begin{enumerate}
        \item $f$-measure$(I,R_X(\theta))$
        \item $f$-measure$(H,R_Z(\theta)H)$
        \item $f$-measure$(I,R_Y(\theta))$
    \end{enumerate}
    Moreover, for any two single qubit unitaries $U_0,U_1$ there exists an angle $\theta'\in [0,\pi/2]$ such that 
    \begin{enumerate}
        \item $f$-measure$(U_0,U_1)$
        \item $f$-measure$(I,R_X(\theta'))$
    \end{enumerate}
    are equivalent under oracle reductions using a single oracle call.
\end{lemma}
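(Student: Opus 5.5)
The plan is to reduce every pair of bases to a canonical form using only two freedoms. The first is pre-processing by a unitary $W$, which sends $f$-measure$(U^0,U^1)$ to $f$-measure$(W^\dagger U^0,W^\dagger U^1)$ (\cref{remark:fmeasurepreprocessing}). Each direction of such an equivalence uses a single oracle call and no extra resources. The second is \emph{right} multiplication of each $U^j$ by a diagonal unitary or by $X$. This does not change the measurement at all: $U^j D\ketbra{i}{i}D^\dagger (U^j)^\dagger=U^j\ketbra{i}{i}(U^j)^\dagger$, while $U^jX$ only relabels the outcomes. A relabelling can be undone locally by both players, since the outcome is copied to both sides and both know $f$ only implicitly. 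So for $U^1X$ we would instead use the fact that swapping labels in only one branch is not a local operation. I would therefore only use right multiplication by $X$ simultaneously in both branches, or avoid it altogether.

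For the first part, I conjugate by Cliffords. Take $W=H$. Then $f$-measure$(I,R_X(\theta))$ maps to $f$-measure$(H,HR_X(\theta))=f$-measure$(H,R_Z(\theta)H)$, using $HR_X(\theta)=R_Z(\theta)H$. Similarly, take $W=S^\dagger$ (or $S$, as appropriate), so that $W^\dagger R_X(\theta)W=R_Y(\theta)$. Then $f$-measure$(I,R_X(\theta))$ maps to $f$-measure$(W^\dagger,W^\dagger R_X(\theta))=f$-measure$(W^\dagger, R_Y(\theta)W^\dagger)$. Since $W^\dagger$ is diagonal, right multiplication by $W$ (which leaves the measurement unchanged) gives $f$-measure$(I,R_Y(\theta))$. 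Because each map is invertible by $W^{-1}$, these are equivalences with a single oracle call.

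For the second part, pre-process with $W=U_0$ to reach $f$-measure$(I,V)$ with $V=U_0^\dagger U_1$. Next, use an Euler decomposition $V=e^{i\gamma}R_Z(\alpha)R_X(\theta)R_Z(\beta)$. The trailing $R_Z(\beta)$ is diagonal, so it can be dropped. To remove the leading $R_Z(\alpha)$, pre-process with $W=R_Z(\alpha)$: the second basis becomes $R_X(\theta)$ and the first becomes $R_Z(-\alpha)$, which is diagonal and hence equivalent to $I$. This gives $f$-measure$(I,R_X(\theta))$ with $\theta\in[0,4\pi]$.

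It remains to fold $\theta$ into $[0,\pi/2]$. The measurement $\{R_X(\theta)\ket{i}\}$ depends only on the Bloch axis $R_X(\theta)\hat z$, which is the unit vector at angle $\theta$ in the $y$–$z$ plane, up to the sign that swaps the outcomes. Thus $\theta\mapsto\theta+\pi$ gives the same measurement with its labels swapped. The map $\theta\mapsto -\theta$ is achieved by pre-processing with $X$: this fixes the computational basis up to relabelling in \emph{both} branches, and $XR_X(\theta)X=R_X(\theta)$. So instead I would use $Z$: pre-process with $Z$, using $ZR_X(\theta)Z=R_X(-\theta)$ and the fact that $Z$ is diagonal, so the first basis is unchanged.

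The main obstacle is the label swap caused by $\theta\mapsto\theta+\pi$, since it occurs in only one branch. I would handle it by noting that $R_X(\pi)=-iX$, so $R_X(\theta+\pi)=R_X(\theta)\cdot(-iX)$. Right multiplication by $X$ in branch $1$ alone relabels that branch's outcome. To fix this, combine it with a relabelling in both branches via pre-processing with $X$, which maps $I\to X$. I expect this to require a careful but finite case check. If a one-branch relabel cannot be absorbed this way, I would fall back on defining the measure-and-copy task up to a fixed bijective relabelling of outcomes per branch, which the paper's notion of the basis as a set $\{U\ket{i}\}_i$ suggests is intended. Combining $\theta\mapsto-\theta$ with $\theta\mapsto\theta+\pi$ (mod $2\pi$ via the global phase) brings $\theta$ into $[0,\pi/2]$.
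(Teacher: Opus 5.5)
Your handling of the first part (pre-processing by $H$ and by $S^\dagger$, then dropping a diagonal unitary on the right) follows the paper. So does your reduction of $f$-measure$(U_0,U_1)$ to $f$-measure$(I,R_X(\theta))$ via pre-processing by $U_0$ and $R_Z(\alpha)$, and your use of $Z$ pre-processing for $\theta\mapsto-\theta$.

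The gap is the step $\theta\mapsto\theta+\pi$. You leave it open because you assert that relabelling the outcome in only one branch is not a local operation. That premise is false in this model. During the communication round the players exchange their classical inputs, so in the second round both Alice and Bob know $x$, $y$ and hence $f(x,y)$. Each can therefore flip the output bit exactly when $f(x,y)=1$. This is how the paper closes the argument, using $R_X(\theta+\pi)=-iR_X(\theta)X$. The same kind of $f$-dependent post-processing appears throughout the paper, e.g.\ in \cref{lemma:anglesplitting} and \cref{lem:fBelltofmeasure}.

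Your proposed workaround cannot succeed, and your fallback would prove a different statement. Pre-processing by any unitary $W$ applies the same Bloch-sphere rotation to both branches. A relabelling applied in both branches negates both label-$0$ Bloch vectors. Each of these preserves the angle between the label-$0$ vector of branch $0$ and that of branch $1$. For $f$-measure$(I,R_X(\theta))$ this angle is $\theta$ (reduced to $[0,\pi]$). So no combination of $f$-independent operations can take $\theta\in(\pi/2,\pi]$ into $[0,\pi/2]$. Concretely, pre-processing $f$-measure$(I,R_X(\theta)X)$ with $X$ gives $f$-measure$(X,R_X(\theta))$, which still has exactly one branch relabelled. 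Redefining the task up to per-branch relabelling would weaken the lemma rather than prove it. Once you allow the $f$-dependent bit flip, your argument is complete and essentially coincides with the paper's.
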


\begin{proof}
We first notice that having Alice apply $U_0$ to the inputs and then run the $f$-measure$(U_0,U_1)$ protocol  implements $f$-measure$(I,U_0^\dagger U_1)$.
Thus $f$-measure$(U_0,U_1)$ implies $f$-measure$(I,U_0^\dagger U_1)$.
 
We let $V=U_0^\dagger U_1$. 
Recall that any single qubit unitary can be decomposed into a $ZXZ$ sequence of rotations \cite{barenco1995elementary}, so that there exist angles $\phi, \theta, \varphi, \xi$ such that
\begin{align}
    V=e^{i\xi}R_Z(\phi)R_X(\theta)R_Z(\varphi).
\end{align}
Here $e^{i\xi}$ is a global phase which we can ignore. 
An implementation of $f$-measure$(I,V)$ then performs measurements in either the computational basis or computational basis acted on by $V$, 
\begin{align}
    M^0 &= \{\ketbra{0}{0}, \ketbra{1}{1}\}, \nonumber \\
    M^1 &= \{V\ketbra{0}{0}V^\dagger , V\ketbra{1}{1}V^\dagger\}.
\end{align}
Notice that part of $V$ acts trivially on the computational basis states, 
\begin{align}
    V\ketbra{i}{i}V^\dagger &= R_Z(\phi)R_X(\theta)R_Z(\varphi)\ketbra{i}{i}R_Z(-\varphi)R_X(-\theta)R_Z(-\phi) \nonumber \\
    &= R_Z(\phi)R_X(\theta)\ketbra{i}{i}R_X(-\theta)R_Z(-\phi).
\end{align}
Thus $f$-measure$(I,V)$ is equivalent to $f$-measure$(I,R_Z(\phi)R_X(\theta))$. 
But now we have Alice apply $R_Z(\phi)$ as pre-processing, which combined with a use of $f$-measure$(I,R_Z(\phi)R_X(\theta))$ gives $f$-measure$(R_Z(-\phi),R_X(\theta))$. 
But conjugation by $R_Z(-\phi)$ acts trivially on the projectors onto the computational basis states, so this is just $f$-measure$(I,R_X(\theta))$. 
Thus $f$-measure$(I,V)$ implies $f$-measure$(I,R_X(\theta))$ where $\theta$ is the angle appearing in the $ZXZ$ decomposition of $V$. 
Combined with our earlier claim that $f$-measure$(U_0,U_1)$ implies $f$-measure$(I,U_0^\dagger U_1)$, this proves that $f$-measure$(U_0,U_1)$ implies $f$-measure$(I,R_X(\theta))$ for an appropriate choice of $\theta$. 

In fact, since we applied reversible (unitary) pre-processing, each step in this argument can be reversed, so that it is also true that $f$-measure$(I,R_X(\theta))$ implies $f$-measure$(U_0,U_1)$, where we choose $\theta$ to be the $X$ rotation angle in the $ZXZ$ decomposition of $U_0^\dagger U_1$. 

To show equivalence with $f$-measure$(H,R_Z(\theta)H)$ we use pre-processing by $H$ and then an application of $f$-measure$(I,R_X(\theta))$, which implements $f$-measure$(H,HR_X(\theta))$, then note that
\begin{align}
    HR_X(\theta) = H (HR_Z(\theta)H)= R_Z(\theta)H.
\end{align}
Applying $H$ again as pre-processing and acting with $f$-measure$(H,HR_X(\theta))$ also returns $f$-measure$(I,R_X(\theta))$, so these are equivalent.

Next we show the equivalence of $f$-measure$(I,R_X(\theta)$ and $f$-measure$(I,R_Y(\theta))$.
Beginning with an $f$-measure$(I,R_X(\theta))$ protocol, consider adding a pre-processing step of applying $R_Z(\phi)$ to the input qubit. 
This leads to an $f$-measure$(I,R_Z^\dagger(\phi)R_X(\theta)R_Z(\phi))$ protocol.
Choosing $\phi=\pi/2$, we use that
\begin{equation}
    R_Y(\theta)=R_Z(-\pi/2)R_X(\theta)R_Z(\pi/2)
\end{equation}
so that we have an $f$-measure$(I,R_Y(\theta))$ protocol. 
An $f$-measure$(I,R_Y(\theta))$ protocol can similarly be transformed into an $f$-measure$(I,R_X(\theta))$ by applying $R_Z(-\pi/2)$ as preprocessing. 

Finally, we show that we can always choose the angle $\theta$ to be in the interval $[0,\pi/2]$. 
In general, the rotation matrix $R_X(\theta)$ is $4\pi$-periodic, however, it has the property that 
\begin{equation}
    \forall \theta\in \mathbb{R}, \quad R_X(\theta+\pi)=iR_X(\theta)X.
\end{equation}
Conjugation with the $X$ operator simply acts on the set of projection operators of the computational basis measurement as a bijection between them. 
This permutation of the outcome labels can always be corrected by classical post-processing of the measurement outcomes, since the function value $f(x,y)$ and therefore whether the labels were permuted is known in the second round by both players. 
Therefore the $f$-measure$(I,R_X(\theta))$ and $f$-measure$(I,R_X(\theta+\pi k))$ protocols are equivalent by local post-processing for any $\theta\in \mathbb{R}$ and $k\in\mathbb{Z}$. 
Because of this we can choose $\theta\in[-\pi/2,\pi/2]$. 
Applying pre-processing with $Z$ to an $f$-measure$(I,R_X(\theta))$ protocol yields $f$-measure$(Z,ZR_X(\theta))$ which is equivalent to $f$-measure$(I,ZR_X(\theta)Z=R_X(-\theta))$ so that we can choose $\theta\in[0,\pi/2]$.
\end{proof}

\subsection{Addition of measurement angles}\label{sec:angleaddition}

So far, we have shown that each single qubit $f$-measure$(U_0,U_1)$ task is equivalent, under reductions using a single oracle call, to $f$-measure$(I,R_X(\theta))$ for some choice of angle $\theta$. 
In this section we begin to explore how this angle can be changed. 
We will show that one oracle use of $f$-measure$(I,R_X(\theta_1))$ and one oracle use of $f$-measure$(I,R_X(\theta_2))$ can be combined to implement $f$-measure$(I,R_X(\theta_1+\theta_2))$.

\begin{lemma}\label{lemma:addingangles}
    The $f$-measure$(I,R_X(\theta_1+\theta_2))$ task can be oracle reduced to one implementation of $f$-measure$(I,R_X(\theta_1))$ and one implementation of $f$-measure$(I,R_X(\theta_2))$. 
\end{lemma}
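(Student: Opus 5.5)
The plan is to work in the rotated picture given by \cref{lemma:anglecharacterization}, where $f$-measure$(I,R_X(\theta))$ is equivalent, with single oracle calls, to $f$-measure$(H,R_Z(\theta)H)$. Measuring in the basis $\{R_Z(\theta)H\ket{i}\}_i$ is the same as first applying the diagonal gate $R_Z(-\theta)$ and then measuring in the Hadamard basis. So $f$-measure$(H,R_Z(\theta)H)$ amounts to: apply $R_Z(-\theta)^{f(x,y)}$, then measure $X$, then copy the outcome to both sides.

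It therefore suffices to build $f$-measure$(H,R_Z(\theta_1+\theta_2)H)$ from one call each of $f$-measure$(H,R_Z(\theta_1)H)$ and $f$-measure$(H,R_Z(\theta_2)H)$. Since $R_X$ is $4\pi$-periodic, we may reduce $\theta_1+\theta_2$ modulo $4\pi$ so that the lemma applies.

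The construction is a gate-teleportation gadget in the spirit of Identity 1 and Identity 3. Alice holds the input qubit $q$ in the state $\alpha\ket{0}+\beta\ket{1}$ and prepares an ancilla $a$ in $\ket{0}$. As pre-processing she applies $\operatorname{CNOT}_{q\to a}$, producing $\alpha\ket{00}+\beta\ket{11}$. Both oracles are then called in parallel, which is allowed because the CNOT is local pre-processing:
\begin{itemize}
\item ancilla $a$ is fed into $f$-measure$(H,R_Z(\theta_2)H)$, with outcome $s$;
\item qubit $q$ is fed into $f$-measure$(H,R_Z(\theta_1)H)$, with outcome $t$.
\end{itemize}
Both outcomes are copied to Alice and Bob, who each output $s\oplus t$.

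For correctness, let $g=f(x,y)$. By Identity 1, a $Z$-rotation on the ancilla of $\alpha\ket{00}+\beta\ket{11}$ is the same as the same rotation on $q$. So the oracle on $a$ acts as $R_Z(-g\theta_2)$ on $a$ followed by an $X$-measurement with outcome $s$. By Identity 3, this leaves $q$ in the state $Z^{s}R_Z(-g\theta_2)\ket{\psi}$, with $s$ uniformly random.

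Because diagonal gates commute, the second oracle then measures $X$ on $R_Z(-g\theta_1)Z^sR_Z(-g\theta_2)\ket{\psi}=Z^sR_Z(-g(\theta_1+\theta_2))\ket{\psi}$. Since $Z^s$ only flips the $X$-outcome, $s\oplus t$ has exactly the distribution and post-measurement classical record of the ideal measure-and-copy channel $\mathcal{C}_{M_g}$ for $f$-measure$(H,R_Z(\theta_1+\theta_2)H)$. Because the two oracles are applied in parallel, the order in which the measurements are analysed does not matter; I would check this by computing the joint outcome distribution directly.

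For the errors, \cref{remark:diamondadditive} shows that $\epsilon_1$- and $\epsilon_2$-correct oracles give an $(\epsilon_1+\epsilon_2)$-correct implementation. The conversions of \cref{lemma:anglecharacterization} are exact unitary pre-processing and classical relabelling, so they add no error.

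The main obstacle I expect is making the teleportation step rigorous as a statement about channels rather than pure states. I would handle this by verifying, for a purified input on $qR$, that the Kraus operators of the composite map, after classical post-processing, coincide with those of $\mathcal{C}_{M_g}$. The other delicate point is bookkeeping the signs and angle conventions in the conversions of \cref{lemma:anglecharacterization}, where $R_Z(\theta)H$ versus $R_Z(-\theta)$ must be tracked consistently.
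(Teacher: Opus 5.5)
Your proof is correct and follows essentially the paper's argument: pass to the $f$-measure$(H,R_Z(\theta)H)$ picture, copy the input in the computational basis with CNOTs so that Identity 1 lets each oracle's $Z$-rotation act on the input, and use Identity 3 to see that the XOR of the Hadamard-basis outcomes reproduces a Hadamard measurement of $R_Z(-g(\theta_1+\theta_2))\ket{\psi}$. The only difference is economy. The paper copies into two ancillas that go into the oracles and has Alice measure the original qubit in the Hadamard basis herself, outputting $a\oplus b\oplus c$. You instead feed the input qubit directly into one oracle and need only one ancilla, which works equally well.
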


\begin{proof}
By \cref{lemma:anglecharacterization}, it suffices to show that $f$-measure$(H,R_Z(\theta_1+\theta_2)H)$ can be implemented using one $f$-measure$(H,R_Z(\theta_1)H)$ oracle and one $f$-measure$(H,R_Z(\theta_2)H)$ oracle.

As pre-processing, Alice prepares two ancilla qubits $\ket{0}_A\ket{0}_B$, and applies two CNOT gates from her input $Q$ into $A$ and $B$. 
Then, she inputs $A$ and $B$ into separate $f$-measure$(H, R_Z(\theta_1)H)$ and $f$-measure$(H, R_Z(\theta_2)H)$ oracles, and measures input $Q$ in the Hadamard basis. 
Conditioned on getting measurement outcomes $c,a,b$, this has the effect of implementing:

\begin{center}
\begin{tikzpicture}[on grid,node distance=10mm]
          \node at (0,0)      (i1) {$\ket{\psi}$};
          \node[below=of i1]  (i2) {$\ket{0}$};
          \node[below=of i2]  (i3) {$\ket{0}$};
          \path (i1)  ++(7cm,0)     node[mymeasureselect] (m1) {$c$}
                      ++(0,-10mm)   node[mymeasureselect] (m2) {$a$}
                      ++(0,-10mm)   node[mymeasureselect] (m3) {$b$};
          \path   
                (i1)    ++(1cm,0)  node[mycontrol]       (c1) {}
                        ++(0,-1cm) node[mytarget]        (t2) {$\bigoplus$}
                (c1) edge (t2)
                (i1)    ++(2cm,0)  node[mycontrol]       (c1b) {}
                        ++(0,-2cm) node[mytarget]        (t3) {$\bigoplus$}
                (c1b) edge (t3)
                (i2)    ++(4cm,0) node[mySQG] (r2) {$R_Z^{f(x,y)}(-\theta_1)$}
                (i3)    ++(4cm,0) node[mySQG] (r3) {$R_Z^{f(x,y)}(-\theta_2)$}
                (i1)    ++(6cm,0) node[mySQG] (h1) {$H$}
                (i2)    ++(6cm,0) node[mySQG] (h2) {$H$}
                (i3)    ++(6cm,0) node[mySQG] (h3) {$H$}
                ;
      \begin{scope}[on background layer]
          \path       (i1) edge (m1)
                      (i2) edge (m2)
                      (i3) edge (m3);
      \end{scope}
\end{tikzpicture}
\end{center}

\noindent The bit $c$ is known to Alice in the first round, while $a,b$ are the measurement outcomes from the two $f$-measure oracles so are obtained by Alice and Bob in the second round. 

We can now manipulate this circuit using the identities introduced in \cref{sec:preliminaries} to understand its effect. 
We first use identity 1 to move the $(R_Z(\theta_i))^{f(x,y)}$ rotations to the first qubit, 

\begin{center}
\begin{tikzpicture}[on grid,node distance=10mm]
          \node at (0,0)      (i1) {$\ket{\psi}$};
          \node[below=of i1]  (i2) {$\ket{0}$};
          \node[below=of i2]  (i3) {$\ket{0}$};
          \path (i1)  ++(10cm,0)     node[mymeasureselect] (m1) {$c$}
                      ++(0,-10mm)   node[mymeasureselect] (m2) {$a$}
                      ++(0,-10mm)   node[mymeasureselect] (m3) {$b$};
          \path   
                (i1)    ++(1cm,0)  node[mycontrol]       (c1) {}
                        ++(0,-1cm) node[mytarget]        (t2) {$\bigoplus$}
                (c1) edge (t2)
                (i1)    ++(2cm,0)  node[mycontrol]       (c1b) {}
                        ++(0,-2cm) node[mytarget]        (t3) {$\bigoplus$}
                (c1b) edge (t3)
                (i1)    ++(4cm,0) node[mySQG] (r1) {$R_Z^{f(x,y)}(-\theta_1)$}
                (i1)    ++(7cm,0) node[mySQG] (r1b) {$R_Z^{f(x,y)}(-\theta_2)$}
                (i1)    ++(9cm,0) node[mySQG] (h1) {$H$}
                (i2)    ++(9cm,0) node[mySQG] (h2) {$H$}
                (i3)    ++(9cm,0) node[mySQG] (h3) {$H$}
                ;
      \begin{scope}[on background layer]
          \path       (i1) edge (m1)
                      (i2) edge (m2)
                      (i3) edge (m3);
      \end{scope}
\end{tikzpicture}
\end{center}

\noindent We have also freely re-ordered the controls and the $Z$ rotations since they all commute. 
Next, we use circuit identity 3 to understand the state of the first qubit given the measurement outcomes of the second and third qubits,

\begin{center}
\begin{tikzpicture}[on grid,node distance=10mm]
          \node at (0,0)      (i1) {$\ket{\psi}$};
          \path (i1)  ++(7cm,0)     node[mymeasureselect] (m1) {$c$};
          \path 
                (i1)    ++(1cm,0) node[mySQG] (g1) {$Z^{a\oplus b}$}
                (i1)    ++(3.5cm,0) node[mySQG] (r1) {$R_Z^{f(x,y)}(-(\theta_1+\theta_2))$}
                (i1)    ++(6cm,0) node[mySQG] (h1) {$H$}

                ;
      \begin{scope}[on background layer]
          \path       (i1) edge (m1);
      \end{scope}
\end{tikzpicture}
\end{center}

\noindent Then we commute the $Z$ gate and the $Z$-rotation, and then conjugate the $Z$ gate to the other side of the Hadamard, 

\begin{center}
\begin{tikzpicture}[on grid,node distance=10mm]
          \node at (0,0)      (i1) {$\ket{\psi}$};
          \path (i1)  ++(7cm,0)     node[mymeasureselect] (m1) {$c$};
          \path 
                (i1)    ++(2.25cm,0) node[mySQG] (r1) {$R_Z^{f(x,y)}(-(\theta_1+\theta_2))$}
                (i1)    ++(4.5cm,0) node[mySQG] (h1) {$H$}
                (i1)    ++(5.75cm,0) node[mySQG] (g1) {$X^{a\oplus b}$}
                ;
      \begin{scope}[on background layer]
          \path       (i1) edge (m1);
      \end{scope}
\end{tikzpicture}
\end{center}

\noindent Finally we can absorb the $X$ gate into the measurement outcome, 

\begin{center}
\begin{tikzpicture}[on grid,node distance=10mm]
          \node at (0,0)      (i1) {$\ket{\psi}$};
          \path (i1)  ++(6cm,0)     node[mymeasureselect] (m1) {$a\oplus b\oplus c$};
          \path 
                (i1)    ++(2.25cm,0) node[mySQG] (r1) {$R_Z^{f(x,y)}(-(\theta_1+\theta_2))$}
                (i1)    ++(4.5cm,0) node[mySQG] (h1) {$H$}
                ;
      \begin{scope}[on background layer]
          \path       (i1) edge (m1);
      \end{scope}
\end{tikzpicture}
\end{center}

We see that if Alice and Bob return $c'=c\oplus a\oplus b$, they are giving the same outcome as if they had executed $f$-measure$(H, R_Z(\theta_1+\theta_2)H)$. 
Thus this (perfectly) implements $f$-measure$(H, R_Z(\theta_1+\theta_2)H)$ using one $f$-measure$(H, R_Z(\theta_1)H)$ oracle and one $f$-measure$(H, R_Z(\theta_2)H)$ oracle. 
\end{proof}

\subsection{Splitting the measurement angle}\label{sec:anglesplitting}

Another way to build single qubit $f$-measure protocols is by reducing the rotation angle. 
This can be done with the help of additional oracles to $f$-measure$(I,H)$ protocols.

\begin{figure}
\centering
\makebox[\textwidth][c]{
\begin{tikzpicture}[on grid,node distance=10mm]
          \node at (0,0)      (i1) {$\ket{\psi}$};
          \node[below=of i1]  (i2) {$\ket{0}$};
          \node[below=of i2]  (i3) {$\ket{0}$};
          \path (i1)  ++(11.5cm,0)     node[mymeasureselect] (m1) {$a$}
                      ++(0,-10mm)   node[mymeasureselect] (m2) {$b$}
                      ++(0,-10mm)   node[mymeasureselect] (m3) {$c$};
          \path 
                (i1)    ++(1.5cm,0) node[mySQG,minimum width=1.2cm] (r1) {$R_X(-\theta)$}
                (i1)    ++(2.5cm,0)  node[mycontrol]       (c1) {}
                        ++(0,-1cm) node[mytarget]        (t2) {$\bigoplus$}
                (c1) edge (t2)
                (i1)    ++(3.5cm,0) node[mySQG,minimum width=1.2cm] (r1b) {$R_X(\theta)$}
                (i1)    ++(4.5cm,0)  node[mycontrol]       (c1b) {}
                        ++(0,-2cm) node[mytarget]        (t3) {$\bigoplus$}
                (c1b) edge (t3)
                (i2)    ++(5.5cm,0) node[mySQG] (h2) {$H$}
                (i3)    ++(5.5cm,0) node[mySQG] (h3) {$H$}
                (i1)    ++(6.5cm,0)  node[mycontrol]       (c1c) {}
                        ++(0,-1cm) node[mycontrol]       (c2b) {}
                        ++(0,-1cm) node[mycontrol]       (c3) {}
                (c1c) edge (c2b)
                (c2b) edge (c3)
                (i1)    ++(8cm,0) node[mySQG] (r1c) {$R_X(-2\theta)$}
                (i3)    ++(8cm,0) node[mySQG] (h3b) {$H$}
                (i1)    ++(10cm,0) node[mySQG,minimum width=2cm] (r1d) {$R^{f(x,y)}_X(2\theta)$}
                (i2)    ++(10cm,0) node[mySQG,minimum width=2cm] (h2b) {$H^{f(x,y)}$}
                (i3)    ++(10cm,0) node[mySQG,minimum width=2cm] (h3c) {$H^{f(x,y)}$}
                ;
      
    \begin{scope}[on background layer]
        \node[draw=black,rounded corners,densely dashed,fill=blue,fill opacity=0.1,fit=(r1d)(m1),inner sep=1.5mm](fit1) {};
        \node[right=0mm of fit1.east,anchor=west] (fitl1) {$f$-measure$\left(I,R_X(-2\theta)\right)$};
        \node[draw=black,rounded corners,densely dashed,fill=red,fill opacity=0.1,fit=(h2b)(m2),inner sep=1.5mm](fit2) {};
        \node[right=0mm of fit2.east,anchor=west] (fitl2) {$f$-measure$(I,H)$};
        \node[draw=black,rounded corners,densely dashed,fill=red,fill opacity=0.1,fit=(h3c)(m3),inner sep=1.5mm](fit3) {};
        \node[right=0mm of fit3.east,anchor=west] (fitl3) {$f$-measure$(I,H)$};
          \path       (i1) edge (m1)
                      (i2) edge (m2)
                      (i3) edge (m3);
    \end{scope}
\end{tikzpicture}
}
\caption{Circuit that implements $f$-measure$(I,R_X(\theta))$ with local pre-processing and oracle use of $f$-measure$(I,R_X(2\theta))$ and two copies of $f$-measure$(I,H)$. }
\label{fig:splittingangle}
\end{figure}

\begin{lemma}
    \label{lemma:anglesplitting}
    For any $\theta$, the $f$-measure$(I, R_X(\theta))$ protocol can be implemented exactly with oracle access to a single $f$-measure$(I, R_X(2\theta))$ oracle, and two $f$-measure$(I,H)$ oracles. 
\end{lemma}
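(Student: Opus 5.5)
The plan is to verify directly that the circuit in \cref{fig:splittingangle} works. Alice does all the pre-processing locally in the first round: $R_X(-\theta)$, the two CNOTs, $R_X(\theta)$, the Hadamards, the CCZ, $R_X(-2\theta)$ and the final $H$ on the third qubit. She then feeds the three qubits into one $f$-measure$(I,R_X(-2\theta))$ oracle and two $f$-measure$(I,H)$ oracles. By the $\theta\mapsto-\theta$ symmetry at the end of the proof of \cref{lemma:anglecharacterization}, an $f$-measure$(I,R_X(-2\theta))$ oracle is obtained from an $f$-measure$(I,R_X(2\theta))$ oracle by conjugating with $Z$ and relabeling outcomes, so the lemma uses exactly the stated resources. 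In the second round both players know $f(x,y)$ and the outcomes $a,b,c$. It therefore suffices to exhibit a classical function $g(a,b,c,f)$ such that the induced POVM on $\ket{\psi}$ is the computational basis measurement when $f=0$ and the $\{R_X(\theta)\ket{i}\}_i$ measurement when $f=1$, with $g$ the outcome label. Exactness then follows with no error accumulation beyond \cref{remark:diamondadditive}.

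I will treat the two values of $f$ separately, using the post-selection identities 1--3 of \cref{sec:preliminaries} as in the proof of \cref{lemma:addingangles}. I will compute the unnormalized branch $\bra{a}\bra{b}\bra{c}(\cdots)\ket{\psi}\ket{0}\ket{0}$ as an operator on $\ket{\psi}$ and show it equals a phase and Pauli times the projector onto the target basis vector labeled $g$.

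For $f=0$ the oracle layer is trivial. The second qubit is then effectively measured in the Hadamard basis after the CCZ, and the third qubit's two Hadamards cancel. The first step is to rewrite the CCZ followed by an $X$-basis measurement of qubit 2 as a $CZ^{b}$-type gate between qubits 1 and 3, using identities 2 and 3. With that, the third qubit is measured in the $Z$ basis after a controlled phase, which reduces to a classically controlled Pauli on qubit 1. The outer $R_X(\pm\theta)$ rotations should then cancel up to an outcome-dependent $X$ or $Z$. Since $X$ and $Z$ only relabel or preserve computational-basis projectors, this gives the computational basis measurement.

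For $f=1$, qubits 2 and 3 pass through an extra $H$. Pushing the CNOT/CCZ structure through as above, the first qubit should receive a net rotation that is $R_X(\theta)$ or $R_X(-\theta)$ depending on an ancilla outcome; the oracle's $R_X(2\theta)$ combined with the local $R_X(-2\theta)$ fixes this.

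I expect the $f=1$ branch to be the main obstacle. The CCZ sandwiched between Hadamard-conjugated ancillas is genuinely non-Clifford on the branch level, so the bookkeeping of which ancilla outcome toggles the sign of the angle is delicate. I would first compute the four branches $(b,c)$ explicitly with $\ket{\psi}$ written in the $X$ eigenbasis, where all the $R_X$ gates are diagonal. In that picture every gate on qubit 1 becomes a phase, so each branch reduces to checking that qubit 1 ends up measured in the $R_X(\theta)$ basis up to an $X$ or $Z$ fixed by $(b,c)$. From this computation I would read off the relabeling $g=a\oplus h(b,c)$.

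Finally, I will check that each branch occurs with exactly the Born probability of the target measurement, that is, that the branch operators sum to a valid POVM matching the ideal one. This confirms that the implementation is exact.
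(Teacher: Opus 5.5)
Your overall strategy is the paper's: split on $f$, simplify the circuit of \cref{fig:splittingangle} with identities 1--3, and read off a classical post-processing. But the branch analyses you sketch do not match the circuit, and the readout you aim for cannot exist.

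\textbf{The two branches are swapped.} For $f=0$ the oracles apply no gates. Qubit 2 sees $H$, then the CCZ, and is then read out in the \emph{computational} basis. Qubit 3 sees $H$, CCZ, $H$, so its Hadamards do not cancel. It is for $f=1$ that qubit 2 is effectively $X$-measured after the CCZ and the last two Hadamards on qubit 3 cancel. Your step turning ``CCZ followed by an $X$-basis measurement of qubit 2'' into $CZ^b_{13}$ is not an instance of identity 2, which requires the control to be read out in the computational basis. Likewise, the $\pm 2\theta$ rotations play the opposite role from the one you describe. For $f=1$ the oracle's $R_X(2\theta)$ just cancels the local $R_X(-2\theta)$. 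It is for $f=0$ that the local $R_X(-2\theta)$ does real work.

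\textbf{The outcome is not carried by qubit 1.} You expect the answer to sit on qubit 1 up to a Pauli relabeling, $g=a\oplus h(b,c)$. In fact the circuit routes the outcome to different qubits depending on the ancilla results.

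For $f=0$, identity 2 on qubit 2 gives $CZ^b_{13}$, and identity 3 gives $Z^b$ on qubit 1 between the two $R_X(\pm\theta)$.
\begin{itemize}
\item If $b=0$, the two rotations cancel and the Hadamards on qubit 3 cancel. Then $\operatorname{CNOT}_{1\to 3}$ copies the computational value into $c$, while the leftover $R_X(-2\theta)$ scrambles $a$. So the output must be $c$.
\item If $b=1$, $H_3CZ_{13}H_3=\operatorname{CNOT}_{1\to 3}$ undoes the earlier CNOT. Also $R_X(\theta)ZR_X(-\theta)=R_X(2\theta)Z$, and this $R_X(2\theta)$ is cancelled by $R_X(-2\theta)$. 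So the output is $a$.
\end{itemize}

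For $f=1$, identity 2 on qubit 3 gives $CZ^c_{12}$, and identity 3 gives $Z^c$ on qubit 1. Then $H_2CZ^c_{12}H_2=\operatorname{CNOT}^c_{1\to2}$, and identity 2 with qubit 1 as control turns this into $X^{ac}$ on qubit 2. So the output is $b\oplus ac$.

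The correct post-processing is therefore
$m=(1-f)[(1-b)c+ba]+f[b\oplus ac]$.

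No rule of your form can replace it. In the $f=1$, $c=0$ branch, $b$ equals the target outcome, while $a$ differs from it with probability $\sin^2(\theta/2)$ regardless of $\ket{\psi}$. Hence for $\theta\notin\pi\mathbb{Z}$, any rule $a\oplus h(b,c)$ outputs the wrong answer with nonzero probability on an eigenstate of the target basis.

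Relatedly, your $X$-eigenbasis shortcut fails. The CNOTs and the CCZ use qubit 1 as a control, so they do not act as phases on it.
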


\begin{proof} We first describe the protocol then prove its correctness.

\vspace{0.2cm}
\noindent \textbf{Protocol:} Alice and Bob's protocol is illustrated in \cref{fig:splittingangle}.
The coloured boxes represent the effect of the three oracle calls used in the protocol; the remaining circuit constitutes pre-processing applied by Alice. 
Call the input qubit $A$ and the two ancilla $B$ and $C$. 
The coloured boxes represent three instances of $f$-measure oracles, one for $f$-measure$(I, R_X(-2\theta))$\footnote{Recall that since $ZR_X(-\phi)Z=R_X(\phi)$, $f$-measure$(I, R_X(-2\theta))$ and $f$-measure$(I, R_X(2\theta))$ are equivalent under an $O(1)$ oracle reduction, so we can freely use $f$-measure$(I, R_X(-2\theta))$ here.}, and two for $f$-measure$(I, H)$.
Let the measurement outcomes from the three oracles be $a$, $b$, and $c$, respectively. 
We claim that there is a deterministic function of $a,b,c$ and $f(x,y)$ that, when output, simulates the outcome of measuring the input $A$ in the computational basis when $f(x,y)=0$, and in the computational basis rotated by $R_X(\theta)$ when $f(x,y)=1$. 
Concretely, the function is
\begin{align}
    m = (1-f)[(1-b) \cdot c + b \cdot a] + f\cdot [b\oplus a\cdot c].
\end{align}
Outputting this bit on both sides and throwing away all remaining systems implements the channel
\begin{align}
    \mathcal{N}_{A\rightarrow OO'}(\rho)=\sum_m \tr(\Pi_m^{f(x,y)}\rho) \ketbra{m}{m}_O\otimes \ketbra{m}{m}_{O'}
\end{align}
with 
\begin{align}
    \Pi^0_0 &= \ketbra{0}{0},\nonumber \\
    \Pi^0_1 &= \ketbra{1}{1}, \nonumber \\
    \Pi^1_0 &= R_X(\theta)\ketbra{0}{0}R_X^\dagger(\theta) \nonumber \\
    \Pi^1_1 &= R_X(\theta)\ketbra{1}{1}R_X^\dagger(\theta),
\end{align}
so that this implements the correct measure and copy channel. 

\vspace{0.2cm}
\noindent \textbf{Correctness:} We analyze the action of the protocol separately in $f(x,y)=0$ and $f(x,y)=1$ cases. 
Begin with the $f=0$ case.
We view the second wire as controlling a $CZ$ gate on the first and third wires. 
Then apply circuit identity 2 to obtain

\begin{center}
\begin{tikzpicture}[on grid,node distance=10mm]
          \node at (0,0)      (i1) {$\ket{\psi}$};
          \node[below=of i1]  (i2) {$\ket{0}$};
          \node[below=of i2]  (i3) {$\ket{0}$};
          \path (i1)  ++(10.5cm,0)     node[mymeasureselect] (m1) {$a$}
                      ++(0,-20mm)      node[mymeasureselect] (m3) {$c$}
                (i2)  ++(4.75cm,0)      node[mymeasureselect] (m2) {$b$};
          \path 
                (i1)    ++(1.5cm,0) node[mySQG,minimum width=1.2cm] (r1) {$R_X(-\theta)$}
                (i1)    ++(2.5cm,0)  node[mycontrol]       (c1) {}
                        ++(0,-1cm) node[mytarget]        (t2) {$\bigoplus$}
                (c1) edge (t2)
                (i1)    ++(3.5cm,0) node[mySQG,minimum width=1.2cm] (r1b) {$R_X(\theta)$}
                (i2)    ++(3.5cm,0) node[mySQG] (h2) {$H$}
                (i1)    ++(5.5cm,0)  node[mycontrol]       (c1b) {}
                        ++(0,-2cm) node[mytarget]        (t3) {$\bigoplus$}
                (c1b) edge (t3)
                (i3)    ++(6.5cm,0) node[mySQG] (h3) {$H$}
                (i1)    ++(7.5cm,0)  node[mycontrol]       (c1c) {}
                        ++(0,-2cm) node[mycontrol]       (c3) {}
                (c1c) edge (c3)
                (i1)    ++(9cm,0) node[mySQG] (r1c) {$R_X(-2\theta)$}
                (i3)    ++(9cm,0) node[mySQG] (h3b) {$H$}
                ;
        \node[draw=black,rounded corners,densely dashed,fit=(c1c)(c3),inner sep=1.5mm](fit1) {};
        \node[above=1mm of fit1.north,anchor=south] (fitl1) {$b$};
      \begin{scope}[on background layer]
          \path       (i1) edge (m1)
                      (i2) edge (m2)
                      (i3) edge (m3);
      \end{scope}
\end{tikzpicture}
\end{center}

\noindent The gate inside the dashed box is applied if $b=1$. 
Now apply circuit identity 3 to the first and second wires, obtaining

\begin{center}
\begin{tikzpicture}[on grid,node distance=10mm]
          \node at (0,0)      (i1) {$\ket{\psi}$};
          \node[below=of i1]  (i2) {$\ket{0}$};
          \node[below=of i2]  (i3) {$\ket{0}$};
          \path (i1)  ++(10.5cm,0)     node[mymeasureselect] (m1) {$a$}
                      ++(0,-20mm)      node[mymeasureselect] (m3) {$c$}
                (i2)  ++(1.5cm,0)      node[mymeasureselect] (m2) {$b$};
          \path 
                (i1)    ++(1.5cm,0) node[mySQG,minimum width=1.2cm] (r1) {$R_X(-\theta)$}
                (i1)    ++(3cm,0)  node[mySQG]       (z1) {$Z^b$}
                (i1)    ++(4.5cm,0) node[mySQG,minimum width=1.2cm] (r1b) {$R_X(\theta)$}
                (i1)    ++(5.5cm,0)  node[mycontrol]       (c1b) {}
                        ++(0,-2cm) node[mytarget]        (t3) {$\bigoplus$}
                (c1b) edge (t3)
                (i3)    ++(6.5cm,0) node[mySQG] (h3) {$H$}
                (i1)    ++(7.5cm,0)  node[mycontrol]       (c1c) {}
                        ++(0,-2cm) node[mycontrol]       (c3) {}
                (c1c) edge (c3)
                (i1)    ++(9cm,0) node[mySQG] (r1c) {$R_X(-2\theta)$}
                (i3)    ++(9cm,0) node[mySQG] (h3b) {$H$}
                ;
        \node[draw=black,rounded corners,densely dashed,fit=(c1c)(c3),inner sep=1.5mm](fit1) {};
        \node[above=1mm of fit1.north,anchor=south] (fitl1) {$b$};
      \begin{scope}[on background layer]
          \path       (i1) edge (m1)
                      (i2) edge (m2)
                      (i3) edge (m3);
      \end{scope}
\end{tikzpicture}
\end{center}

\noindent At this point, the second wire is decoupled from the others, so we can remove it and simplify this to

\begin{center}
\begin{tikzpicture}[on grid,node distance=10mm]
          \node at (0,0)      (i1) {$\ket{\psi}$};
          \node[below=of i1]  (i3) {$\ket{0}$};
          \path (i1)  ++(10.5cm,0)     node[mymeasureselect] (m1) {$a$}
                      ++(0,-10mm)      node[mymeasureselect] (m3) {$c$};
          \path 
                (i1)    ++(1.5cm,0) node[mySQG,minimum width=1.2cm] (r1) {$R_X(-\theta)$}
                (i1)    ++(3cm,0)  node[mySQG]       (z1) {$Z^b$}
                (i1)    ++(4.5cm,0) node[mySQG,minimum width=1.2cm] (r1b) {$R_X(\theta)$}
                (i1)    ++(5.5cm,0)  node[mycontrol]       (c1b) {}
                        ++(0,-1cm) node[mytarget]        (t3) {$\bigoplus$}
                (c1b) edge (t3)
                (i3)    ++(6.5cm,0) node[mySQG] (h3) {$H$}
                (i1)    ++(7.5cm,0)  node[mycontrol]       (c1c) {}
                        ++(0,-1cm) node[mycontrol]       (c3) {}
                (c1c) edge (c3)
                (i1)    ++(9cm,0) node[mySQG] (r1c) {$R_X(-2\theta)$}
                (i3)    ++(9cm,0) node[mySQG] (h3b) {$H$}
                ;
        \node[draw=black,rounded corners,densely dashed,fit=(c1c)(c3),inner sep=1.5mm](fit1) {};
        \node[above=1mm of fit1.north,anchor=south] (fitl1) {$b$};
      \begin{scope}[on background layer]
          \path       (i1) edge (m1)
                      (i3) edge (m3);
      \end{scope}
\end{tikzpicture}
\end{center}

\noindent Now we further need to consider cases $b=0,1$. 

For $b=0$, notice that the $R_X(-\theta)$ and $R_X(\theta)$ gates cancel, and the $CZ$ gate is not applied, which leaves the $H$ gates to cancel. 
Thus the full circuit amounts to a CNOT with $1$ as the control and $3$ as the target, so that the output $c$ will amount to a measurement of the input $A$ in the computational basis. 
Since also when $f=0,b=0$ we have that $m=c$, we find that $p_m=p_c=\tr(\Pi^0_m\rho)$ as needed. 

Now consider the $b=1$ case. 
Then the $CZ$ gate is applied, and combined with the $H$ gates it gives a second CNOT. 
The two CNOT gates then cancel. 
Then use 
\begin{align}
    R_X(\theta)ZR_X(-\theta)=R_X(2\theta)Z
\end{align}
which can be verified directly. 
The $R_X(2\theta)$ cancels with the $R_X(-2\theta)$ applied just before the top wire is measured. 
Since also in this case $m=a$, we get that $p_m=p_a=\tr(\Pi^0_m\rho)$, as needed. 
This handles all possibilities when $f=0$. 

We now consider the $f=1$ case. 
Taking $f=1$, the circuit in \cref{fig:splittingangle} simplifies to

\begin{center}
\begin{tikzpicture}[on grid,node distance=10mm]
          \node at (0,0)      (i1) {$\ket{\psi}$};
          \node[below=of i1]  (i2) {$\ket{0}$};
          \node[below=of i2]  (i3) {$\ket{0}$};
          \path (i1)  ++(8.5cm,0)     node[mymeasureselect] (m1) {$a$}
                      ++(0,-10mm)   node[mymeasureselect] (m2) {$b$}
                      ++(0,-10mm)   node[mymeasureselect] (m3) {$c$};
          \path 
                (i1)    ++(1.5cm,0) node[mySQG,minimum width=1.2cm] (r1) {$R_X(-\theta)$}
                (i1)    ++(2.5cm,0)  node[mycontrol]       (c1) {}
                        ++(0,-1cm) node[mytarget]        (t2) {$\bigoplus$}
                (c1) edge (t2)
                (i1)    ++(3.5cm,0) node[mySQG,minimum width=1.2cm] (r1b) {$R_X(\theta)$}
                (i1)    ++(4.5cm,0)  node[mycontrol]       (c1b) {}
                        ++(0,-2cm) node[mytarget]        (t3) {$\bigoplus$}
                (c1b) edge (t3)
                (i2)    ++(5.5cm,0) node[mySQG] (h2) {$H$}
                (i3)    ++(5.5cm,0) node[mySQG] (h3) {$H$}
                (i1)    ++(6.5cm,0)  node[mycontrol]       (c1c) {}
                        ++(0,-1cm) node[mycontrol]       (c2b) {}
                        ++(0,-1cm) node[mycontrol]       (c3) {}
                (c1c) edge (c2b)
                (c2b) edge (c3)
                (i2)    ++(7.5cm,0) node[mySQG] (h2b) {$H$}
                ;
      \begin{scope}[on background layer]
          \path       (i1) edge (m1)
                      (i2) edge (m2)
                      (i3) edge (m3);
      \end{scope}
\end{tikzpicture}
\end{center}

\noindent Then, view the last wire as controlling a $CZ$ gate on the first two wires and apply circuit identity 2 to yield

\begin{center}
\begin{tikzpicture}[on grid,node distance=10mm]
          \node at (0,0)      (i1) {$\ket{\psi}$};
          \node[below=of i1]  (i2) {$\ket{0}$};
          \node[below=of i2]  (i3) {$\ket{0}$};
          \path (i1)  ++(8.5cm,0)     node[mymeasureselect] (m1) {$a$}
                      ++(0,-10mm)   node[mymeasureselect] (m2) {$b$}
                      ++(-2cm,-10mm)   node[mymeasureselect] (m3) {$c$};
          \path 
                (i1)    ++(1.5cm,0) node[mySQG,minimum width=1.2cm] (r1) {$R_X(-\theta)$}
                (i1)    ++(2.5cm,0)  node[mycontrol]       (c1) {}
                        ++(0,-1cm) node[mytarget]        (t2) {$\bigoplus$}
                (c1) edge (t2)
                (i1)    ++(3.5cm,0) node[mySQG,minimum width=1.2cm] (r1b) {$R_X(\theta)$}
                (i1)    ++(4.5cm,0)  node[mycontrol]       (c1b) {}
                        ++(0,-2cm) node[mytarget]        (t3) {$\bigoplus$}
                (c1b) edge (t3)
                (i2)    ++(5.5cm,0) node[mySQG] (h2) {$H$}
                (i3)    ++(5.5cm,0) node[mySQG] (h3) {$H$}
                (i1)    ++(6.5cm,0)  node[mycontrol]       (c1c) {}
                        ++(0,-1cm) node[mycontrol]       (c2b) {}
                (c1c) edge (c2b)
                (i2)    ++(7.5cm,0) node[mySQG] (h2b) {$H$}
                ;
        \node[draw=black,rounded corners,densely dashed,fit=(c1c)(c2b),inner sep=1.5mm](fit1) {};
        \node[above=1mm of fit1.north,anchor=south] (fitl1) {$c$};

      \begin{scope}[on background layer]
          \path       (i1) edge (m1)
                      (i2) edge (m2)
                      (i3) edge (m3);
      \end{scope}
\end{tikzpicture}
\end{center}

\noindent Then use circuit identity 3 to obtain

\begin{center}
\begin{tikzpicture}[on grid,node distance=10mm]
          \node at (0,0)      (i1) {$\ket{\psi}$};
          \node[below=of i1]  (i2) {$\ket{0}$};
          \node[below=of i2]  (i3) {$\ket{0}$};
          \path (i1)  ++(9cm,0)       node[mymeasureselect] (m1) {$a$}
                      ++(0,-10mm)     node[mymeasureselect] (m2) {$b$}
                (i1)  ++(2.5cm,-20mm) node[mymeasureselect] (m3) {$c$};
          \path 
                (i1)    ++(1.5cm,0) node[mySQG,minimum width=1.2cm] (r1) {$R_X(-\theta)$}
                (i1)    ++(2.5cm,0)  node[mycontrol]       (c1) {}
                        ++(0,-1cm) node[mytarget]        (t2) {$\bigoplus$}
                (c1) edge (t2)
                (i1)    ++(3.5cm,0) node[mySQG,minimum width=1.2cm] (r1b) {$R_X(\theta)$}
                (i1)    ++(5cm,0)  node[mySQG]       (g1) {$Z^c$}
                (i2)    ++(6cm,0) node[mySQG] (h2) {$H$}
                (i3)    ++(1.5cm,0) node[mySQG] (h3) {$H$}
                (i1)    ++(7cm,0)  node[mycontrol]       (c1c) {}
                        ++(0,-1cm) node[mycontrol]       (c2b) {}
                (c1c) edge (c2b)
                (i2)    ++(8cm,0) node[mySQG] (h2b) {$H$}
                ;
        \node[draw=black,rounded corners,densely dashed,fit=(c1c)(c2b),inner sep=1.5mm](fit1) {};
        \node[above=1mm of fit1.north,anchor=south] (fitl1) {$c$};

      \begin{scope}[on background layer]
          \path       (i1) edge (m1)
                      (i2) edge (m2)
                      (i3) edge (m3);
      \end{scope}
\end{tikzpicture}
\end{center}

\noindent Now remove the third wire, which has decoupled, and re-write the controlled $CZ$ as a CNOT, 

\begin{center}
\begin{tikzpicture}[on grid,node distance=10mm]
          \node at (0,0)      (i1) {$\ket{\psi}$};
          \node[below=of i1]  (i2) {$\ket{0}$};
          \path (i1)  ++(7cm,0)       node[mymeasureselect] (m1) {$a$}
                      ++(0,-10mm)     node[mymeasureselect] (m2) {$b$};
          \path 
                (i1)    ++(1.5cm,0) node[mySQG,minimum width=1.2cm] (r1) {$R_X(-\theta)$}
                (i1)    ++(2.5cm,0)  node[mycontrol]       (c1) {}
                        ++(0,-1cm) node[mytarget]        (t2) {$\bigoplus$}
                (c1) edge (t2)
                (i1)    ++(3.5cm,0) node[mySQG,minimum width=1.2cm] (r1b) {$R_X(\theta)$}
                (i1)    ++(5cm,0)  node[mySQG]       (g1) {$Z^c$}
                (i1)    ++(6cm,0)  node[mycontrol]       (c1c) {}
                        ++(0,-1cm) node[mytarget]       (t2b) {$\bigoplus$}
                (c1c) edge (t2b)
                ;
        \node[draw=black,rounded corners,densely dashed,fit=(c1c)(t2b),inner sep=1.5mm](fit1) {};
        \node[above=1mm of fit1.north,anchor=south] (fitl1) {$c$};

      \begin{scope}[on background layer]
          \path       (i1) edge (m1)
                      (i2) edge (m2);
      \end{scope}
\end{tikzpicture}
\end{center}

\noindent Now use circuit identity 2 again, this time with the first wire as the control, 

\begin{center}
\begin{tikzpicture}[on grid,node distance=10mm]
          \node at (0,0)      (i1) {$\ket{\psi}$};
          \node[below=of i1]  (i2) {$\ket{0}$};
          \path (i1)  ++(6cm,0)       node[mymeasureselect] (m1) {$a$}
                      ++(0,-10mm)     node[mymeasureselect] (m2) {$b$};
          \path 
                (i1)    ++(1.5cm,0) node[mySQG,minimum width=1.2cm] (r1) {$R_X(-\theta)$}
                (i1)    ++(2.5cm,0)  node[mycontrol]       (c1) {}
                        ++(0,-1cm) node[mytarget]        (t2) {$\bigoplus$}
                (c1) edge (t2)
                (i1)    ++(3.5cm,0) node[mySQG,minimum width=1.2cm] (r1b) {$R_X(\theta)$}
                (i1)    ++(5cm,0)  node[mySQG]       (g1) {$Z^c$}
                (i2)    ++(5cm,0)  node[mySQG]       (g2) {$X^{a\cdot c}$}
                ;
      \begin{scope}[on background layer]
          \path       (i1) edge (m1)
                      (i2) edge (m2);
      \end{scope}
\end{tikzpicture}
\end{center}

\noindent We can see that this amounts to measuring the first qubit in the $R_X(\theta)$ rotated basis, and recording the outcome in the second output. 
The output will be flipped if $a\cdot c=1$. 
Thus outputting $m=b\oplus(a\cdot c)$ gives a measurement of the first qubit in the $R_X(\theta)$ rotated basis, so that $p_m=\tr(\Pi^1_m\rho)$, as needed. 
\end{proof}

\subsection{From an arbitrary angle to \texorpdfstring{$f$}{TEXT}-measure\texorpdfstring{$(I,H)$}{TEXT}}\label{sec:QECC}

By \cref{lemma:anglecharacterization} any single qubit $f$-measure$(U,U')$ protocol is equivalent to an $f$-measure$(I,R_Y(\theta))$ protocol for some $\theta\in[0,\pi/2]$. 
Any such protocol for which $\theta\neq 0$ will be called non-trivial.
The following lemma then shows that any non-trivial single qubit $f$-measure$(U,U')$ protocol implies $f$-measure$(I,H)$ under $O(1)$ oracle reductions.

\begin{lemma}\label{lemma:backtofH}
    The $f$-measure$(I, H)$ task can be implemented $\epsilon$-correctly using $m=O(1)$ instances of a single qubit $f$-measure$(U,U')$ oracle, whenever $f$-measure$(U,U')$ is non-trivial.
\end{lemma}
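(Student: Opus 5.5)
By \cref{lemma:anglecharacterization} I may replace the oracle by $f$-measure$(I,R_X(\theta))$ for some $\theta\in(0,\pi/2]$. The target $f$-measure$(I,H)$ is, by the same lemma and \cref{remark:fmeasurepreprocessing}, equivalent to $f$-measure$(I,R_X(\pi/2))$ (equivalently $f$-measure$(I,R_Y(\pi/2))$, which is the Hadamard basis up to relabelling). So the plan is to use $O(1)$ copies of the $\theta$-oracle to build an oracle whose angle is $\pi/2$ modulo $\pi$, either exactly or up to an error that can be made small.

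The tools are the following. \cref{lemma:addingangles} turns $k$ copies into $f$-measure$(I,R_X(k\theta))$. The footnoted $Z$-conjugation gives $-\theta$ for free. The $\pi$-periodicity from the proof of \cref{lemma:anglecharacterization} lets me work modulo $\pi$. Hence every angle $k\theta \bmod \pi$ with $k\in\mathbb{Z}$ is available exactly, at a cost of $|k|$ oracle calls.

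I split into two cases.

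\emph{Case 1: $\theta/\pi$ is irrational.} The multiples $k\theta \bmod \pi$ are dense, so for every $\epsilon>0$ there is a $k=k(\theta,\epsilon)$, independent of $n$, with $|k\theta-\pi/2| \bmod \pi \le \epsilon'$.
\begin{itemize}
\item The resulting oracle measures in a basis that differs from the $R_X(\pi/2)$ basis by a rotation of angle at most $\epsilon'$.
\item Its measure-and-copy channel is therefore $O(\epsilon')$-close to the target channel on every input state.
\item \cref{lemma:densitytoDN} (with $d=2$) then converts this into a diamond-norm bound.
\item \cref{remark:diamondadditive} adds the errors of the $k$ oracle calls, giving $(k\epsilon_{\text{oracle}}+O(\epsilon'))$-correctness.
\end{itemize}
This gives $m=k=O(1)$ in $n$.

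\emph{Case 2: $\theta=\pi p/q$ rational with $q$ not a multiple of $2$ in the required way.} Here the multiples $k\theta$ form a finite set that may miss $\pi/2$; for example $\theta=\pi/3$ only reaches $\{0,\pi/3,2\pi/3\}$. This is the step I expect to be the main obstacle. Angle addition alone cannot escape the cyclic group generated by $\theta$, and \cref{lemma:anglesplitting} cannot be used since it already presupposes $f$-measure$(I,H)$ oracles.

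To handle Case 2 I would use many independent-looking measurements, in the spirit of a code. The steps are:
\begin{enumerate}
\item Pick an available angle $\varphi=k\theta$ with $\varphi \bmod \pi$ as close to $\pi/2$ as possible, so that $\varphi$ is bounded away from $0$.
\item Encode the input qubit with Clifford pre-processing (CNOT fan-out, as in the proof of \cref{lemma:addingangles}) into an $N$-qubit code. Choose the code so that the logical computational-basis outcome is a deterministic function of transversal $Z$ outcomes, and the logical conjugate-basis outcome is a majority or decoding function of transversal measurements in the $\varphi$-rotated basis. Since $\varphi\neq 0$, each such rotated measurement is correlated with the logical $X$ value with a constant bias.
\item Bound the decoding error by Hoeffding's inequality (\cref{thm:hoeffding}); it is $e^{-\Omega(N)}$, so $N=O(\log(1/\epsilon))$ suffices, still independent of $n$.
\item Let the players apply, after learning $f(x,y)$ in the second round, the decoding function appropriate to that value of $f$. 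Both outputs agree because all oracle outcomes are copied to both sides.
\end{enumerate}

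The delicate part of Case 2 is checking two properties of the encoding simultaneously. First, in the $f=0$ branch the transversal computational-basis measurements must reproduce the input's $Z$ statistics exactly. Second, in the $f=1$ branch the rotated measurements must not disturb the logical information in a way that biases the decoding. I would first try the fan-out encoding $\alpha\ket{0}+\beta\ket{1}\mapsto \alpha\ket{0}^{\otimes N}+\beta\ket{1}^{\otimes N}$ and compute the outcome distribution directly, in the same way as was done in the proof of \cref{lemma:addingangles}. If this fails, I would interleave Hadamards so that the conjugate basis is the one that is repetition-encoded.
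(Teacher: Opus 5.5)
Your Case 1 is sound. Exact angle addition reaches some $k\theta$ within $\epsilon'$ of $\pi/2$ modulo $\pi$, and a continuity estimate finishes. But, as you note, Case 1 cannot handle angles like $\theta=\pi/3$. Your Case 2 only describes the kind of code you want without exhibiting one, and both concrete encodings you name fail as stated.

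The computational-basis fan-out $\alpha\ket{0}+\beta\ket{1}\mapsto\alpha\ket{0}^{\otimes N}+\beta\ket{1}^{\otimes N}$ stores the conjugate-basis information entirely in the coherence between $\ket{0}^{\otimes N}$ and $\ket{1}^{\otimes N}$. Under product measurements in the basis $\{R_X(\varphi)\ket{s}\}$, each qubit, summed over its two outcomes, contributes a factor $\sum_s|\langle 0|R_X(\varphi)|s\rangle|\,|\langle 1|R_X(\varphi)|s\rangle|=|\sin\varphi|$. So the total interference weight is at most $|\sin\varphi|^N$, and the two target states become \emph{less} distinguishable as $N$ grows (unless $\varphi\equiv\pi/2$, where no encoding is needed). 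Your fallback, interleaving Hadamards so that $\ket{\pm}\mapsto\ket{\pm}^{\otimes N}$, also fails with the $R_X(\varphi)$ oracles you are using. The states $\ket{\pm}$ are eigenvectors of $R_X(\varphi)$, so every outcome in the rotated basis is uniformly random, and the ``constant bias'' you invoke in step 2 is exactly zero. The repetition-encoded basis has to lie in the plane swept by the oracle's rotation.

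The missing ingredient is the paper's construction. First pass to $f$-measure$(I,R_Y(\theta))$ via \cref{lemma:anglecharacterization}; equivalently, repetition-encode the $Y$ basis in your $R_X$ frame. Then copy $\ket{\pm}\mapsto\ket{\pm}^{\otimes m}$.
For $f=0$, the encoded state is $\alpha$ times the uniform superposition of even-weight strings plus $\beta$ times that of odd-weight strings. Hence the parity of the $Z$ outcomes reproduces the computational-basis measurement exactly.
For $f=1$, each outcome is $0$ with probability $p_{0|+}=\tfrac12(1+\sin\theta)$ on $\ket{+}$ and $1-p_{0|+}$ on $\ket{-}$, so thresholding at $\lambda m$ with $\tfrac12<\lambda<p_{0|+}$ decodes.

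Even then, Hoeffding only controls the diagonal terms $|a|^2\tr(\Pi_0^1\ketbra{+}{+}^{\otimes m})$ and $|b|^2\tr(\Pi_0^1\ketbra{-}{-}^{\otimes m})$. The threshold projector is not diagonal in $\{\ket{+}^{\otimes m},\ket{-}^{\otimes m}\}$, so you also need to bound the cross terms $\tr(\Pi_0^1\ketbra{+}{-}^{\otimes m})$. The paper bounds these by $\sum_{k\ge\lambda m}\binom{m}{k}2^{-m}$ using $\sqrt{p_{0|+}p_{0|-}}\le\tfrac12$, then lifts to the diamond norm with \cref{lemma:densitytoDN}. Because this argument works for every $\theta\not\equiv 0\pmod{\pi}$, your irrational/rational split and the search for $\varphi$ near $\pi/2$ become unnecessary.
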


\begin{proof}
We first use that (by assumption) $f$-measure$(U,U')$ is equivalent under $O(1)$ reductions to $f$-measure$(I,R_Y(\theta))$ with $\theta\neq 0$. 
For simplicity of our notation we define $R_Y(\theta)=V$.
Define
\begin{align}
    p_{0|+}=& |\bra{0}V^\dagger\ket{+}|^2,\qquad \quad  p_{1|+}= |\bra{1}V^\dagger\ket{+}|^2, \nonumber \\
    p_{0|-}=& |\bra{0}V^\dagger\ket{-}|^2,\qquad \quad  p_{1|-}= |\bra{1}V^\dagger\ket{-}|^2.
\end{align}
Note that $p_{0|+}+p_{1|+}=1$, $p_{1|-}+p_{0|-}=1$, $p_{0|+}+p_{0|-}=1$, $p_{1|+}+p_{1|-}=1$.\footnote{Note that only three of these equations are independent, the fourth can be derived from the previous three.} 
These equations leave just one free value, which we can take to be $p_{0|+}$. 
We focus on the case where $p_{0|+}>p_{1|+}$, which also implies $p_{0|-}<p_{1|-}$, but the remaining cases (as long as there is no equality, which is guaranteed by non-triviality) can be handled similarly.

\vspace{0.2cm}
\noindent \textbf{Protocol:} 
Alice copies the quantum input $Q$ in the $\ket{\pm}$ basis $m$ times,
\begin{align}
    \ket{+}&\rightarrow \ket{+}^{\otimes m},\nonumber \\
    \ket{-}&\rightarrow \ket{-}^{\otimes m}.
\end{align}
Label the output systems from this operation as $Q_1 \dots Q_m$. 
Each of the $m$ qubits $Q_i$ are input to one of the $f$-measure$(I,V)$ oracles, producing measurement outcomes $m_i$ which are revealed on both sides in the second round. 
Alice and Bob then determine their outputs as follows: 
\begin{itemize}
    \item If $f(x,y)=0$, they output the parity of the $m_i$, $\bigoplus_{i} m_i$. 
    \item If $f(x,y)=1$, then they output $0$ if they get at least $(p_{0|+}-\delta)m\equiv \lambda m$ outcomes that are $V\ket{0}$, and $1$ otherwise. $\delta>0$ is a parameter that can be chosen arbitrarily, with the restriction that $\lambda>1/2$ should still hold.
\end{itemize}

\vspace{0.2cm}
\noindent \textbf{Correctness:} We consider what occurs in the $f=0$, $f=1$ cases separately.

First suppose that $f=0$. 
Each qubit gets measured in the computational basis, so let us look at the input state in that basis, 
\begin{align}
    \ket{\psi} &= \alpha \ket{0}+\beta\ket{1},\nonumber \\
    &= \frac{1}{\sqrt{2}}\alpha (\ket{+}+\ket{-})+\frac{1}{\sqrt{2}}\beta (\ket{+}-\ket{-}).
\end{align}
Define $wt(x)=\sum_i x_i$ to be the Hamming weight of $x$, and say that $x$ is even parity if $\sum_i x_i=0 \,\, \text{mod}\,\, 2$, and odd otherwise. 
After the pre-processing step where we copy in the Hadamard basis, the input state becomes 
\begin{align}
    \ket{\psi} &\rightarrow \alpha \left(\frac{1}{\sqrt{2^{m-1}}} \sum_{x:x\,\text{even}} \ket{x} \right)+\beta\left(\frac{1}{\sqrt{2^{m-1}}} \sum_{x:x\,\text{odd}}(-1)\ket{x}\right).
\end{align}
Thus Alice and Bob get an even parity outcome with probability $|\alpha|^2$ and odd parity outcome with probability $|\beta|^2$.
Since $|\alpha|^2=\tr(\Pi^0_0 \ketbra{\psi}{\psi})$ and $|\beta|^2=\tr(\Pi^0_1 \ketbra{\psi}{\psi})$, this performs the measure and copy channel in the computational basis. 

Next consider the case when $f=1$. 
Now write the input state $\ket{\psi}$ in the $X$ basis, 
\begin{align}
    \ket{\psi} = a\ket{+}+b\ket{-}.
\end{align}
And after the copying step, this becomes
\begin{align}
    \ket{\Psi} = a\ket{+}^{\otimes m} + b\ket{-}^{\otimes m}.
\end{align}
Now, each of the $Q_i$ are input to an $f$-measure$(I,V)$ oracle, so are measured in the $V$ basis.
Let the $V$ basis states be $\ket{\psi_0}=V\ket{0}$, $\ket{\psi_1}=V\ket{1}$.
Alice and Bob output $0$ if more than $\lambda m$ of the outputs from these oracles are $0$, and $1$ otherwise, so they are measuring the projectors 
\begin{align}
    \Pi^1_0&=\sum_{k\geq m \lambda}\sum_{\sigma} S_{\sigma}\left(\ketbra{\psi_0}{\psi_0}^{\otimes k} \otimes \ketbra{\psi_1}{\psi_1}^{\otimes (m-k)}\right), \nonumber \\
    \Pi^1_1&=I-\Pi^1_0.
\end{align}
Here $\sigma$ denotes a permutation, and $S_\sigma$ permutes the $m$ tensor factors of the Hilbert space.
We sum over permutations that act distinctly, i.e.\ over the $\binom{m}{k}$ permutations that bring the first $k$ factors to each possible subset of size $k$ of the $m$ factors. 

Returning the outcome of this measurement means that Alice and Bob implement the channel
\begin{align}
    \mathcal{N}'(\rho)=\sum_{i} \tr(\Pi_i^1\mathcal{C}(\rho)) \ketbra{i}{i}\otimes \ketbra{i}{i}
\end{align}
where $\mathcal{C}$ is the copying operation in the $X$ basis.
We would like that this converges, as $m\rightarrow \infty$, to the channel
\begin{align}
    \mathcal{N}(\rho)=\sum_{i} \tr(\Pi_i^H\rho) \ketbra{i}{i}\otimes \ketbra{i}{i}
\end{align}
where $\Pi_i^H$ are the projectors for the Hadamard basis. 
We will show below that 
\begin{align}\label{eq:gammabounds}
    |\tr(\Pi_0^1\mathcal{C}(\rho)) - \tr(\Pi_0^H\rho)| \leq \gamma_m, \nonumber \\
    |\tr(\Pi_1^1\mathcal{C}(\rho)) - \tr(\Pi_1^H\rho)| \leq \gamma_m,
\end{align}
where the function $\gamma_m$ goes to zero as $m\rightarrow \infty$. 
From this, it follows that for all density matrices $\rho$, 
\begin{align}
    \Vert \mathcal{N}'(\rho)-\mathcal{N}(\rho)\Vert_1 \leq 2\gamma_m
\end{align}
and then applying \cref{lemma:densitytoDN}, we can see that this implies the diamond norm satisfies 
\begin{align}
    \Vert \mathcal{N}'-\mathcal{N}\Vert_\diamond \leq 4d_A\gamma_m.
\end{align}

Towards showing this, first note that it suffices to prove an upper bound on $|\tr(\Pi_0^1\mathcal{C}(\rho)) - \tr(\Pi_0^H\rho)|$, since
\begin{align}
    &|\tr(\Pi_1^1\mathcal{C}(\rho)) - \tr(\Pi_1^H\rho)|
    \\
    =&|\tr((\Pi_1^1-I)\mathcal{C}(\rho))+\tr{\mathcal{C}(\rho)} - \tr((\Pi_1^H-I)\rho)-\tr{\rho}|
    \\
    =&
    |\tr((\Pi_1^1-I)\mathcal{C}(\rho)) - \tr((\Pi_1^H-I)\rho)|
    \\
    =&
    |\tr(\Pi_0^1\mathcal{C}(\rho)) - \tr(\Pi_0^H\rho)|
\end{align}
Thus we focus on just $|\tr(\Pi_0^1\mathcal{C}(\rho)) - \tr(\Pi_0^H\rho)|$.

It remains then to show that $\tr(\Pi_i^1\mathcal{C}(\rho))$ converges to $\tr(\Pi_i^H\rho)$ and to determine $\gamma_m$.  
We look at $\tr(\Pi_0^1\mathcal{C}(\rho))$ first. 
For a pure state input we have that the copied state is $\ket{\Psi}=a\ket{+}^{\otimes m}+b\ket{-}^{\otimes {m}}$, then the probability of outcome 0 is
\begin{align}\label{eq:Pi01plus}
    \tr(\Pi_0^1\ketbra{\Psi}{\Psi}) &= |a|^2 \tr(\Pi_0^1 \ketbra{+}{+}^{\otimes m}) + |b|^2 \tr(\Pi_0^1 \ketbra{-}{-}^{\otimes m}) \nonumber \\
    &\qquad + ab^* \,\tr(\Pi^1_0 \ketbra{+}{-}^{\otimes m}) + a^*b \,\tr(\Pi^1_0 \ketbra{-}{+}^{\otimes m}).
\end{align}
Focusing on the first term, we have
\begin{align}
    \tr(\Pi_0^1 \ketbra{+}{+}^{\otimes m}) = \sum_{k\geq \lambda m} \binom{m}{k} p_{0|+}^kp_{1|+}^{m-k}.
\end{align}
Since $p_{0|+}>p_{1|+}$, this converges to $1$ as $m\rightarrow \infty$: intuitively this is because the above is a binomial distribution with peak at $mp_{0|+}$. 
Since we sum from somewhere just below this peak $m\lambda=m(p_{0|+}-\delta)$, the sum includes most of the distribution. 
Being more precise, the size of the remainder term can be bounded using Hoeffding's inequality.
We define the random variable $X_i\in\{0,1\}$, with $Pr[X_i=0]=p_{1|+}$, $Pr[X_i=1]=p_{0|+}$. 
Then define $S_m=\sum_{i=1}^m X_i$, and notice
\begin{align}
    Pr[S_m=\ell]=\binom{m}{\ell} p_{0|+}^\ell p_{1|+}^{m-\ell}
\end{align}
so that also
\begin{align}
    Pr[S_m\geq \lambda m]=\sum_{k\geq \lambda m} \binom{m}{k} p_{0|+}^kp_{1|+}^{m-k}
\end{align}
Now use Hoeffding's inequality
\begin{align}
    Pr[S_m\geq \lambda m] &= Pr[S_m \geq (p_{0|+}-\delta)m] \nonumber \\
    &= Pr[-S_m + p_{0|\mathord{+}}m\leq \delta m] \nonumber \\
    &= 1-Pr[-S_m + p_{0|+}m > \delta m] \nonumber \\
    &\geq 1-Pr[|S_m - p_{0|+}m| \geq \delta m] \nonumber \\
    &\geq 1-2\exp\left(-2\delta^2m \right)
\end{align}
so that 
\begin{align}
    \tr(\Pi_0^1 \ketbra{+}{+}^{\otimes m}) \geq 1- 2\exp\left(-2\delta^2m \right).
\end{align}

Now return to expression \eqref{eq:Pi01plus}, and consider the second term. 
This is 
\begin{align}
    \tr(\Pi_0^1 \ketbra{-}{-}^{\otimes m}) = \sum_{k\geq \lambda m} \binom{m}{k} p_{0|-}^kp_{1|-}^{m-k}
\end{align}
Now since $p_{0|-}<p_{1|-}$, this will converge to zero: now we are summing over a binomial distribution starting from $k=\lambda m$ with $\lambda>1/2$, but the peak is below $m/2$. 
Again we can upper bound this term with Hoeffding's inequality via a similar argument as with the first term, obtaining an upper bound of $\exp\left(-2m (p_{0|-}-\lambda)^2 \right)$.
Since $p_{0|-}<1/2, \lambda>1/2$ we know this exponent is non-zero. 

Finally we need to handle the third and fourth terms, which we claim also go to zero. 
Consider the third term. 
This is, without the $ab^*$ factor, 
\begin{align}\label{eq:34termbound}
    \left|\tr(\Pi^1_0 \ketbra{+}{-}^{\otimes m})\right| &= \left|\sum_{k\geq \lambda m} \binom{m}{k} \braket{\psi_0}{+}^k\braket{-}{\psi_0}^k \braket{\psi_1}{+}^{m-k}\braket{-}{\psi_1}^{m-k}\right| \nonumber \\
    &\leq \sum_{k\geq \lambda m} \binom{m}{k} p_{0|+}^{k/2}p_{0|-}^{k/2} p_{1|+}^{(m-k)
    /2}p_{1|-}^{(m-k)/2} \nonumber \\
    &\leq  \sum_{k\geq \lambda m} \binom{m}{k} \left(\frac{1}{2}\right)^m \nonumber \\
    &\leq \exp\left(-2m(\lambda-1/2)^2 \right) \,,
\end{align}
where the first inequality follows by bringing the absolute value inside the sum, the second follows because $\sqrt{x(1-x)}\leq 1/2$ when $0\leq x\leq 1$, and the third follows by applying a standard tail bound to the binomial distribution.
The fourth term satisfies the same upper bound. 

Combining our bounds on each term of equation \eqref{eq:Pi01plus}, we have that 
\begin{align}
    \tr(\Pi_0^1\ketbra{\Psi}{\Psi}) &\geq |a|^2(1-\exp(-2\delta^2 m)) - 2\exp(-2m(\lambda-1/2)^2) \nonumber \\
    &\geq |a|^2-\exp(-2\delta^2 m) - 2\exp(-2m(\lambda-1/2)^2) \,,
\end{align}
where we can drop the $|b|^2$ term because it is positive, and we used that $|a^*b|\leq 1$ and equation \eqref{eq:34termbound} to bound the sum of the third and fourth terms.
We use $|a|^2\leq 1$ in the last line. 
We can also upper bound the same quantity, 
\begin{align}
    \tr(\Pi_0^1\ketbra{\Psi}{\Psi})\leq |a|^2 + \exp(-2m(p_{0|-}-\lambda)^2) + 2\exp(-2m(\lambda-1/2)^2).
\end{align}
Thus choosing 
\begin{align}
    \gamma_m=\exp(-2\delta^2 m) + 2\exp(-2m(\lambda-1/2)^2) + \exp(-2m(p_{0|-}-\lambda)^2)
\end{align}
we obtain
\begin{align}
    ||a|^2-\tr(\Pi^1_0\mathcal{C}(\rho))|\leq \gamma_m \,,
\end{align}
but $|a|^2=\tr(\Pi_0^H\rho)$, so this is exactly the first expression of equation \eqref{eq:gammabounds}. 
If we wish to implement $f$-measure$(I,H)$ $\epsilon$-correctly, we need $\gamma_m\leq \epsilon$, so we need $m=\Omega(1/\log \epsilon)$.
Notice that because $\gamma_m$ does not depend on $n$, we can make the error as small as we like while choosing $m$ independent of $n$. 
Thus this procedure gives an $O(1)$ oracle reduction. 

The above handles the case where $p_{0|+}>p_{1|+}$, if instead $p_{0|+}<p_{1|+}$ we have Alice and Bob behave as follows. 
Their behaviour is unchanged with $f(x,y)=0$. 
If $f(x,y)=1$, they output $0$ if more than $(p_{1|+}-\delta)m$ outcomes are $V\ket{1}$, and $1$ otherwise.
\end{proof}

\subsection{All single qubit \texorpdfstring{$f$}{TEXT}-measure protocols are approximately equivalent}\label{sec:fmeasureequivalence}

Finally, we are ready to collect our results from this section to prove that all single qubit $f$-measure$(U,V)$ tasks are equivalent under $O(1)$ reductions. 

\fmeasureEquivalence*

\begin{proof}
By \cref{lemma:anglecharacterization}, $f$-measure$(U_1,V_1)$ is equivalent under $O(1)$ oracle reductions to $f$-measure$(I,R_X(\theta_1))$ for some angle $\theta_1$. 
Similarly, $f$-measure$(U_2,V_2)$ is equivalent under $O(1)$ oracle reductions to $f$-measure$(I,R_X(\theta_2))$, for another angle $\theta_2$. 

If $\theta_1=\theta_2$ we are done.  

If $\theta_1\neq \theta_2$ we proceed as follows. 
We will show that $f$-measure$(I,R_X(\theta_1))$ implies $f$-measure$(I,R_X(\theta_2))$; the reverse direction uses the same argument. 
The first step is to invoke \cref{lemma:backtofH}, which shows that $f$-measure$(I,R_X(\theta_1))$ implies $f$-measure$(I,H)$.
Then, using both $f$-measure$(I,H)$ and $f$-measure$(I,R_X(\theta_1))$ oracles, we can use \cref{lemma:anglesplitting} to construct $f$-measure$(I,R_X(\theta_1/2^k))$ oracles for any non-negative integer $k$, again using an $O(1)$ oracle construction. 
Then, we express angle $\theta_2$ as 
\begin{align}
    \theta_2 &=  \theta_1 \sum_{k=0}^t \frac{c_k}{2^k} + \epsilon, \nonumber \\
    &= \tilde{\theta}_2 + \epsilon.
\end{align}
with $c_k\in \{0,1\}$, $\epsilon<1/2^t$, and the last equality defining $\tilde{\theta}_2$. 
We then use the angle addition lemma (\cref{lemma:addingangles}) repeatedly to obtain $f$-measure$(I,R_X(\tilde{\theta}_2))$. 

The total number of oracle calls used in this procedure is set by $\theta_1, \theta_2$, and $\epsilon$. 
More specifically, each time we halve an angle, we use 2 additional oracle calls to $f$-measure$(I,H)$. 
Thus to implement a single $\theta_1/2^k$ angle measurement oracle, we need $m_k=2k$ oracle calls, so the total number of oracle calls is then
\begin{align}
    m=\sum_{k=1}^t 2kc_k \leq \sum_{k=1}^t 2k = t(t+1) \approx t^2.
\end{align}
To obtain $\epsilon$-accuracy, we need $\epsilon<1/2^t$, so can take $t=\log(1/\epsilon)$. 
Thus $m$ is bounded above by $[\log (1/\epsilon)]^2$. 
Thus if we take $m$ as above we can achieve any desired $\epsilon$. 
Since this does not depend on $n$, this achieves a $O(1)$ oracle reduction. 
\end{proof}

Note that \cref{thm:fmeasureEquivalence} holds when using $\epsilon$-approximate implementations of the $f$-measure$(U_1,V_1)$, in particular giving a $\delta$-approximate implementation of $f$-measure$(U_2,V_2)$.
However, we need to require $\epsilon$ is lower than some threshold error $\epsilon_0$, which will depend on $\delta,\theta_1,\theta_2$, since the required number of oracle calls depends on these parameters, and therefore by \cref{remark:diamondadditive} the additional error incurred also depends on these parameters.

Note that for any given rotation angle, a $f$-measure$(I,R_X(\theta))$ protocol is only non-trivial below a certain error threshold. 
This is because for small enough angles and allowing large enough errors, the task can be performed by simply measuring in the basis halfway in between the $f(x,y)=0$ and $f(x,y)=1$ bases and returning the output.

We finish by giving some consequences of \cref{thm:fmeasureEquivalence} for security of QPV schemes. 
First, our $O(1)$ reduction means that all single-qubit projective $f$-measure tasks inherit the upper bounds known previously for $f$-route, including sub-exponential attacks for all functions.

\begin{corollary}\label{cor:singlequbitconsequences}
All non-trivial single-qubit projective $f$-measure tasks have subexponential-size NLQC implementations. 
Consequently, when these NLQCs are viewed as attacks on QPV protocols, all corresponding single-qubit projective $f$-measure QPV protocols have subexponential entanglement attacks. 
\end{corollary}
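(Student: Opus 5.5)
The plan is to chain \cref{thm:fmeasureEquivalence} with the known subexponential upper bound for $f$-route, using \cref{thm:frouteTOfmeasure} as the bridge. Concretely, we start from the result of \cite{allerstorfer2024relating} that $f$-route can be implemented $\epsilon$-correctly for any $f:\{0,1\}^n\times\{0,1\}^n\to\{0,1\}$ using $2^{O(\sqrt{n\log n})}$ EPR pairs (with mild $\log(1/\epsilon)$-type dependence on the error). Since $\neg f$ is another Boolean function of the same size, the same bound holds for $\neg f$-route. By \cref{thm:frouteTOfmeasure}, more precisely by the oracle implication from one $f$-route and one $\neg f$-route instance, we obtain an $\epsilon$-correct NLQC for $f$-measure$(I,H)$ using $2^{O(\sqrt{n\log n})}$ entanglement.

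Next we invoke \cref{thm:fmeasureEquivalence} with $f$-measure$(U_1,V_1)=f$-measure$(I,H)$, which is non-trivial since $H$ has $ZXZ$ angle $\theta=\pi/2\neq 0$, and $f$-measure$(U_2,V_2)$ the target non-trivial task. This yields an implementation of the target using $m=m(\delta,\theta_2)$ oracle calls to $f$-measure$(I,H)$, with $m$ independent of $n$. In fact, \cref{lemma:anglecharacterization}, \cref{lemma:anglesplitting} and \cref{lemma:addingangles} only ever need $f$-measure$(I,H)$ and the halving construction starting from $\theta_1=\pi/2$. We replace each oracle call by the NLQC from the first step, run in parallel. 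The resource state is then the tensor product of $m$ copies, so the total entanglement is $m\cdot 2^{O(\sqrt{n\log n})}=2^{O(\sqrt{n\log n})}$ for fixed target accuracy, plus $O(1)$ ancilla. The preprocessing is local to Alice and the postprocessing is a classical function of the outcomes and $f(x,y)$, which is known to both parties in the second round. Thus the composite remains a single-round NLQC.

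The main obstacle is error bookkeeping. By \cref{remark:diamondadditive}, if each $f$-measure$(I,H)$ call is $\epsilon'$-correct, the composite has error at most $m\epsilon'$ plus the intrinsic approximation error of the reduction (the $1/2^t$ angle truncation). To reach total error $\delta$, we choose $t=O(\log(1/\delta))$ and $\epsilon'=\delta/(2m)$. Since $m$ is polylogarithmic in $1/\delta$ and the $f$-route cost depends only mildly on $\epsilon'$, the overall cost stays subexponential in $n$. The QPV statement then follows immediately: a cheating coalition simply runs this NLQC, so it needs only subexponential entanglement.
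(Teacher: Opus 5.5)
Your proposal is correct and follows the paper's proof. You compose the subexponential $f$-route protocol of \cite{allerstorfer2024relating}, applied to both $f$ and $\neg f$, with \cref{thm:frouteTOfmeasure} to obtain $f$-measure$(I,H)$, and then with \cref{thm:fmeasureEquivalence} to reach any non-trivial single-qubit projective $f$-measure task with $n$-independent overhead; your explicit error bookkeeping via \cref{remark:diamondadditive} and the observation that starting from $\theta_1=\pi/2$ avoids \cref{lemma:backtofH} just spell out what the paper leaves implicit.
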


\begin{proof}
By \cref{thm:frouteTOfmeasure}, $f$-route and $\neg f$-route imply $f$-measure$(I,H)$ with constant overhead. 
By \cref{thm:fmeasureEquivalence}, $f$-measure$(I,H)$ implies every non-trivial single-qubit projective $f$-measure task with constant overhead. 
Thus the subexponential NLQC implementation of $f$-route \cite{allerstorfer2024relating}, which also applies to $\neg f$, gives the claim by composition. 
\end{proof}

Further, single-qubit projective $f$-measure tasks inherit the entanglement upper bound for $f$-route based on span-programs \cite{cree2023code}.

\section{Controlled Clifford operations}\label{sec:controlledCliffords}

In this section we show that $f$-measure$(I,H)$ implies a number of multi-qubit tasks, for instance the $f$-controlled application of an arbitrary Clifford unitary. 
We first show that $f$-measure$(I,H)$ is equivalent to an $f$-controlled measurement in the Bell basis. 
Then, exploiting the ability to measure controllably in the Bell basis, we can use teleportation-like strategies to control where input states move, and thereby control which unitary operations act on them. 
Since in particular the SWAP gate is a Clifford, this also lets us implement $f$-route. 
By \cref{thm:frouteTOfmeasure}, $f$-route plus $\neg f$-route imply $f$-measure$(I,H)$ as oracles, which (essentially) shows the oracle equivalence of $f$-route and $f$-measure$(I,H)$.

\subsection{\texorpdfstring{$f$}{TEXT}-measure\texorpdfstring{$(I,H)$}{TEXT} is equivalent to \texorpdfstring{$f$}{TEXT}-Bell measure}

We begin by showing that $f$-measure$(I,H)$ implies $f$-Bell. 
Recall the definition of $f$-Bell, which was given as \cref{def:fBell}. 

\begin{figure}
    \centering
    \begin{subfigure}{0.45\textwidth}
    \centering
    \begin{tikzpicture}[on grid,node distance=10mm]
          \node at (0,0)      (i1) {};
          \node[below=of i1]  (i2) {};
          \path (i1)  ++(4cm,0)     node[mymeasureselect] (m1) {$a$}
                      ++(0,-10mm)   node[mymeasureselect] (m2) {$b$};
          \path 
                (i1)    ++(1cm,0)  node[mycontrol]       (c1) {}
                        ++(0,-1cm) node[mytarget]        (t2) {$\bigoplus$}
                (c1) edge (t2)
                (i1)    ++(2.5cm,0) node[mySQG] (h1) {$H^{f(x,y)}$}
                ;
      \begin{scope}[on background layer]
        \node[draw=black,rounded corners,densely dashed,fill=blue,fill opacity=0.1,fit=(h1)(m1),inner sep=1.5mm](fit1) {};
        \node[right=0mm of fit1.east,anchor=west] (fitl1) {$f$-measure$\left(I,H\right)$};
          \path       (i1) edge (m1)
                      (i2) edge (m2);
      \end{scope}
\end{tikzpicture}
    \caption{}
    \label{fig:fHtofBell}
    \end{subfigure}
    \hfill
    \begin{subfigure}{0.45\textwidth}
    \centering
    \begin{tikzpicture}[on grid,node distance=10mm]
          \node at (0,0)      (i1) {};
          \node[below=of i1]  (i2) {};
          \path 
                (i1)    ++(1cm,0)  node[mycontrol]       (c1) {}
                        ++(0,-1cm) node[mytarget]        (t2) {$\bigoplus$}
                (c1) edge (t2)
                (i1)    ++(2.5cm,0)   coordinate (c1) 
                        ++(0,-1)    coordinate (c2)
                        node[myMQM=(c1)(c2)] (mqg) {}
                        node[align=left,text width=1cm] at ($(mqg.base)+(1mm,0mm)$) {$B^{f(x,y)}$ $(a,b)$}
                ;
      \begin{scope}[on background layer]
        \node[draw=black,rounded corners,densely dashed,fill=blue,fill opacity=0.1,fit=(c1)(c2),minimum size=1.8 cm,inner sep=1.5mm](fit1) {};
        \node[right=0mm of fit1.north east,anchor=north west] (fitl1) {$f$-Bell};
          \path       (i1) edge (c1)
                      (i2) edge (c2);
      \end{scope}
\end{tikzpicture}
    \caption{}
    \label{fig:fBelltofH}
    \end{subfigure}
    \caption{(a) Protocol to implement $f$-Bell using $f$-measure$(I,H)$. The blue box represents a use of an $f$-measure$(I,H)$ oracle. (b) Protocol to implement $f$-measure$(I,H)$ using $f$-Bell. The blue box represents a use of an $f$-Bell oracle.}
\end{figure}

\begin{lemma} \label{lem:fBB84tofBell}
    \textbf{$f$-measure$(I,H)$ $\Rightarrow$ $f$-Bell:} There is an $O(1)$ oracle implication from $f$-measure$(I,H)$ to $f$-Bell.
\end{lemma}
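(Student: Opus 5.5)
The plan follows the circuit in \cref{fig:fHtofBell}. Alice receives the two-qubit input $Q=Q_1Q_2$. As local pre-processing she applies a CNOT from $Q_1$ to $Q_2$ and then measures $Q_2$ in the computational basis herself, obtaining a bit $b$ in the first round. She sends $Q_1$ into a single $f$-measure$(I,H)$ oracle, which returns a bit $a$ to both players in the second round. Both players then forward the message of $b$ to the other side as part of the oracle-style communication. Alternatively, Alice can encode $b$ into the $\beta=O(1)$ additional communication allowed in \cref{def:oraclereduction}. Either way, both players output $(a,b)$. If the bit labelling of the oracle's outcome needs adjusting, they apply a fixed relabelling depending on $f(x,y)$, which is known to both in the second round. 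This uses one oracle call and $O(1)$ extra communication, so it is an $O(1)$ oracle implication.

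Correctness splits into two cases.

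\textbf{Case $f(x,y)=0$.} Both qubits are measured in the computational basis after the CNOT. The outcome is $(a,b)=(i,i\oplus j)$ on input $\ket{ij}$. This is an invertible relabelling of the computational basis measurement, so post-processing $(a,a\oplus b)$ gives exactly $\mathcal{B}^0$.

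\textbf{Case $f(x,y)=1$.} The circuit is CNOT followed by $H$ on $Q_1$ and computational measurement of both qubits. This is the standard Bell-measurement circuit, since $(H\otimes I)\,\mathrm{CNOT}$ maps $\ket{\Psi_{ab}}$ to $\ket{ab}$, up to the usual labelling convention of the Bell states. So $(a,b)$ are exactly the Bell outcome labels, and the channel is $\mathcal{B}^1$. To justify that Alice may measure $Q_2$ early, before the oracle acts on $Q_1$, I would note that the operations act on different qubits and therefore commute. This is a direct use of locality and needs nothing like circuit identity 2.

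The only delicate point is checking that the output satisfies the measure-and-copy form, meaning both Alice and Bob hold $(a,b)$. The bit $a$ is copied by the oracle by definition. The bit $b$ is known to Alice in round one, so she can include it in her simultaneous message to Bob. This uses one classical bit of extra communication, permitted in the $\beta$ budget.

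For error propagation, if the oracle is $\epsilon$-correct then by \cref{remark:diamondadditive} the resulting $f$-Bell implementation is $\epsilon$-correct, since everything else is exact. I do not expect a genuine obstacle here. The step needing the most care is the basis-labelling bookkeeping in the $f=0$ case, to ensure the output matches $\mathcal{B}^0$ after a fixed classical relabelling that depends only on $f(x,y)$.
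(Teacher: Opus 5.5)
Your proposal is correct and follows the paper's proof essentially step for step. Both use CNOT pre-processing, a single $f$-measure$(I,H)$ call on the first qubit, and Alice's computational-basis measurement of the second qubit with the outcome broadcast, followed by an $f$-dependent classical relabelling: for $f=0$ this inverts the CNOT permutation, and for $f=1$, with the paper's convention $\ket{\Psi_{ij}}=X^iZ^j\ket{\Psi_{00}}$, it is the swap $(a,b)\mapsto(b,a)$.
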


\begin{proof} 
We begin with a description of the $f$-Bell protocol and then check correctness below. 

\vspace{0.2cm}
\noindent \textbf{Protocol:} Alice and Bob's protocol is as follows.
First, Alice applies a CNOT to her input state. 
Then, Alice and Bob apply the $f$-measure$(I, H)$ protocol on the \textit{first} qubit and Alice always measures the second qubit in the computational basis, then broadcasts her measurement outcome. 
In the second round, Alice and Bob learn $f(x,y)$ and obtain the two measurement outcomes from the $f$-measure$(I, H)$ protocol and the computational basis measurement.
A circuit description of the protocol is given as \cref{fig:fHtofBell}. 

If $f(x,y)=0$, the $f$-measure$(I, H)$ protocol measures the first qubit in the computational basis and Alice measures the second qubit also in the computational basis, obtaining outcomes $(a,b)$. 
Alice and Bob consider the action of a CNOT,
\begin{align}\label{eq:CNOTpermutation}
    \ket{00} \xrightarrow{\operatorname{CNOT}} \ket{00} \nonumber \\
    \ket{01} \xrightarrow{\operatorname{CNOT}} \ket{01} \nonumber \\
    \ket{10} \xrightarrow{\operatorname{CNOT}} \ket{11} \nonumber \\
    \ket{11} \xrightarrow{\operatorname{CNOT}} \ket{10} 
\end{align}
and undo it, looking up the pre-image of $\ket{a,b}$ in this mapping and using that as their output. 

If $f(x,y)=1$, the $f$-measure$(I, H)$ protocol measures the first qubit in the Hadamard basis, obtaining one of the four outcomes $\ket{+}\ket{0}, \ket{+}\ket{1}, \ket{-}\ket{0}, \ket{-}\ket{1}$.
Labelling the 4 Bell states by $\ket{\Psi_{ij}}_{AB}=X_A^iZ_A^j\ket{\Psi_{00}}$, where $\ket{\Psi_{00}}=\frac{1}{\sqrt{2}}(\ket{00}+\ket{11})$, the pre-image of the four possible measurement outcomes under the CNOT are 
\begin{align}\label{eq:Bellpermutation}
    \ket{\Psi_{00}} &\xrightarrow{\operatorname{CNOT}} \ket{+}\ket{0} \nonumber \\
    \ket{\Psi_{01}} &\xrightarrow{\operatorname{CNOT}} \ket{-}\ket{0} \nonumber \\
    \ket{\Psi_{10}} &\xrightarrow{\operatorname{CNOT}} \ket{+} \ket{1} \nonumber \\
    \ket{\Psi_{11}} &\xrightarrow{\operatorname{CNOT}} \ket{-} \ket{1}
\end{align}
Alice and Bob output the label of the pre-image of their measurement outcome. 

\vspace{0.2cm}
\noindent \textbf{Correctness:} To check that this protocol is correct, we consider $(x,y)\in f^{-1}(0)$ and $(x,y)\in f^{-1}(1)$ instances separately. 
Begin with the $(x,y)\in f^{-1}(0)$ instances, in which case the protocol should implement the $\mathcal{B}^0$ channel. 
Write the input state in the computational basis, and then observe that before the measurement the protocol produces the state
\begin{align}
    \ket{\psi}_Q &= \alpha_{00} \ket{00} + \alpha_{01}\ket{01}+\alpha_{10}\ket{10} + \alpha_{11}\ket{11} \nonumber \\
    &\xrightarrow{\operatorname{CNOT}} \alpha_{00} \ket{00} + \alpha_{01}\ket{01}+\alpha_{10}\ket{11} + \alpha_{11}\ket{10}. 
\end{align}
Alice and Bob then both learn measurement outcomes from measuring both qubits in the computational basis, and so they obtain $\ket{a,b}$ with probability $|\alpha_{a,b\oplus a}|^2$.
They then output the pre-image of $\ket{a,b}$ under the CNOT, so they output $\ket{a,a\oplus b}$ with probability $|\alpha_{a,a\oplus b}|^2$. 
Their channel is therefore described by
\begin{align}
    \mathcal{N}^{x,y}_{Q\rightarrow OO'} (\psi_Q) = \sum_{a,b}|\alpha_{ab}|^2 \ketbra{a,b}{a,b}\otimes \ketbra{a,b}{a,b}
\end{align}
But since $|\alpha_{a,b}|^2=\bra{a,b} \psi_Q \ket{a,b}$, this is exactly the channel $\mathcal{B}^0$, as needed. 

Next we check correctness on $f(x,y)=1$ instances. 
We view the input in the Bell basis, and track the effect of the CNOT,
\begin{align}
    \ket{\psi}_Q &= \beta_{00}\ket{\Psi_{00}} + \beta_{01}\ket{\Psi_{01}} + \beta_{10}\ket{\Psi_{10}} + \beta_{11}\ket{\Psi_{11}} \nonumber \\
    &\xrightarrow{\operatorname{CNOT}} \beta_{00}\ket{+}\ket{0} + \beta_{01}\ket{-}\ket{0} + \beta_{10}\ket{+}\ket{1} + \beta_{11}\ket{-}\ket{1}.
\end{align}
The $f$-measure$(I,H)$ protocol now measures the first qubit in the Hadamard basis. 
Labelling outcome $\ket{+}$ as $0$ and $\ket{-}$ as $1$, Alice and Bob obtain outputs $(a',b')$, and then output $(a,b)$ according to
\begin{align}
    a'=0,b'=0\rightarrow a=0,b=0, \nonumber \\
    a'=0,b'=1\rightarrow a=1,b=0, \nonumber \\
    a'=1,b'=0\rightarrow a=0,b=1, \nonumber \\
    a'=1,b'=1\rightarrow a=1,b=1. 
\end{align}
This leads them to output $(a,b)$ with probability $|\beta_{ab}|^2$.
Thus, they implement the channel
\begin{align}
    \mathcal{N}_{Q\rightarrow ZZ'}^{x,y}(\psi_Q)=\sum_{a,b}|\beta_{a,b}|^2 \ketbra{a,b}{a,b}\otimes \ketbra{a,b}{a,b}
\end{align}
Since also $|\beta_{ab}|^2=\bra{\Psi_{ab}} \psi_Q \ket{\Psi_{ab}}$ this implements the measure-and-copy in the Bell basis channel $\mathcal{B}^1$, as needed.
\end{proof}

We now prove the reversed implication. 

\begin{lemma} \label{lem:fBelltofmeasure}
    \textbf{$f$-Bell $\Rightarrow$ $f$-measure$(I,H)$:} There is an $O(1)$ oracle implication from $f$-Bell to $f$-measure$(I,H)$.
\end{lemma}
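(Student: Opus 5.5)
The plan is to reverse the construction of \cref{lem:fBB84tofBell}, following the circuit in \cref{fig:fBelltofH}. Alice receives the single-qubit input $Q$ of $f$-measure$(I,H)$ and prepares an ancilla $A$ in $\ket{0}$. She applies a CNOT with $Q$ as control and $A$ as target, and feeds the pair $QA$ into one $f$-Bell oracle. This oracle returns a two-bit outcome $(a,b)$ on both sides in the second round. At that point both players know $f(x,y)$ and output a fixed classical function of $(a,b)$ and $f(x,y)$. Only one oracle call and one ancilla are used, so this is an $O(1)$ oracle implication. The reduction is exact, so by \cref{remark:diamondadditive} an $\epsilon$-correct $f$-Bell oracle gives an $\epsilon$-correct $f$-measure$(I,H)$.

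Write $\ket{\psi}=\alpha\ket{0}+\beta\ket{1}$. After the CNOT the state is $\alpha\ket{00}+\beta\ket{11}$.

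\textbf{Case $f=0$.} The oracle measures in the computational basis. The outcome is $(a,a)$ with probability $|\alpha|^2$ for $a=0$ and $|\beta|^2$ for $a=1$. Outputting $a$ therefore implements the computational-basis measure-and-copy channel exactly.

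\textbf{Case $f=1$.} Rewrite the state as
\begin{align}
\alpha\ket{00}+\beta\ket{11}=\tfrac{\alpha+\beta}{\sqrt 2}\ket{\Psi_{00}}+\tfrac{\alpha-\beta}{\sqrt 2}\ket{\Psi_{01}},
\end{align}
using $\ket{\Psi_{01}}=Z_A\ket{\Psi_{00}}=\tfrac{1}{\sqrt2}(\ket{00}-\ket{11})$. The amplitudes here are exactly $\braket{+|\psi}$ and $\braket{-|\psi}$. So the Bell measurement returns label $(0,0)$ with probability $|\braket{+|\psi}|^2$ and $(0,1)$ with probability $|\braket{-|\psi}|^2$, and never a label with first index $1$. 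Outputting the second index $b$ then implements the Hadamard-basis measure-and-copy channel exactly. Both cases hold for arbitrary input states and purifications, not just pure inputs, because the overall channel is a measure-and-copy channel whose POVM elements are exactly $\{\ketbra{i}{i}\}$ or $\{H\ketbra{i}{i}H\}$.

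The main point to get right is the labeling convention $\ket{\Psi_{ij}}=X^iZ^j\ket{\Psi_{00}}$. It fixes which index of the Bell outcome carries the $\pm$ information, and so fixes the post-processing rule. With the convention above the rule is to output $a$ if $f=0$ and $b$ if $f=1$. I would double-check this against the table in \eqref{eq:Bellpermutation}. There $\ket{\Psi_{01}}$ maps to $\ket{-}\ket{0}$, which is consistent with the second index encoding the Hadamard outcome.
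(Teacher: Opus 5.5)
Your proof is correct and follows the paper's construction exactly: a CNOT from $Q$ into a $\ket{0}$ ancilla, a single $f$-Bell call on $QA$, and classical post-processing that depends on $f(x,y)$. The only difference is the output rule. The paper says to output the second bit when $f=0$ and the first bit when $f=1$. Under the paper's own convention $\ket{\Psi_{ij}}=X^iZ^j\ket{\Psi_{00}}$, however, the state $\beta_0\ket{\Psi_{00}}+\beta_1\ket{\Psi_{01}}$ carries the Hadamard outcome in the second index, and the first index is always $0$. So your rule (first bit for $f=0$, second bit for $f=1$) is the consistent one, and for $f=0$ the two bits coincide anyway.
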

\begin{proof}
We begin with a description of the $f$-measure$(I,H)$ protocol and then check correctness below. 

\vspace{0.2cm}
\noindent \textbf{Protocol:} We assume that Alice holds the quantum input, call it $Q$. 
As pre-processing, Alice prepares an ancilla qubit $\ket{0}_{A}$ and then applies $\operatorname{CNOT}_{Q\rightarrow A}$. 
Alice and Bob then run $f$-Bell on the QA system.
A circuit description of the protocol is given as \cref{fig:fBelltofH}. 

After the communication round, Alice and Bob both obtain the two bit output $(a,b)$ of the $f$-Bell protocol. 
Then, 
\begin{itemize}
    \item If $f=0$, they output the second bit, $b$. 
    \item If $f=1$, they output the first bit, $a$.
\end{itemize}
We claim this implements the $f$-measure$(I,H)$ task. 

\vspace{0.2cm}
\noindent \textbf{Correctness:} First suppose that $f(x,y)=0$. 
The input state, written in the computational basis, is
\begin{align}
    \ket{\psi}\ket{0}=\alpha_0 \ket{00} + \alpha_1\ket{10}.
\end{align}
After the CNOT is applied it is
\begin{align}
    \operatorname{CNOT} \ket{\psi}\ket{0}=\alpha_0 \ket{00} + \alpha_1\ket{11}.
\end{align}
If $f=0$, Alice and Bob output the result of measuring the second qubit in the computational basis, so they output $\ket{i}$ with probability $|\alpha_i|^2$. 
Since $|\alpha_i|^2=\tr(\Pi^0_i \ketbra{\psi}{\psi})$ with $\Pi_0^0=\ketbra{0}{0}, \Pi_1^0=\ketbra{1}{1}$, this means the channel they implement is
\begin{align}
    \mathcal{N}(\rho)=\sum_i \tr(\Pi^0_i \rho ) \ketbra{i}{i}\otimes \ketbra{i}{i}
\end{align}
as needed. 

Now suppose that $f(x,y)=1$. 
Then we view $\ket{\psi}$ in the Hadamard basis, so that the full input is
\begin{align}
    \ket{\psi}\ket{0} = \beta_0\ket{+}\ket{0} + \beta_1\ket{-}\ket{0}.
\end{align}
After the CNOT is applied in the pre-processing step, this becomes
\begin{align}
    \operatorname{CNOT}\ket{\psi}\ket{0} = \beta_0\ket{\Psi_{00}} + \beta_1\ket{\Psi_{01}}
\end{align}
Now Alice and Bob measure in the Bell basis, and output the first index $i$ in their result $\ket{\Psi_{ij}}$. 
We see that they get outcome $0$ with probability $|\beta_0|^2=|\braket{+}{\psi}|^2$, and outcome $1$ with probability $|\beta_1|^2=|\braket{-}{\psi}|^2$. 
Thus they implement exactly the channel
\begin{align}
    \mathcal{N}(\rho) = \sum_i \tr(\Pi_i^H\rho) \ketbra{i}{i}\otimes \ketbra{i}{i}.
\end{align}
where $\Pi_0^H=\ketbra{+}{+}, \Pi_1^H=\ketbra{-}{-}$, as needed.  
\end{proof}

\subsection{Clifford measurement protocols}\label{sec:Cliffordmeasurementprotocol}

Next we show that $f$-Bell implies $f$-controlled measurements in arbitrary Clifford bases. 

\begin{figure}
\centering
\makebox[\textwidth][c]{
\begin{tikzpicture}[on grid,node distance=10mm]
    \begin{pgfonlayer}{bg1}
    \path (0,0) pic (P1) {Bell};
    \end{pgfonlayer}
    \begin{pgfonlayer}{bg2}
    \path (0,1mm) pic (P2) [black] {Bell};
    \end{pgfonlayer}
    \begin{pgfonlayer}{bg3}
    \path (0,2mm) pic (P3) [black] {Bell};
    \end{pgfonlayer}
    \begin{pgfonlayer}{bg4}
    \path (0,3mm) pic (P4) [black] {Bell};
    \end{pgfonlayer}
    \begin{pgfonlayer}{front}
    \path (P1-M.base) node {$B^{f(x,y)}$};
    \node[fit=(P1-i1)(P2-i1)(P3-i1)(P4-i1),inner sep=0] (t1){};
    \node[anchor=east,left=5mm of t1] (state) {$\ket{\psi}$};
    \end{pgfonlayer}
    \begin{pgfonlayer}{bg5}
        \node[fit=(P1-i2)(P2-i2)(P3-i2)(P4-i2)(P1-i3)(P2-i3)(P3-i3)(P4-i3),inner sep=0.5mm,left delimiter=\{,minimum width=1cm,xshift=-0.1cm](E) {};
        \node[left=3mm of E.west,font=\small,anchor=east] (El) {$\ket{\Psi^+}^{\otimes n_Q}$};
        \node[fit=(P1-M)(P2-M)(P3-M)(P4-M),inner sep=1mm,densely dashed,rounded corners,draw=black,fill=blue,fill opacity=0.1,minimum width=1cm] (fB) {};
        \node[right=0mm of fB.north east,font=\small,anchor=north west] (fBl) {$(f\text{-Bell})^{\otimes n_Q}$};
        \node[fit=(P1-cM)(P2-cM)(P3-cM)(P4-cM),inner sep=1mm,densely dashed,rounded corners,draw=black,fill=red,fill opacity=0.1,minimum width=1cm] (CM) {};
        \node[right=0mm of CM.north east,font=\small,anchor=north west] (CMl) {$(\{\ket{0},\ket{1}\})^{\otimes n_Q}$};
    \end{pgfonlayer}
    \begin{pgfonlayer}{front}
        \node[fit=(P1-i1)(P2-i1)(P3-i1)(P4-i1),inner sep=0.3cm,minimum width=1cm,rectangle,draw=black,fill=white,xshift=3cm] (iCliff0){};
        \node[above=-1mm of iCliff0.base,anchor=base] (iCliff0l) {$C_0^\dagger$};
        \node[fit=(P1-i3)(P2-i3)(P3-i3)(P4-i3),inner sep=0.3cm,minimum width=1cm,rectangle,draw=black,fill=white,xshift=1cm] (Cliff0){};
        \node[above=-1mm of Cliff0.base,anchor=base] (Cliff0l) {$C_0$};
        \node[fit=(P1-i3)(P2-i3)(P3-i3)(P4-i3),inner sep=0.3cm,minimum width=1cm,rectangle,draw=black,fill=white,xshift=3cm] (Cliff1){};
        \node[above=-1mm of Cliff1.base,anchor=base] (Cliff1l) {$C_1^\dagger$};
    \end{pgfonlayer}
\end{tikzpicture}
}

\caption{Protocol to implement $f$-measure$(C_0,C_1)$ with $C_0,C_1$ Clifford using $f$-Bell as an oracle. For Cliffords on $n_Q$ qubits, $n_Q$ copies of $f$-Bell should be used. The curved wire denotes a maximally entangled state.}
\label{fig:fCliffordmeasurement}
\end{figure}

\CliffordMeasure*

\begin{proof} We first give the protocol then comment on correctness. 

\vspace{0.2cm}
\noindent \textbf{Protocol:} Alice prepares the maximally entangled state on $\mathcal{H}_{B}\otimes \mathcal{H}_C$ with dimensions $d_B=d_C=d_Q$, $Q$ the quantum input Hilbert space, then applies Clifford $(C_0)_Q^\dagger \otimes \mathcal{I}_B \otimes (C_1)_C^\dagger (C_0)_C $.
Alice measures the $C$ register in the computational basis and broadcasts the measurement outcome, call it $c$.
Then, Alice and Bob take $\mathcal{H}_Q\otimes \mathcal{H}_B$ as the input to $f$-Bell oracles (one oracle use per qubit of $Q$ is needed). 
The protocol is illustrated in \cref{fig:fCliffordmeasurement}. 

In the second round, Alice and Bob learn $x,y$ and hence $f(x,y)$. 
If $f(x,y)=0$, Alice and Bob record the first bit of the two bit measurement outcome from the $f$-Bell oracle as their outputs. 
Note that since $f(x,y)=0$ the oracle measures in the computational basis. 
If $f(x,y)=1$, observe that $QB$ has been measured in the Bell basis, so that the state in register $Q$ has been teleported to $\mathcal H_C$. 
The measurement outcomes from the $f$-Bell protocol then determine Pauli corrections on $C$. 
Alice and Bob determine these Pauli corrections and use them to undo the effect of the Pauli corrections on the measurement of $C$.
This process determines a measurement outcome $c'$, which Alice and Bob record as their output.

\vspace{0.2cm}
\noindent \textbf{Correctness:} We consider the $f=0$ and $f=1$ cases separately. 

Begin with the $f=0$ case. 
Then $f$-Bell measures the $QB$ subsystem in the computational basis, and upon obtaining measurement outcomes $a,b,c$ the circuit of \cref{fig:fCliffordmeasurement} becomes
\begin{center}

\begin{tikzpicture}[on grid,node distance=10mm]
    \begin{pgfonlayer}{bg1}
    \path (0,0) pic (P1) {Bell0};
    \end{pgfonlayer}
    \begin{pgfonlayer}{bg2}
    \path (0,1mm) pic (P2) [black] {Bell0};
    \end{pgfonlayer}
    \begin{pgfonlayer}{bg3}
    \path (0,2mm) pic (P3) [black] {Bell0};
    \end{pgfonlayer}
    \begin{pgfonlayer}{bg4}
    \path (0,3mm) pic (P4) [black] {Bell0};
    \end{pgfonlayer}
    \begin{pgfonlayer}{front}
    \path (P1-cM1.base) node {$\mathbf{a}$};
    \path (P1-cM2.base) node {$\mathbf{b}$};
    \path (P1-cM3.base) node {$\mathbf{c}$};
    \node[fit=(P1-i1)(P2-i1)(P3-i1)(P4-i1),inner sep=0] (t1){};
    \node[anchor=east,left=5mm of t1] (state) {$\ket{\psi}$};
    \end{pgfonlayer}
    \begin{pgfonlayer}{bg5}
        \node[fit=(P1-i2)(P2-i2)(P3-i2)(P4-i2)(P1-i3)(P2-i3)(P3-i3)(P4-i3),inner sep=0.5mm,left delimiter=\{,minimum width=1cm,xshift=-0.1cm](E) {};
        \node[left=3mm of E.west,font=\small,anchor=east] (El) {$\ket{\Psi^+}^{\otimes n_Q}$};
        \node[fit=(P1-cM1)(P2-cM1)(P3-cM1)(P4-cM1),inner sep=1mm,densely dashed,rounded corners,draw=black,fill=red,fill opacity=0.1,minimum width=1cm] (CM1) {};
        \node[right=0mm of CM1.north east,font=\small,anchor=north west] (CM1l) {$(\{\ket{0},\ket{1}\})^{\otimes n_Q}$};
    \end{pgfonlayer}
    \begin{pgfonlayer}{front}
        \node[fit=(P1-i1)(P2-i1)(P3-i1)(P4-i1),inner sep=0.3cm,minimum width=1cm,rectangle,draw=black,fill=white,xshift=3cm] (iCliff0){};
        \node[above=-1mm of iCliff0.base,anchor=base] (iCliff0l) {$C_0^\dagger$};
        \node[fit=(P1-i3)(P2-i3)(P3-i3)(P4-i3),inner sep=0.3cm,minimum width=1cm,rectangle,draw=black,fill=white,xshift=1cm] (Cliff0){};
        \node[above=-1mm of Cliff0.base,anchor=base] (Cliff0l) {$C_0$};
        \node[fit=(P1-i3)(P2-i3)(P3-i3)(P4-i3),inner sep=0.3cm,minimum width=1cm,rectangle,draw=black,fill=white,xshift=3cm] (Cliff1){};
        \node[above=-1mm of Cliff1.base,anchor=base] (Cliff1l) {$C_1^\dagger$};
    \end{pgfonlayer}
\end{tikzpicture}

\end{center}
\noindent We see that the input state has been measured exactly in the basis determined by $C_0$. 
Since both Alice and Bob output $a$, they correctly implement the measure-and-copy channel corresponding to $C_0$. 

Next consider the $f=1$ case. 
Then $QB$ are measured in the Bell basis, obtaining measurement outcomes $\mathbf{b},\mathbf{a}$ $\in \{0,1\}^{n_Q}$ corresponding to the Bell state $X_B^{\mathbf{b}}Z_B^{\mathbf{a}}\ket{\Psi_{00}}$. 
The $C$ system is measured in the computational basis; we label the outcome as $c$. 
Then the circuit of \cref{fig:fCliffordmeasurement} becomes
\begin{center}
\begin{tikzpicture}[on grid,node distance=10mm]
    \begin{pgfonlayer}{bg1}
    \path (0,0) pic (P1) {Bell1};
    \end{pgfonlayer}
    \begin{pgfonlayer}{bg2}
    \path (0,1mm) pic (P2) [black] {Bell1};
    \end{pgfonlayer}
    \begin{pgfonlayer}{bg3}
    \path (0,2mm) pic (P3) [black] {Bell1};
    \end{pgfonlayer}
    \begin{pgfonlayer}{bg4}
    \path (0,3mm) pic (P4) [black] {Bell1};
    \end{pgfonlayer}
    \begin{pgfonlayer}{front}
    \path (P1-cM3.base) node {$\mathbf{c}$};
    \node[fit=(P1-i1)(P2-i1)(P3-i1)(P4-i1),inner sep=0] (t1){};
    \node[anchor=east,left=5mm of t1] (state) {$\ket{\psi}$};
    \end{pgfonlayer}
    \begin{pgfonlayer}{bg5}
        \node[fit=(P1-i2)(P2-i2)(P3-i2)(P4-i2)(P1-i3)(P2-i3)(P3-i3)(P4-i3),inner sep=0.5mm,left delimiter=\{,minimum width=1cm,xshift=-0.1cm](E) {};
        \node[left=3mm of E.west,font=\small,anchor=east] (El) {$\ket{\Psi^+}^{\otimes n_Q}$};
    
    \end{pgfonlayer}
    \begin{pgfonlayer}{front}
        \node[fit=(P1-i1)(P2-i1)(P3-i1)(P4-i1),inner sep=0.3cm,minimum width=1cm,rectangle,draw=black,fill=white,xshift=3cm] (iCliff0){};
        \node[above=-1mm of iCliff0.base,anchor=base] (iCliff0l) {$C_0^\dagger$};
        \node[fit=(P1-i2)(P2-i2)(P3-i2)(P4-i2),inner sep=0.3cm,minimum width=1cm,rectangle,draw=black,fill=white,xshift=1cm] (Z){};
        \node[above=-1mm of Z.base,anchor=base] (Zl) {$Z^{\mathbf{a}}$};
        \node[fit=(P1-i2)(P2-i2)(P3-i2)(P4-i2),inner sep=0.3cm,minimum width=1cm,rectangle,draw=black,fill=white,xshift=3cm] (X){};
        \node[above=-1mm of X.base,anchor=base] (Xl) {$X^{\mathbf{b}}$};
        \node[fit=(P1-i3)(P2-i3)(P3-i3)(P4-i3),inner sep=0.3cm,minimum width=1cm,rectangle,draw=black,fill=white,xshift=1cm] (Cliff0){};
        \node[above=-1mm of Cliff0.base,anchor=base] (Cliff0l) {$C_0$};
        \node[fit=(P1-i3)(P2-i3)(P3-i3)(P4-i3),inner sep=0.3cm,minimum width=1cm,rectangle,draw=black,fill=white,xshift=3cm] (Cliff1){};
        \node[above=-1mm of Cliff1.base,anchor=base] (Cliff1l) {$C_1^\dagger$};
    \end{pgfonlayer}
\end{tikzpicture}
\end{center}

\noindent We can move the Pauli operators to the $C$ system, and then conjugate them through $C_0C_1^\dagger$ to obtain

\begin{center}
\begin{tikzpicture}[on grid,node distance=10mm]
    \begin{pgfonlayer}{bg1}
    \path (0,0) pic (P1) {Bell1b};
    \end{pgfonlayer}
    \begin{pgfonlayer}{bg2}
    \path (0,1mm) pic (P2) [black] {Bell1b};
    \end{pgfonlayer}
    \begin{pgfonlayer}{bg3}
    \path (0,2mm) pic (P3) [black] {Bell1b};
    \end{pgfonlayer}
    \begin{pgfonlayer}{bg4}
    \path (0,3mm) pic (P4) [black] {Bell1b};
    \end{pgfonlayer}
    \begin{pgfonlayer}{front}
    \path (P1-cM3.base) node {$\mathbf{c}$};
    \node[fit=(P1-i1)(P2-i1)(P3-i1)(P4-i1),inner sep=0] (t1){};
    \node[anchor=east,left=5mm of t1] (state) {$\ket{\psi}$};
    \end{pgfonlayer}
    \begin{pgfonlayer}{bg5}
        \node[fit=(P1-i2)(P2-i2)(P3-i2)(P4-i2)(P1-i3)(P2-i3)(P3-i3)(P4-i3),inner sep=0.5mm,left delimiter=\{,minimum width=1cm,xshift=-0.1cm](E) {};
        \node[left=3mm of E.west,font=\small,anchor=east] (El) {$\ket{\Psi^+}^{\otimes n_Q}$};
    
    \end{pgfonlayer}
    \begin{pgfonlayer}{front}
        \node[fit=(P1-i1)(P2-i1)(P3-i1)(P4-i1),inner sep=0.3cm,minimum width=1cm,rectangle,draw=black,fill=white,xshift=3cm] (iCliff0){};
        \node[above=-1mm of iCliff0.base,anchor=base] (iCliff0l) {$C_0^\dagger$};
        \node[fit=(P1-i3)(P2-i3)(P3-i3)(P4-i3),inner sep=0.3cm,minimum width=1cm,rectangle,draw=black,fill=white,xshift=7cm] (Z){};
        \node[above=-1mm of Z.base,anchor=base] (Zl) {$Z^{\mathbf{a'}}$};
        \node[fit=(P1-i3)(P2-i3)(P3-i3)(P4-i3),inner sep=0.3cm,minimum width=1cm,rectangle,draw=black,fill=white,xshift=5cm] (X){};
        \node[above=-1mm of X.base,anchor=base] (Xl) {$X^{\mathbf{b'}}$};
        \node[fit=(P1-i3)(P2-i3)(P3-i3)(P4-i3),inner sep=0.3cm,minimum width=1cm,rectangle,draw=black,fill=white,xshift=1cm] (Cliff0){};
        \node[above=-1mm of Cliff0.base,anchor=base] (Cliff0l) {$C_0$};
        \node[fit=(P1-i3)(P2-i3)(P3-i3)(P4-i3),inner sep=0.3cm,minimum width=1cm,rectangle,draw=black,fill=white,xshift=3cm] (Cliff1){};
        \node[above=-1mm of Cliff1.base,anchor=base] (Cliff1l) {$C_1^\dagger$};
    \end{pgfonlayer}
\end{tikzpicture}
\end{center}

\noindent Note that commuting the $Z$ and $X$ corrections through the Cliffords changes the exponents $\mathbf{a},\mathbf{b}$ to $\mathbf{a}',\mathbf{b}'$.
Notice that the $Z$ operator acts again only by a global phase on the output $\ket{c}$, so can be dropped. 
Then we obtain
\begin{center}
\begin{tikzpicture}[on grid,node distance=10mm]
    \begin{pgfonlayer}{bg1}
    \path (0,0) pic (P1) {Bell1c};
    \end{pgfonlayer}
    \begin{pgfonlayer}{bg2}
    \path (0,1mm) pic (P2) [black] {Bell1c};
    \end{pgfonlayer}
    \begin{pgfonlayer}{bg3}
    \path (0,2mm) pic (P3) [black] {Bell1c};
    \end{pgfonlayer}
    \begin{pgfonlayer}{bg4}
    \path (0,3mm) pic (P4) [black] {Bell1c};
    \end{pgfonlayer}
    \begin{pgfonlayer}{front}
    \path (P1-cM3.base) node {$\mathbf{c}\oplus\mathbf{b'}$};
    \node[fit=(P1-i1)(P2-i1)(P3-i1)(P4-i1),inner sep=0] (t1){};
    \node[anchor=east,left=5mm of t1] (state) {$\ket{\psi}$};
    \end{pgfonlayer}
    \begin{pgfonlayer}{bg5}
        \node[fit=(P1-i2)(P2-i2)(P3-i2)(P4-i2)(P1-i3)(P2-i3)(P3-i3)(P4-i3),inner sep=0.5mm,left delimiter=\{,minimum width=1cm,xshift=-0.1cm](E) {};
        \node[left=3mm of E.west,font=\small,anchor=east] (El) {$\ket{\Psi^+}^{\otimes n_Q}$};
    
    \end{pgfonlayer}
    \begin{pgfonlayer}{front}
        \node[fit=(P1-i1)(P2-i1)(P3-i1)(P4-i1),inner sep=0.3cm,minimum width=1cm,rectangle,draw=black,fill=white,xshift=3cm] (iCliff0){};
        \node[above=-1mm of iCliff0.base,anchor=base] (iCliff0l) {$C_0^\dagger$};
        \node[fit=(P1-i3)(P2-i3)(P3-i3)(P4-i3),inner sep=0.3cm,minimum width=1cm,rectangle,draw=black,fill=white,xshift=1cm] (Cliff0){};
        \node[above=-1mm of Cliff0.base,anchor=base] (Cliff0l) {$C_0$};
        \node[fit=(P1-i3)(P2-i3)(P3-i3)(P4-i3),inner sep=0.3cm,minimum width=1cm,rectangle,draw=black,fill=white,xshift=3cm] (Cliff1){};
        \node[above=-1mm of Cliff1.base,anchor=base] (Cliff1l) {$C_1^\dagger$};
    \end{pgfonlayer}
\end{tikzpicture}

\end{center}
\noindent Straightening the bent wire and using $C_0^\dagger C_0=I$, this is just
\begin{center}
\begin{tikzpicture}[on grid,node distance=10mm]
    \begin{pgfonlayer}{bg1}
    \path (0,0) pic (P1) {Bell1d};
    \end{pgfonlayer}
    \begin{pgfonlayer}{bg2}
    \path (0,1mm) pic (P2) [black] {Bell1d};
    \end{pgfonlayer}
    \begin{pgfonlayer}{bg3}
    \path (0,2mm) pic (P3) [black] {Bell1d};
    \end{pgfonlayer}
    \begin{pgfonlayer}{bg4}
    \path (0,3mm) pic (P4) [black] {Bell1d};
    \end{pgfonlayer}
    \begin{pgfonlayer}{front}
    \path (P1-cM3.base) node {$\mathbf{c}\oplus\mathbf{b'}$};
    \node[fit=(P1-i3)(P2-i3)(P3-i3)(P4-i3),inner sep=0] (t1){};
    \node[anchor=east,left=5mm of t1] (state) {$\ket{\psi}$};
    \end{pgfonlayer}
    \begin{pgfonlayer}{front}
        \node[fit=(P1-i3)(P2-i3)(P3-i3)(P4-i3),inner sep=0.3cm,minimum width=1cm,rectangle,draw=black,fill=white,xshift=1cm] (Cliff1){};
        \node[above=-1mm of Cliff1.base,anchor=base] (Cliff1l) {$C_1^\dagger$};
    \end{pgfonlayer}
\end{tikzpicture}
\end{center}
Thus if Alice and Bob compute $\mathbf{b}'$ (from $\mathbf{b},\mathbf{a}$ and given the choices of $C_0,C_1$), they can correct the measurement outcome by XORing with $\mathbf{b}'$. 
They then output $c$, which is distributed as if they measured directly in the $C_1$ basis, as needed. 
\end{proof}

Note that we could also have allowed only Alice to know the descriptions of $C_0, C_1$ at the start of the protocol, since only she applies them in the first round. 
In the second round Bob needs to know $C_0, C_1$ but Alice can communicate their description in the communication round. 

\subsection{Clifford unitary protocols}

We will now show that any $f$-unitary$(C_0,C_1)$ task, where $C_0,C_1$ are $n$ qubit Clifford operations, reduces to $3n_Q$ $f$-Bell measure protocols.
Note that whereas previously we reduced measuring in Clifford bases to $f$-Bell, now we reduce the application of Clifford unitaries to $f$-Bell. 
This holds for any splitting of where the outputs of $C_0$, $C_1$ should end up (i.e.\ any subset of qubits can be output on Alice's side, and the complement output on Bob's side).
The key idea is to use an $f$-measure$(B_{QB}\otimes I_C, B_{QC}\otimes I_B)$ protocol, where $B$ is the unitary mapping the computational to the Bell basis. 
More concretely, for $X_i, Y_i$ single qubits we have
\begin{align}
    B_{X_iY_i}=\operatorname{CNOT}_{X_i\rightarrow Y_i} H_{X_i}
\end{align}
and then the $n$ qubit generalization repeats this in parallel for each pair of qubits in $X$, $Y$.
We call this the $n$-qubit Bell basis unitary. 
The $f$-measure$(B_{QB}\otimes I_C, B_{QC}\otimes I_B)$ protocol effectively teleports the input into either $B$ or $C$, where we can selectively perform either $C_0$ or $C_1$. 

\begin{figure}
\begin{center}
\begin{tikzpicture}[on grid,node distance=10mm]
    \begin{pgfonlayer}{bg1}
    \path (0,0) pic (P1) {AlternatingBell};
    \end{pgfonlayer}
    \begin{pgfonlayer}{bg2}
    \path (0,1mm) pic (P2) [black] {AlternatingBell};
    \end{pgfonlayer}
    \begin{pgfonlayer}{bg3}
    \path (0,2mm) pic (P3) [black] {AlternatingBell};
    \end{pgfonlayer}
    \begin{pgfonlayer}{bg4}
    \path (0,3mm) pic (P4) [black] {AlternatingBell};
    \end{pgfonlayer}
    \begin{pgfonlayer}{front}
    \path (P1-M.base) node {${B}^{f(x,y)}$};
    \node[fit=(P1-i1)(P2-i1)(P3-i1)(P4-i1),inner sep=0] (t1){};
    \node[anchor=east,left=5mm of t1] (state) {$\ket{\psi}$};
    \end{pgfonlayer}
    \begin{pgfonlayer}{bg5}
        \node[fit=(P1-i2)(P2-i2)(P3-i2)(P4-i2)(P1-i5)(P2-i5)(P3-i5)(P4-i5),inner sep=0.5mm,left delimiter=\{,minimum width=1cm,xshift=-0.1cm](E) {};
        \node[left=3mm of E.west,font=\small,anchor=east] (El) {$\ket{\Psi^+}^{\otimes 2n_Q}$};
        \node[fit=(P1-M)(P2-M)(P3-M)(P4-M),inner sep=1mm,densely dashed,rounded corners,draw=black,fill=blue,fill opacity=0.1,minimum width=1cm] (fB) {};
        \node[right=0mm of fB.north east,font=\small,anchor=north west] (fBl) {$({B}^{f(x,y)})^{\otimes n_Q}$};
        \node[fit=(P1-cM0)(P2-cM0)(P3-cM0)(P4-cM0),inner sep=1mm,densely dashed,rounded corners,draw=black,fill=red,fill opacity=0.1,minimum width=1cm] (CM0) {};
        \node[right=0mm of CM0.north east,font=\small,anchor=north west] (CMl) {$(\{\ket{0},\ket{1}\})^{\otimes n_Q}$};
        \node[fit=(P1-cM1)(P2-cM1)(P3-cM1)(P4-cM1),inner sep=1mm,densely dashed,rounded corners,draw=black,fill=red,fill opacity=0.1,minimum width=1cm] (CM1) {};
        \node[right=0mm of CM1.north east,font=\small,anchor=north west] (CM1l) {$(\{\ket{0},\ket{1}\})^{\otimes n_Q}$};
    \end{pgfonlayer}
    \begin{pgfonlayer}{front}
        \node[fit=(P1-i4)(P2-i4)(P3-i4)(P4-i4),inner sep=0.3cm,minimum width=1cm,rectangle,draw=black,fill=white,xshift=3cm] (Cliff0){};
        \node[above=-1mm of Cliff0.base,anchor=base] (Cliff0l) {$C_0$};
        \node[fit=(P1-i5)(P2-i5)(P3-i5)(P4-i5),inner sep=0.3cm,minimum width=1cm,rectangle,draw=black,fill=white,xshift=3cm] (Cliff1){};
        \node[above=-1mm of Cliff1.base,anchor=base] (Cliff1l) {$C_1$};
    \end{pgfonlayer}
\end{tikzpicture}
\end{center}
    \caption{Circuit for implementing $f$-unitary$(C_0,C_1)$ using an $f$-measure oracle. The $f$-measure oracle, labelled ${B}^{f(x,y)}$, measures either the first and second wires in the Bell basis (pairwise, for each of the $n_Q$ qubits in each wire), or the first and third wires in the Bell basis. This teleports the input into the selected unitary.}
    \label{fig:cliffordunitary}
\end{figure}

This is stated more carefully in the next theorem.

\begin{theorem}[$f$-Bell $\Rightarrow$ $f$-unitary Clifford] \label{thm:fcliffordunitary}
There is an $O(n_Q)$ oracle implication from $f$-Bell
to $f$-unitary$(C_0,C_1)$, for any pair of Clifford unitaries $C_0,C_1$
acting on $n_Q$ qubits.
\end{theorem}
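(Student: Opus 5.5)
Following \cref{fig:cliffordunitary}, we will realise a controlled Bell measurement that teleports the input $Q$ either into a register $B$, which will later receive $C_0$, or into a register $C$, which will later receive $C_1$.

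\textbf{Resources and pre-processing.} Alice prepares two blocks of $n_Q$ EPR pairs. In the first block, one half forms $B$ (on Alice's side) and the other half forms $\bar B$; in the second block, one half forms $C$ and the other half forms $\bar C$. The halves $\bar B,\bar C$ are placed on whichever side the outputs of $C_0,C_1$ should end up, qubit by qubit; this costs $2n_Q$ qubits of additional resource. By \cref{remark:redistributequantuminputs}, this only relocates inputs. Alice immediately applies $C_0$ to $\bar B$ and $C_1$ to $\bar C$ (or, where the output qubits lie with Bob, Bob does so in his first round). This is the usual gate-teleportation setup. Everything now hinges on being able to measure, controlled by $f$, either $QB$ or $QC$ in the Bell basis, with the unmeasured register measured in the computational basis.

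\textbf{Building $f$-measure$(B_{QB}\otimes I_C,B_{QC}\otimes I_B)$.} This task is a controlled measurement on the $3n_Q$ qubits $QBC$ in one of two bases. Each basis is the computational basis rotated by a Clifford unitary: $B_{QB}\otimes I_C$ and $(B_{QC}\otimes I_B)$ composed with a SWAP. \cref{thm:cliffordmeasure} therefore gives it from $3n_Q$ $f$-Bell oracles. We take its outcome to consist of the Bell labels $(\mathbf a,\mathbf b)$ on the selected pair together with a computational outcome $\mathbf c$ on the discarded register.

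\textbf{Correctness and correction.} In both cases the analysis is the same standard teleportation computation.
\begin{itemize}
\item If $f=0$, measuring $QB$ in the Bell basis teleports $\ket\psi$ through the $B\bar B$ pair, leaving $C_0X^{\mathbf b}Z^{\mathbf a}\ket\psi$ on $\bar B$. Since $C_0$ is Clifford, this equals $P\,C_0\ket\psi$ for a Pauli $P$ computable from $(\mathbf a,\mathbf b)$. Meanwhile, measuring $C$ in the computational basis collapses $\bar C$ to a computational basis state depending only on $\mathbf c$, which is then discarded.
\item If $f=1$, the roles are swapped and the output is $P'C_1\ket\psi$ on $\bar C$.
\end{itemize}
In the second round both players know $f$ and all outcomes. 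Each player applies the Pauli correction to the output qubits they hold and traces out the unused register. Pauli corrections are local tensor products, so splitting the output qubits between Alice and Bob causes no issue. The result is exactly $\mathcal U^{f(x,y)}$, and by \cref{remark:diamondadditive} the error is $3n_Q\epsilon$.

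\textbf{Main obstacle.} The step needing care is the intermediate one: checking that the two bases really are Clifford images of the computational basis, and that the outcome relabelling provided by \cref{thm:cliffordmeasure} is compatible with reading off Bell labels. A second point to verify is that discarding the unselected teleportation register leaves no residual entanglement with the output. This holds because that register is measured in the computational basis and its EPR partner is then traced out.
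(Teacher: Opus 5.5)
Your proposal follows the paper's proof. Like the paper, you use two blocks of $n_Q$ EPR pairs with $C_0$ and $C_1$ acting on the output halves. You obtain the controlled teleportation measurement $f$-measure$(B_{QB}\otimes I_C,B_{QC}\otimes I_B)$ from \cref{thm:cliffordmeasure} with $3n_Q$ $f$-Bell calls, and you apply local Pauli corrections in the second round using the outcomes copied to both sides. One step fails as written, but it is easy to repair.

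\textbf{The failing step.} If the output qubits of $C_0$ (or $C_1$) are split between the players, then the register $\bar B$ is split. For an entangling Clifford, such as a CNOT between a qubit designated for Alice and one designated for Bob, neither player can apply $C_0$ locally in the first round. So ``Bob does so in his first round'' does not work. There are two fixes:
\begin{enumerate}
\item Do as the paper does. Keep $\bar B,\bar C$ entirely with Alice, let her apply $C_0$ and $C_1$ there, and send the Bob-designated output qubits across in the communication round. This costs $O(n_Q)$ qubits of communication, which \cref{def:oraclereduction} permits. Since every player receives the outcomes $(\mathbf a,\mathbf b)$, each can apply the Pauli correction to whatever output qubits they hold.
\item Fold the fixed, input-independent Cliffords into the pre-shared resource. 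That is, share $(I_B\otimes C_0)\ket{\Psi^+}^{\otimes n_Q}$ and $(I_C\otimes C_1)\ket{\Psi^+}^{\otimes n_Q}$, split qubit by qubit as you describe.
\end{enumerate}

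\textbf{The citation.} Drop the appeal to \cref{remark:redistributequantuminputs}. That remark is the port-teleportation equivalence for relocating a task's inputs, with overhead of order $2^{O(m)}/\delta^2$. Actually invoking it would replace your exact $O(n_Q)$ count by an approximate $2^{O(n_Q)}$ one. Distributing halves of a resource state needs no justification beyond counting those qubits among the $\beta$ additional shared qubits.
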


\begin{proof}
    The protocol to implement $f$-unitary$(C_0,C_1)$ is shown in \cref{fig:cliffordunitary}. 
    The basic idea is to start with two maximally entangled states, $\ket{\Psi^+}_{BB'}$ and $\ket{\Psi^+}_{CC'}$. 
    We apply $C_0$ to $B$, and $C_1$ to $C$. 
    Letting $Q$ denote the input system, we use a controlled Clifford measurement to either measure $QC$ in the Bell basis, or measure $QB$ in the Bell basis. 
    Formally, we use a $f$-measure$(B_{QB}\otimes I_C,B_{QC}\otimes I_B)$ as an oracle; the measurement outcomes from this protocol are made available on both sides after the communication round.
    This teleports the input into either $B'$ or $C'$, up to Pauli corrections, where either $C_0$ or $C_1$ has been applied.  
    The $B'$ and $C'$ wires, which may consist of $n$ qubits, are split into two portions, with the qubits output from Alice's side kept by Alice, and the qubits output on Bob's side sent to Bob in the communication round. 
    
    This circuit will implement the correct unitary up to Pauli corrections, which are fully determined by the measurement outcomes. 
    Alice and Bob use the measurement outcomes to determine the needed Pauli corrections and then implement them in the second round. 

    We omit a formal proof of correctness, which is clear. 
\end{proof}

We also obtain the following corollary. 

\begin{corollary}[$f$-measure$(I,H)$ $\Rightarrow$ $f$-route]
\label{cor:routereducedtomeasureH}
There is an $O(1)$ oracle implication from $f$-measure$(I,H)$ to $f$-route.
\end{corollary}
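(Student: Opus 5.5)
The plan is to compose results already in the paper. First, \cref{lem:fBB84tofBell} gives an $O(1)$ oracle implication from $f$-measure$(I,H)$ to $f$-Bell: each $f$-Bell instance costs one $f$-measure$(I,H)$ call plus local pre- and post-processing. Second, \cref{thm:fcliffordunitary} gives an $O(n_Q)$ oracle implication from $f$-Bell to $f$-unitary$(C_0,C_1)$, for any pair of Cliffords $C_0,C_1$ on $n_Q$ qubits and for any assignment of output qubits to Alice's and Bob's sides.

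The key observation is that $f$-route is an instance of $f$-unitary$(C_0,C_1)$ with $C_0,C_1$ Clifford. Take $Q$ of constant size $n_Q$ on Alice's side, together with a dummy $n_Q$-qubit register $D$ prepared by Alice in $\ket{0}^{\otimes n_Q}$. Declare the output split to be: the first $n_Q$ output qubits stay with Alice, the last $n_Q$ are sent to Bob. Now choose $C_0 = I_{QD}$ and $C_1 = \mathrm{SWAP}_{QD}$. With $f(x,y)=0$, the state of $Q$ ends on Alice's side. With $f(x,y)=1$, it ends on Bob's side. Tracing out the other side then meets the conditions of \cref{def:frouting}. Both $I$ and SWAP are Clifford, since SWAP is a product of three CNOTs. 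So \cref{thm:fcliffordunitary} applies with $2n_Q = O(1)$ qubits, using $O(1)$ $f$-Bell oracles and $O(1)$ EPR pairs.

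Chaining the two steps replaces each $f$-Bell oracle by one $f$-measure$(I,H)$ oracle, so the total is $O(n_Q)=O(1)$ oracle calls with no dependence on $n$. For approximate oracles, \cref{remark:diamondadditive} bounds the error: $\epsilon$-correct $f$-measure$(I,H)$ oracles give an $O(n_Q)\epsilon$-correct $f$-route, so $\delta\to0$ as $\epsilon\to0$, as required by \cref{def:oraclereduction}.

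The main obstacle is to check that the teleportation-based construction in \cref{thm:fcliffordunitary} really places the $C_1$ output on Bob's side when the output split is nontrivial. The register $C'$ receiving the teleported state must be partitioned, with Bob's half sent to him in the communication round. The Pauli corrections, which depend on oracle outcomes both parties learn in the second round, must then be applied locally by whoever holds each qubit. Because the corrections are tensor products of single-qubit Paulis, each party can correct their own qubits. I would verify this explicitly for the SWAP case and expect no difficulty beyond bookkeeping.
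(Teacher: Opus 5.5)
Your proposal is correct and follows the same route as the paper. You compose \cref{lem:fBB84tofBell} with \cref{thm:fcliffordunitary}, take $C_0=I$ and $C_1=\mathrm{SWAP}$ on the input together with a dummy register in $\ket{0}$, assign one output to Alice and the other to Bob, and trace out the dummy. The paper does exactly this with a single input qubit and a single dummy qubit; your $n_Q$-qubit version changes nothing since $n_Q=O(1)$.
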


\begin{proof}
By \Cref{lem:fBB84tofBell}, $f$-measure$(I,H)$ implies $f$-measure(Bell).
By \Cref{thm:fcliffordunitary}, $f$-measure(Bell) implies
$f$-unitary$(C_0,C_1)$ for any pair of Clifford unitaries $C_0,C_1$ acting on
the relevant constant-size input system, and for arbitrary choices of how the
output systems are split between Alice and Bob.

We choose the output split in which the first output system is held by Alice
and the second output system is held by Bob. We then take
\begin{align}
    C_0 = I, \qquad C_1 = \mathrm{SWAP}.
\end{align}
This gives an implementation of $f$-unitary$(I,\mathrm{SWAP})$ with one input
system held by Alice and one output system assigned to each party.

To implement $f$-route on an input qubit $Q$, Alice runs this
$f$-unitary$(I,\mathrm{SWAP})$ protocol on $Q$ together with a dummy qubit
initialized to $\ket{0}$. If $f(x,y)=0$, the identity is applied and the input
qubit remains in Alice's output system. If $f(x,y)=1$, the swap is applied and
the input qubit is moved to Bob's output system. Tracing out the dummy output
therefore gives exactly the $f$-route task.
\end{proof}

\section{Controlled diagonal unitaries}\label{sec:controlleddiagonal}

Our goal in this section is to show that we can do $f$-unitary$(I,U)$ for any $U$ of the form $U=C_1 DC_0$, where $D$ is an arbitrary diagonal unitary and $C_0$, $C_1$ are Cliffords, via an $O(1)$ reduction to $f$-measure$(I,H)$. 
This is the most general $2\rightarrow 2$ task that we are able to reduce to $f$-measure$(I,H)$.
The inspiration behind our construction is that we can perform (non-adaptive, graph state) measurement based quantum computation, using the single qubit $f$-measure protocols that are implied by $f$-measure$(I,H)$.
That diagonal unitaries can be implemented using non-adaptive graph state MBQC is well known, and we refer to the tutorial \cite{browne2016one} for a detailed explanation.
In the following we will present the proof in a simplified form without using the language of measurement based quantum computation.

\subsection{\texorpdfstring{$Z$}{TEXT}-rotation injection gadgets}

In our protocol to implement unitaries of the form $U=C_1DC_0$, we will decompose $D$ into gates of the form
\begin{align}
    R_{Z,S}(\phi)=\exp\left(-\frac{i\phi}{2} \bigotimes_{j\in S}Z_j\right).
\end{align}
Here $S$ labels a subset of qubits in an $n$ qubit Hilbert space. 
Any diagonal unitary can be written as a product of these unitaries.\footnote{This is commonly used, e.g.\ \cite{browne2016one}, but typically stated without proof. 
We prove it here for convenience, see \cref{lemma:Fourierdecomposition}.} Our protocol will require that we implement such gates controlled on the value of $f(x,y)$ in the form of an NLQC.  
While we know how to reduce single qubit rotations to $f$-measure$(I,H)$, the challenge now is to address the multi-qubit case. 
Our strategy will be to relate single qubit $X$ rotations to multi-qubit $Z$ rotations. 

To understand how to do this, our starting point is the identity,
\begin{center}
\begin{tikzpicture}[on grid,node distance=10mm]
          \node at (0,0)      (i1) {};
          \node[below=of i1]  (i2) {};
          \path (i1)  ++(3cm,0)     node (o1) {}
                      ++(0,-10mm)   node (o2) {};
          \path   
                  (i1)    ++(1cm,0)  node[mycontrol]       (c1) {}
                          ++(0,-1cm) node[mycontrol]     (c2) {}
                  (c1) edge (c2)
                  (i2)    ++(2cm,0)  node[mySQG]       (g1) {$X$}  
                  ;
      \begin{scope}[on background layer]
          \path       (i1) edge (o1)
                      (i2) edge (o2);
      \end{scope}
        \path (o1) ++(1cm,-0.5cm) node (e1) {$=$};
          \path (o1) ++(2cm,0) node (Ri1) {};
          \node[below=of Ri1]        (Ri2) {};
          \path (Ri1)  ++(4cm,0)     node (Ro1) {}
                      ++(0,-10mm)    node (Ro2) {};
          \path   
                  (Ri1)    ++(2cm,0)  node[mycontrol]       (Rc1) {}
                          ++(0,-1cm) node[mycontrol]     (Rc2) {}
                  (Rc1) edge (Rc2)
                  (Ri1)    ++(3cm,0)  node[mySQG]       (Rg1) {$Z$}  
                  (Ri2)    ++(1cm,0)  node[mySQG]       (Rg2) {$X$}  
                  ;
      \begin{scope}[on background layer]
          \path       (Ri1) edge (Ro1)
                      (Ri2) edge (Ro2);
      \end{scope}
\end{tikzpicture}
\end{center}
\noindent Note that the $Z$ gate commutes with the $CZ$, so it is just a convention to write the $Z$ gate on the right side of the $CZ$ in this identity.

A key observation we use is that we can use multiple $CZ$ gates to relate single qubit $X$ operators and multi-qubit $Z$ operators via this identity. 
For instance, consider that
\begin{center}
\begin{tikzpicture}[on grid,node distance=10mm]
          \node at (0,0)      (i1) {};
          \node[below=of i1]  (i2) {};
          \node[below=of i2]  (i3) {};
          \path (i1)  ++(3cm,0)     node (o1) {}
                      ++(0,-10mm)   node (o2) {}
                      ++(0,-10mm)   node (o3) {};
          \path   
                  (i1)    ++(1cm,0)  node[mycontrol]       (c1) {}
                          ++(0,-2cm) node[mycontrol]     (c3) {}
                  (c1) edge (c3)
                  (i2)    ++(2cm,0)  node[mycontrol]       (c2) {}
                          ++(0,-1cm) node[mycontrol]     (c3b) {}
                  (c2) edge (c3b)
                  (i3)    ++(3cm,0)  node[mySQG]       (g3) {$X$}  
                  ;
      \begin{scope}[on background layer]
          \path       (i1) edge (o1)
                      (i2) edge (o2)
                      (i3) edge (o3);
      \end{scope}
        \path (o1) ++(1cm,-1cm) node (e1) {$=$};
          \path (o1) ++(2cm,0) node (Ri1) {};
          \node[below=of Ri1]        (Ri2) {};
          \node[below=of Ri2]        (Ri3) {};
          \path (Ri1)  ++(5cm,0)     node (Ro1) {}
                      ++(0,-10mm)    node (Ro2) {}
                      ++(0,-10mm)    node (Ro3) {};
          \path   
                  (Ri1)    ++(2cm,0)  node[mycontrol]       (Rc1) {}
                          ++(0,-2cm) node[mycontrol]     (Rc3) {}
                  (Rc1) edge (Rc3)
                  (Ri2)    ++(3cm,0)  node[mycontrol]       (Rc1) {}
                          ++(0,-10mm) node[mycontrol]     (Rc3) {}
                  (Rc1) edge (Rc3)
                  (Ri1)    ++(4cm,0)  node[mySQG]       (Rg1) {$Z$}
                  (Ri2)    ++(4cm,0)  node[mySQG]       (Rg2) {$Z$}  
                  (Ri3)    ++(1cm,0)  node[mySQG]       (Rg3) {$X$}  
                  ;
      \begin{scope}[on background layer]
          \path       (Ri1) edge (Ro1)
                      (Ri2) edge (Ro2)
                      (Ri3) edge (Ro3);
      \end{scope}
\end{tikzpicture}
\end{center}
\noindent This follows by applying the identity above twice. 

We can generalize this observation to relate $\bigotimes_{j\in S} Z_j$ with an arbitrary choice of subset $S$ to a single-qubit $X$ operation. 
In particular, we have 
\begin{align}
    X_{k}\left(\bigotimes_{j\in S} CZ_{jk} \right) = \left( \bigotimes_{j\in S} Z_j \right)\left(\bigotimes_{j\in S} CZ_{jk} \right) X_k.
\end{align}
Using that
\begin{align}
    R_{Z,S}(\phi)&=\cos\left(\frac{\phi}{2}\right) I -i\sin\left(\frac{\phi}{2}\right)\bigotimes_{j\in S}Z_j \nonumber \\
    R_{X}(\phi)&=\cos\left(\frac{\phi}{2}\right) I -i\sin\left(\frac{\phi}{2}\right)X
\end{align}
this also means that 
\begin{align}
    R_{X_k}(\phi)\left(\bigotimes_{j\in S} CZ_{jk} \right) &= \left( \cos\left(\frac{\phi}{2}\right)I-i\sin\left(\frac{\phi}{2} \right)X_k\right)\left(\bigotimes_{j\in S} CZ_{jk} \right) \nonumber \\
    &=  \left(\bigotimes_{j\in S} CZ_{jk} \right)\left(\cos\left(\frac{\phi}{2}\right)I-i\sin\left(\frac{\phi}{2} \right) \left(\bigotimes_{i\in S} Z_{i}\right) X_k\right)
\end{align}
The factor in brackets on the right has almost become an $R_{Z,S}$ rotation, with the obstruction being the $X_k$ gate. 
If we choose to let this act on the $\ket{+}_{k}$, then we obtain
\begin{align}\label{eq:RXandRZ}
    R_{X_k}(\phi)\left(\bigotimes_{j\in S} CZ_{jk} \right)\ket{+}_k = \left(\bigotimes_{j\in S} CZ_{jk} \right)R_{Z,S}(\phi)\ket{+}_k
\end{align}
This identity will allow us to apply controlled multi-qubit $Z$ rotations using controlled single qubit $X$ rotations. 


\begin{figure}
\centering
\scalebox{0.8}{
\begin{tikzpicture}[on grid,node distance=10mm]
    \begin{pgfonlayer}{bg2}
        \foreach \x in {1,2,3,4,5} 
        \node[name=i-\x] at (0,\x mm) {};
        \foreach \x in {1,2,3,4,5} 
        \node[name=o-\x] at (7cm,\x mm) {};
        \node[name=i-6] at (0,-2cm) {};
        \node[name=o-6,mymeasureselect] at (7cm,-2cm) {$a$};
        \foreach \x in {1,2,3,4,5,6} 
        \path (i-\x) edge (o-\x);
        \path (o-1) ++ (1cm,-1cm) node (eq) {$=$};
        \foreach \x in {1,2,3,4,5} 
        \node[name=Ri-\x] at (10,\x mm) {};
        \foreach \x in {1,2,3,4,5} 
        \node[name=Ro-\x] at (16.5cm,\x mm) {};
        \node[name=Ri-6] at (10,-2cm) {};
        \node[name=Ro-6,mymeasureselect] at (11cm,-2cm) {$a$};
        \foreach \x in {1,2,3,4,5,6} 
        \path (Ri-\x) edge (Ro-\x);
    \end{pgfonlayer}
    \begin{pgfonlayer}{bg2}
        \foreach \x in {1,2} 
        \node[name=c-\x,mycontrol] at (\x cm,\x mm) {};
        \foreach \x in {1,2} 
        \node[name=cb-\x,mycontrol] at (\x cm,-2cm) {};
        \node[name=c-4,mycontrol] at (3cm,4mm) {};
        \node[name=cb-4,mycontrol] at (3cm,-2cm) {};
        \foreach \x in {1,2,4} 
        \path (c-\x) edge (cb-\x) {};
    \end{pgfonlayer}
    \begin{pgfonlayer}{front}
        \node[left=3mm of i-6.west,anchor=east,xshift=6mm]  {$\ket{+}_{B_k}$};
        \path (i-6) ++(5cm,0cm) node[mySQG] (g6) {$R^{f(x,y)}_X(\theta)$};
        \node[myMQG=(Ri-1)(Ri-5),xshift=1.5cm,minimum width=2cm] (Z) {};
        \node[below=1mm of Z.base,anchor=base] (Zl) {$\left(\underset{j\in S}{\bigotimes}Z_{A_j}\right)^a$};
        \node[myMQG=(Ri-1)(Ri-5),minimum width=2.4cm,xshift=4.5cm] (R) {};
        \node[below=0mm of R.base,anchor=base] (Zl) {$(R_{Z,S}(\theta))^{f(x,y)}$};
        \node[left=3mm of Ri-6.west,anchor=east,xshift=6mm]  {$\ket{+}_{B_k}$};
    \end{pgfonlayer}
    \begin{pgfonlayer}{bg3}
        \node[fit=(c-1)(c-4)(cb-4),inner sep=2mm,densely dashed,rounded corners,draw=black,fill=red,fill opacity=0.1,minimum width=1cm] (C) {};
        \node[below=0mm of C.south,font=\small,anchor=north] (Cl) {$\underset{j\in S}{\prod}CZ_{A_jB_k}$};
        \node[fit=(g6)(o-6),inner sep=2mm,densely dashed,rounded corners,draw=black,fill=blue,fill opacity=0.1,minimum width=1cm] (Or) {};
        \node[below=0mm of Or.south,font=\small,anchor=north] (Orl) {$f$-measure $(I,R_X(-\theta))$};
    \end{pgfonlayer}
\end{tikzpicture}}
\caption{Circuit identity we use to apply multi-qubit $Z$ rotations. This identity follows from equation \eqref{eq:RXandRZ} and circuit identity 2. The top wire carries $n$ qubits.}
\label{fig:RZtrick}
\end{figure}

The full multi-qubit $Z$ rotation gadget is shown in \cref{fig:RZtrick}. 
The top wire denotes $n$ qubits, to which we want to apply a classically controlled multi-qubit Z rotation.
The final wire is an ancilla, prepared in the $\ket{+}$ state. 
The strategy is to apply $\prod_{j\in S} CZ_{A_jB_k}$ with $B_k$ taken to be the ancilla, then insert $B_k$ into an $f$-measure$(I,R_X(\phi))$ oracle. 
This will apply $R_X^{f(x,y)}(\phi)$, which by equation \eqref{eq:RXandRZ} is equivalent to $R_{Z,S}(\phi)$ as needed.
We also measure the $B_k$ qubit, which (using circuit identity 2) has the effect of inserting additional $Z$ operations on the target qubits. 
These additional $Z$ operations will remain at the end of our protocol and must be accounted for when we later make use of these gadgets. 

\subsection{Protocol for controlled diagonal unitaries}

Our goal in this section is to show that we can do $f$-unitary$(I,U)$ for any $U$ of the form $U=C_1DC_0$, where $C_0,C_1$ are Clifford, by using $O(1)$ $f$-measure$(I,H)$ oracles.

We will first need the following simple statement about diagonal unitaries. 

\begin{restatable}{lemma}{fourierdecomposition}\label{lemma:Fourierdecomposition}
    Any diagonal unitary $D$ can be written in the form
    \begin{align}
    D = \prod_S R_{Z,S}(\phi_S)
\end{align}
where $S$ ranges over all of the subsets of the $n$ qubits.
\end{restatable}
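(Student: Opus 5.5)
We write the diagonal unitary as $D=\sum_{z\in\{0,1\}^n} e^{i\alpha(z)}\ketbra{z}{z}$ for some real phase function $\alpha:\{0,1\}^n\rightarrow\mathbb{R}$. The plan is to expand $\alpha$ in the Fourier (Walsh--Hadamard) basis over $\mathbb{Z}_2^n$, and then recognize each Fourier mode as a multi-qubit $Z$ rotation. Global phases are ignored, as elsewhere in the paper; equivalently, we allow the $S=\emptyset$ factor, which is itself a global phase since $R_{Z,\emptyset}(\phi)=e^{-i\phi/2}I$.

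The key observation is that $R_{Z,S}(\phi)$ is diagonal in the computational basis:
\begin{align}
    R_{Z,S}(\phi)\ket{z} = \exp\left(-\frac{i\phi}{2}(-1)^{\sum_{j\in S} z_j}\right)\ket{z}.
\end{align}
This holds because $\bigotimes_{j\in S}Z_j\ket{z}=(-1)^{\sum_{j\in S}z_j}\ket{z}$. Since all $R_{Z,S}$ commute, the order of the product is irrelevant, and
\begin{align}
    \prod_S R_{Z,S}(\phi_S)\ket{z} = \exp\left(-\frac{i}{2}\sum_{S}\phi_S\,\chi_S(z)\right)\ket{z},
\end{align}
where $\chi_S(z)=(-1)^{\sum_{j\in S}z_j}$ are the characters of $\mathbb{Z}_2^n$. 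It therefore suffices to find real numbers $\phi_S$ such that
\begin{align}
    -\frac{1}{2}\sum_S \phi_S\chi_S(z)=\alpha(z) \quad \text{for all } z.
\end{align}

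The characters $\{\chi_S\}_S$ satisfy $\sum_z \chi_S(z)\chi_T(z)=2^n\delta_{ST}$. Hence they form an orthogonal basis of the $2^n$-dimensional real vector space of functions $\{0,1\}^n\rightarrow\mathbb{R}$. Any real $\alpha$ therefore has the expansion $\alpha=\sum_S\hat\alpha(S)\chi_S$ with
\begin{align}
    \hat\alpha(S)=2^{-n}\sum_z\alpha(z)\chi_S(z)\in\mathbb{R}.
\end{align}
Setting $\phi_S=-2\hat\alpha(S)$ gives the desired decomposition. It is exact, including the global phase carried by the $S=\emptyset$ term.

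There is no real obstacle here. The only points needing care are the following. First, the phases $\alpha(z)$ are defined only modulo $2\pi$, but choosing any real representatives works. Second, the $\phi_S$ must be real so that each factor is a genuine rotation, and this holds because the characters are real-valued. Third, the orthogonality relation should be checked briefly: $\chi_S\chi_T=\chi_{S\triangle T}$, and $\sum_z\chi_U(z)=0$ unless $U=\emptyset$.
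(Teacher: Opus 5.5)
Your proposal is correct and follows essentially the same route as the paper: expand the phase function in the Walsh--Hadamard characters $\chi_S$ and match each Fourier coefficient to the angle of $R_{Z,S}$. The paper parametrizes $D$ with phases $e^{-i\theta(x)/2}$ so that it can take $\phi_S=\hat{\theta}_S$ directly. You additionally verify the orthogonality of the characters and the reality of the coefficients, which the paper takes for granted.
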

This is proven in appendix \ref{sec:lemmaproofs}.

\begin{figure}
\centering
\scalebox{0.8}{
\begin{tikzpicture}[on grid,node distance=10mm]
    \begin{pgfonlayer}{bg2}
        \foreach \x in {1,2,3,4,5} 
        \node[name=i-\x] at (0,\x mm) {};
        \foreach \x in {1,2,3,4,5} 
        \node[name=o-\x] at (13cm,\x mm) {};
        \node[name=i-6] at (0,-2cm) {};
        \node[name=o-6,mymeasureselect] at (12cm,-2cm) {$a$};
        \node[name=i-7] at (0,-3cm) {};
        \node[name=o-7,mymeasureselect] at (12cm,-3cm) {$b$};
        \foreach \x in {1,2,3,4,5,6,7} 
        \path (i-\x) edge (o-\x);
    \end{pgfonlayer}
    \begin{pgfonlayer}{bg2}
        \foreach \x in {1,2} 
        \path (1cm,0) ++(\x cm,\x mm) node[name=c-\x,mycontrol] {};
        \foreach \x in {1,2} 
        \path (1cm,-2cm) ++(\x cm,0) node[name=cb-\x,mycontrol] {};
        \path (4cm,4mm) node[name=c-4,mycontrol] {};
        \path (4cm,-2cm) node[name=cb-4,mycontrol] {};
        \foreach \x in {1,2,4} 
        \path (c-\x) edge (cb-\x) {};
        \foreach \x in {7,8,9,10} 
        \path (-2cm,-6mm) ++(\x cm,\x mm) node[name=c-\x,mycontrol] {};
        \foreach \x in {7,8,9,10} 
        \path (-2cm,-3cm) ++(\x cm,0) node[name=cb-\x,mycontrol] {};
        \foreach \x in {7,8,9,10}  
        \path (c-\x) edge (cb-\x) {};
    \end{pgfonlayer}
    \begin{pgfonlayer}{front}
        \node[myMQG=(i-1)(i-5),xshift=1cm,inner ysep=1mm] (Cliff0) {};
        \node[above=-1mm of Cliff0.base,anchor=base](Cliff0l){$C_0$};
        \node[myMQG=(i-1)(i-5),xshift=9.5cm,inner ysep=1mm] (Cliff1) {};
        \node[above=-1mm of Cliff1.base,anchor=base](Cliff1l){$C_1$};
        \node[myMQG=(i-1)(i-5),xshift=11.5cm,inner ysep=1mm,minimum width=2.3cm] (fCliff) {};
        \node[above=-1mm of fCliff.base,anchor=base,font=\small](Cliff1l){$
        \left(C^\dagger_1C_0^\dagger\right)^{1\oplus f(x,y)}$};
        \node[left=3mm of i-6.west,anchor=east,xshift=6mm]  {$\ket{+}_{B_1}$};
        \path (i-6) ++(10cm,0cm) node[mySQG] (g6) {$R^{f(x,y)}_X(\theta_{S_1})$};
        \node[left=3mm of i-7.west,anchor=east,xshift=6mm]  {$\ket{+}_{B_2}$};
        \path (i-7) ++(10cm,0cm) node[mySQG] (g7) {$R^{f(x,y)}_X(\theta_{S_2})$};
    \end{pgfonlayer}
    \begin{pgfonlayer}{bg3}
        \node[fit=(c-1)(c-4)(cb-4),inner sep=2mm,densely dashed,rounded corners,draw=black,fill=red,fill opacity=0.1,minimum width=1cm] (C1) {};
        \node[above=0mm of C1.north,font=\small,anchor=south] (C1l) {$\underset{j\in S_1}{\prod}CZ_{A_jB_1}$};
        \node[fit=(c-7)(c-10)(cb-10),inner sep=2mm,densely dashed,rounded corners,draw=black,fill=yellow,fill opacity=0.1,minimum width=1cm] (C2) {};
        \node[above=0mm of C2.north,font=\small,anchor=south] (C2l) {$\underset{j\in S_2}{\prod}CZ_{A_jB_2}$};
        \node[fit=(fCliff),inner sep=1mm,densely dashed,rounded corners,draw=black,fill=purple,fill opacity=0.1,minimum width=1cm] (C3) {};
        \node[above=0mm of C3.north,font=\small,anchor=south] (C3l) {$f$-unitary $(C_1^\dagger C_0^\dagger,I)$};
        \node[fit=(g6)(o-6),inner sep=1mm,densely dashed,rounded corners,draw=black,fill=blue,fill opacity=0.1,minimum width=1cm] (Or1) {};
        \node[right=0mm of Or1.east,font=\small,anchor=west] (Or1l) {$f$-measure $(I,R_X(-\theta_{S_1}))$};
        \node[fit=(g7)(o-7),inner sep=1mm,densely dashed,rounded corners,draw=black,fill=orange,fill opacity=0.1,minimum width=1cm] (Or2) {};
        \node[right=0mm of Or2.east,font=\small,anchor=west] (Or2l) {$f$-measure $(I,R_X(-\theta_{S_2}))$};
    \end{pgfonlayer}
\end{tikzpicture}}
\caption{Circuit showing the reduction from $f$-unitary$(I,C_1DC_0)$ to oracle implementations of $f$-measure$(I,R_X(\theta))$. We defined $CZ_{A[S_i]B_k}=\otimes_{j\in S_i} CZ_{A_jB_k}$. We show a case where $D$ is the product of two multi-qubit $Z$ rotations for simplicity; the generalization to many rotations is straightforward.}
\label{fig:fcontrolledUprotocol}
\end{figure}

\beyondClifford*

\begin{proof} We give the protocol and then the proof of correctness below. 

\vspace{0.2cm}
\noindent \textbf{Protocol:} We assume Alice holds all of the input qubits initially, as usual. 
The circuit description of the protocol is shown in \cref{fig:fcontrolledUprotocol}.
Alice first applies $C_0$. 
Then, we consider a decomposition of $D$ into multi-qubit $Z$ rotations, as in \cref{lemma:Fourierdecomposition},
\begin{align}
    D = \prod_S R_{Z,S}(\phi_S) \,.
\end{align}
There are at most $2^{n_Q}$ terms in this decomposition, which, since we take $n_Q$ to be $O(1)$, is still $O(1)$. 
Alice and Bob apply a $Z$-rotation injection gadget (see \cref{fig:RZtrick}) for each $R_{Z,S}(\phi_S)$ appearing in this decomposition, using a fresh ancilla qubit for each rotation. 
Then, Alice applies $C_1$. 
Finally, Alice and Bob execute $f$-unitary$(C_1^\dagger C_0^\dagger,I)$. 
After the communication round, Alice and Bob learn the measurement outcomes from the $Z$ rotation gadgets. 
They use these measurement outcomes, which we will call $b_1,b_2,\dots$ where each $b_i$ corresponds to the measurements of the ancilla used for each multi-qubit $Z$-rotation, to determine $P[b_1,b_2,\dots]= C_1 Z^{g(b_1,b_2,\dots)} C_1^\dagger$ and apply the inverse of these Pauli operators to their locally held qubits. Note that the function $g(...)$ determining where the Pauli $Z$ corrections appear depends on the set $S$. 

Note that the $f$-measure$(I,R_X(\theta))$ oracles can be $O(1)$ reduced to $f$-measure$(I,H)$ by \cref{thm:fmeasureEquivalence}, and $f$-unitary$(C_1^\dagger C_0^\dagger,I)$ can be reduced to $f$-measure$(I,H)$ by \cref{thm:fcliffordunitary}, \cref{thm:cliffordmeasure} and \cref{lem:fBelltofmeasure}.

\vspace{0.2cm}
\noindent \textbf{Correctness:} First consider what happens when $f=0$. 
Then the circuit shown in \cref{fig:fcontrolledUprotocol} simplifies to 

\begin{center}
\hspace*{-1.8cm}
\begin{tikzpicture}[on grid,node distance=10mm]
    \begin{pgfonlayer}{bg2}
        \foreach \x in {1,2,3,4,5} 
        \node[name=i-\x] at (0,\x mm) {};
        \foreach \x in {1,2,3,4,5} 
        \node[name=o-\x] at (13.5cm,\x mm) {};
        \node[name=i-6] at (0,-2cm) {};
        \node[name=o-6,mymeasureselect] at (9cm,-2cm) {$a$};
        \node[name=i-7] at (0,-3cm) {};
        \node[name=o-7,mymeasureselect] at (9cm,-3cm) {$b$};
        \foreach \x in {1,2,3,4,5,6,7} 
        \path (i-\x) edge (o-\x);
    \end{pgfonlayer}
    \begin{pgfonlayer}{bg2}
        \foreach \x in {1,2} 
        \path (1cm,0) ++(\x cm,\x mm) node[name=c-\x,mycontrol] {};
        \foreach \x in {1,2} 
        \path (1cm,-2cm) ++(\x cm,0) node[name=cb-\x,mycontrol] {};
        \path (4cm,4mm) node[name=c-4,mycontrol] {};
        \path (4cm,-2cm) node[name=cb-4,mycontrol] {};
        \foreach \x in {1,2,4} 
        \path (c-\x) edge (cb-\x) {};
        \foreach \x in {7,8,9,10} 
        \path (-2cm,-6mm) ++(\x cm,\x mm) node[name=c-\x,mycontrol] {};
        \foreach \x in {7,8,9,10} 
        \path (-2cm,-3cm) ++(\x cm,0) node[name=cb-\x,mycontrol] {};
        \foreach \x in {7,8,9,10}  
        \path (c-\x) edge (cb-\x) {};
    \end{pgfonlayer}
    \begin{pgfonlayer}{front}
        \node[myMQG=(i-1)(i-5),xshift=1cm,inner ysep=1mm] (Cliff0) {};
        \node[above=-1mm of Cliff0.base,anchor=base](Cliff0l){$C_0$};
        \node[myMQG=(i-1)(i-5),xshift=9.5cm,inner ysep=1mm] (Cliff1) {};
        \node[above=-1mm of Cliff1.base,anchor=base](Cliff1l){$C_1$};
        \node[myMQG=(i-1)(i-5),xshift=11cm,inner ysep=1mm] (fCliff1) {};
        \node[above=-1mm of fCliff1.base,anchor=base,font=\small](fCliff1l){$C_1^\dagger$};
        \node[myMQG=(i-1)(i-5),xshift=12.5cm,inner ysep=1mm] (fCliff2) {};
        \node[above=-1mm of fCliff2.base,anchor=base,font=\small](fCliff2l){$C^\dagger_0$};
        \node[left=3mm of i-6.west,anchor=east,xshift=6mm]  {$\ket{+}_{B_1}$};
        \node[left=3mm of i-7.west,anchor=east,xshift=6mm]  {$\ket{+}_{B_2}$};
    \end{pgfonlayer}
    \begin{pgfonlayer}{bg3}
        \node[fit=(c-1)(c-4)(cb-4),inner sep=2mm,densely dashed,rounded corners,draw=black,fill=red,fill opacity=0.1,minimum width=1cm] (C1) {};
        \node[above=0mm of C1.north,font=\small,anchor=south] (C1l) {$\underset{j\in S_1}{\prod}CZ_{A_jB_1}$};
        \node[fit=(c-7)(c-10)(cb-10),inner sep=2mm,densely dashed,rounded corners,draw=black,fill=yellow,fill opacity=0.1,minimum width=1cm] (C2) {};
        \node[above=0mm of C2.north,font=\small,anchor=south] (C2l) {$\underset{j\in S_2}{\prod}CZ_{A_jB_2}$};
        \node[fit=(fCliff1)(fCliff2),inner sep=1mm,densely dashed,rounded corners,draw=black,fill=purple,fill opacity=0.1,minimum width=1cm] (C3) {};
        \node[above=0mm of C3.north,font=\small,anchor=south] (C3l) {$C_0^\dagger C_1^\dagger$};
    \end{pgfonlayer}
\end{tikzpicture}
\end{center}

\noindent Using circuit identity 2, this leads to 

\begin{center}
\scalebox{1}{
\begin{tikzpicture}[on grid,node distance=10mm]
    \begin{pgfonlayer}{bg2}
        \foreach \x in {1,2,3,4,5} 
        \node[name=i-\x] at (0,\x mm) {};
        \foreach \x in {1,2,3,4,5} 
        \node[name=o-\x] at (9.5cm,\x mm) {};
        \node[name=i-6] at (0,-1cm) {};
        \node[name=o-6,mymeasureselect] at (1cm,-1cm) {$a$};
        \node[name=i-7] at (0,-2cm) {};
        \node[name=o-7,mymeasureselect] at (1cm,-2cm) {$b$};
        \foreach \x in {1,2,3,4,5,6,7} 
        \path (i-\x) edge (o-\x);
    \end{pgfonlayer}
    \begin{pgfonlayer}{front}
        \node[myMQG=(i-1)(i-5),xshift=1cm,inner ysep=1mm] (Cliff0) {};
        \node[above=-1mm of Cliff0.base,anchor=base](Cliff0l){$C_0$};
        \node[myMQG=(i-1)(i-5),xshift=2.5cm,inner ysep=1mm] (Z1) {};
        \node[above=-1mm of Z1.base,anchor=base](Z1l){$Z^{a}_{A[S_1]}$};
        \node[myMQG=(i-1)(i-5),xshift=4cm,inner ysep=1mm] (Z1) {};
        \node[above=-1mm of Z1.base,anchor=base](Z1l){$Z^{b}_{A[S_2]}$};
        \node[myMQG=(i-1)(i-5),xshift=5.5cm,inner ysep=1mm] (Cliff1) {};
        \node[above=-1mm of Cliff1.base,anchor=base](Cliff1l){$C_1$};
        \node[myMQG=(i-1)(i-5),xshift=7cm,inner ysep=1mm] (fCliff1) {};
        \node[above=-1mm of fCliff1.base,anchor=base,font=\small](fCliff1l){$C_1^\dagger$};
        \node[myMQG=(i-1)(i-5),xshift=8.5cm,inner ysep=1mm] (fCliff2) {};
        \node[above=-1mm of fCliff2.base,anchor=base,font=\small](fCliff2l){$C^\dagger_0$};
        \node[left=3mm of i-6.west,anchor=east,xshift=6mm]  {$\ket{+}_{B_1}$};
        \node[left=3mm of i-7.west,anchor=east,xshift=6mm]  {$\ket{+}_{B_2}$};
    \end{pgfonlayer}
    \begin{pgfonlayer}{bg3}
        \node[fit=(fCliff1)(fCliff2),inner sep=1mm,densely dashed,rounded corners,draw=black,fill=purple,fill opacity=0.1,minimum width=1cm] (C3) {};
        \node[above=0mm of C3.north,font=\small,anchor=south] (C3l) {$C_0^\dagger C_1^\dagger$};
    \end{pgfonlayer}
\end{tikzpicture}}
\end{center}

\noindent This simplifies to only a Pauli string acting on the $A$ qubits. 
The final step in the protocol is for Alice and Bob to, having learned the measurement outcomes $a,b\in\{0,1\}^{n}$, undo these Pauli operators, so that overall the circuit implements the identity. 

Now consider the $f=1$ case. 
Then, fixing $f=1$, the circuit in \cref{fig:fcontrolledUprotocol} simplifies to 

\begin{center}
\hspace*{-1.5cm}
\begin{tikzpicture}[on grid,node distance=10mm]
    \begin{pgfonlayer}{bg2}
        \foreach \x in {1,2,3,4,5} 
        \node[name=i-\x] at (0,\x mm) {};
        \foreach \x in {1,2,3,4,5} 
        \node[name=o-\x] at (11cm,\x mm) {};
        \node[name=i-6] at (0,-2cm) {};
        \node[name=o-6,mymeasureselect] at (11cm,-2cm) {$a$};
        \node[name=i-7] at (0,-3cm) {};
        \node[name=o-7,mymeasureselect] at (11cm,-3cm) {$b$};
        \foreach \x in {1,2,3,4,5,6,7} 
        \path (i-\x) edge (o-\x);
    \end{pgfonlayer}
    \begin{pgfonlayer}{bg2}
        \foreach \x in {1,2} 
        \path (1cm,0) ++(\x cm,\x mm) node[name=c-\x,mycontrol] {};
        \foreach \x in {1,2} 
        \path (1cm,-2cm) ++(\x cm,0) node[name=cb-\x,mycontrol] {};
        \path (4cm,4mm) node[name=c-4,mycontrol] {};
        \path (4cm,-2cm) node[name=cb-4,mycontrol] {};
        \foreach \x in {1,2,4} 
        \path (c-\x) edge (cb-\x) {};
        \foreach \x in {7,8,9,10} 
        \path (-2cm,-6mm) ++(\x cm,\x mm) node[name=c-\x,mycontrol] {};
        \foreach \x in {7,8,9,10} 
        \path (-2cm,-3cm) ++(\x cm,0) node[name=cb-\x,mycontrol] {};
        \foreach \x in {7,8,9,10}  
        \path (c-\x) edge (cb-\x) {};
    \end{pgfonlayer}
    \begin{pgfonlayer}{front}
        \node[myMQG=(i-1)(i-5),xshift=1cm,inner ysep=1mm] (Cliff0) {};
        \node[above=-1mm of Cliff0.base,anchor=base](Cliff0l){$C_0$};
        \node[myMQG=(i-1)(i-5),xshift=9.5cm,inner ysep=1mm] (Cliff1) {};
        \node[above=-1mm of Cliff1.base,anchor=base](Cliff1l){$C_1$};
        \node[left=3mm of i-6.west,anchor=east,xshift=6mm]  {$\ket{+}_{B_1}$};
        \path (i-6) ++(9.5cm,0cm) node[mySQG] (g6) {$R_X(\theta_{S_1})$};
        \node[left=3mm of i-7.west,anchor=east,xshift=6mm]  {$\ket{+}_{B_2}$};
        \path (i-7) ++(9.5cm,0cm) node[mySQG] (g7) {$R_X(\theta_{S_2})$};
    \end{pgfonlayer}
    \begin{pgfonlayer}{bg3}
        \node[fit=(c-1)(c-4)(cb-4),inner sep=2mm,densely dashed,rounded corners,draw=black,fill=red,fill opacity=0.1,minimum width=1cm] (C1) {};
        \node[above=0mm of C1.north,font=\small,anchor=south] (C1l) {$\underset{j\in S_1}{\prod}CZ_{A_jB_1}$};
        \node[fit=(c-7)(c-10)(cb-10),inner sep=2mm,densely dashed,rounded corners,draw=black,fill=yellow,fill opacity=0.1,minimum width=1cm] (C2) {};
        \node[above=0mm of C2.north,font=\small,anchor=south] (C2l) {$\underset{j\in S_2}{\prod}CZ_{A_jB_2}$};
    \end{pgfonlayer}
\end{tikzpicture}
\end{center}

\noindent Now we use \eqref{eq:RXandRZ} to move the $R_X$ rotations through the CZ components, picking up multi-qubit $Z$ rotations on the $A$ qubits, 

\begin{center}
\hspace*{-1.5cm}
\scalebox{0.9}{
\begin{tikzpicture}[on grid,node distance=10mm]
    \begin{pgfonlayer}{bg2}
        \foreach \x in {1,2,3,4,5} 
        \node[name=i-\x] at (0,\x mm) {};
        \foreach \x in {1,2,3,4,5} 
        \node[name=o-\x] at (15.5cm,\x mm) {};
        \node[name=i-6] at (0,-2cm) {};
        \node[name=o-6,mymeasureselect] at (9.5cm,-2cm) {$a$};
        \node[name=i-7] at (0,-3cm) {};
        \node[name=o-7,mymeasureselect] at (9.5cm,-3cm) {$b$};
        \foreach \x in {1,2,3,4,5,6,7} 
        \path (i-\x) edge (o-\x);
    \end{pgfonlayer}
    \begin{pgfonlayer}{bg2}
        \foreach \x in {1,2} 
        \path (1.5cm,0) ++(\x cm,\x mm) node[name=c-\x,mycontrol] {};
        \foreach \x in {1,2} 
        \path (1.5cm,-2cm) ++(\x cm,0) node[name=cb-\x,mycontrol] {};
        \path (4.5cm,4mm) node[name=c-4,mycontrol] {};
        \path (4.5cm,-2cm) node[name=cb-4,mycontrol] {};
        \foreach \x in {1,2,4} 
        \path (c-\x) edge (cb-\x) {};
        \foreach \x in {7,8,9,10} 
        \path (-1.5cm,-6mm) ++(\x cm,\x mm) node[name=c-\x,mycontrol] {};
        \foreach \x in {7,8,9,10} 
        \path (-1.5cm,-3cm) ++(\x cm,0) node[name=cb-\x,mycontrol] {};
        \foreach \x in {7,8,9,10}  
        \path (c-\x) edge (cb-\x) {};
    \end{pgfonlayer}
    \begin{pgfonlayer}{front}
        \node[myMQG=(i-1)(i-5),xshift=1cm,inner ysep=1mm] (Cliff0) {};
        \node[above=-1mm of Cliff0.base,anchor=base](Cliff0l){$C_0$};
        \node[myMQG=(i-1)(i-5),xshift=14.5cm,inner ysep=1mm] (Cliff1) {};
        \node[above=-1mm of Cliff1.base,anchor=base](Cliff1l){$C_1$};
        \node[left=3mm of i-6.west,anchor=east,xshift=6mm]  {$\ket{+}_{B_1}$};
        \path (i-6) ++(1cm,0cm) node[mySQG] (g6) {$R_X(\theta_{S_1})$};
        \node[left=3mm of i-7.west,anchor=east,xshift=6mm]  {$\ket{+}_{B_2}$};
        \path (i-7) ++(1cm,0cm) node[mySQG] (g7) {$R_X(\theta_{S_2})$};
        \node[myMQG=(i-1)(i-5),xshift=10cm,inner ysep=1mm,minimum width=2cm] (Rot1) {};
        \node[above=-1mm of Rot1.base,anchor=base](Rot1l){$R_{Z,S_1}(\theta_{S_1})$};
        \node[myMQG=(i-1)(i-5),xshift=12.5cm,inner ysep=1mm,minimum width=2cm] (Rot2) {};
        \node[above=-1mm of Rot2.base,anchor=base](Rot1l){$R_{Z,S_2}(\theta_{S_2})$};
    \end{pgfonlayer}
    \begin{pgfonlayer}{bg3}
        \node[fit=(c-1)(c-4)(cb-4),inner sep=2mm,densely dashed,rounded corners,draw=black,fill=red,fill opacity=0.1,minimum width=1cm] (C1) {};
        \node[above=0mm of C1.north,font=\small,anchor=south] (C1l) {$\underset{j\in S_1}{\prod}CZ_{A_jB_1}$};
        \node[fit=(c-7)(c-10)(cb-10),inner sep=2mm,densely dashed,rounded corners,draw=black,fill=yellow,fill opacity=0.1,minimum width=1cm] (C2) {};
        \node[above=0mm of C2.north,font=\small,anchor=south] (C2l) {$\underset{j\in S_2}{\prod}CZ_{A_jB_2}$};
    \end{pgfonlayer}
\end{tikzpicture}}
\end{center}

\noindent Now we use circuit identity 2 again to simplify, obtaining $Z$ gates in place of the control $Z$ components, 

\begin{center}
\begin{tikzpicture}[on grid,node distance=10mm]
    \begin{pgfonlayer}{bg2}
        \foreach \x in {1,2,3,4,5} 
        \node[name=i-\x] at (0,\x mm) {};
        \foreach \x in {1,2,3,4,5} 
        \node[name=o-\x] at (11.5cm,\x mm) {};
        \node[name=i-6] at (0,-1cm) {};
        \node[name=o-6,mymeasureselect] at (2.5cm,-1cm) {$a$};
        \node[name=i-7] at (0,-2cm) {};
        \node[name=o-7,mymeasureselect] at (2.5cm,-2cm) {$b$};
        \foreach \x in {1,2,3,4,5,6,7} 
        \path (i-\x) edge (o-\x);
    \end{pgfonlayer}
    \begin{pgfonlayer}{front}
        \node[myMQG=(i-1)(i-5),xshift=1cm,inner ysep=1mm] (Cliff0) {};
        \node[above=-1mm of Cliff0.base,anchor=base](Cliff0l){$C_0$};
        \node[myMQG=(i-1)(i-5),xshift=10.5cm,inner ysep=1mm] (Cliff1) {};
        \node[above=-1mm of Cliff1.base,anchor=base](Cliff1l){$C_1$};
        \node[left=3mm of i-6.west,anchor=east,xshift=6mm]  {$\ket{+}_{B_1}$};
        \path (i-6) ++(1cm,0cm) node[mySQG] (g6) {$R_X(\theta_{S_1})$};
        \node[left=3mm of i-7.west,anchor=east,xshift=6mm]  {$\ket{+}_{B_2}$};
        \path (i-7) ++(1cm,0cm) node[mySQG] (g7) {$R_X(\theta_{S_2})$};
        \node[myMQG=(i-1)(i-5),xshift=2.5cm,inner ysep=1mm,minimum width=1.2cm] (Cor1) {};
        \node[above=-1mm of Cor1.base,anchor=base](Cor1l){$Z^a_{A[S_1]}$};
        \node[myMQG=(i-1)(i-5),xshift=4cm,inner ysep=1mm,minimum width=1.2cm] (Cor2) {};
        \node[above=-1mm of Cor2.base,anchor=base](Cor1l){$Z^b_{A[S_2]}$};
        \node[myMQG=(i-1)(i-5),xshift=6cm,inner ysep=1mm,minimum width=2cm] (Rot1) {};
        \node[above=-1mm of Rot1.base,anchor=base](Rot1l){$R_{Z,S_1}(\theta_{S_1})$};
        \node[myMQG=(i-1)(i-5),xshift=8.5cm,inner ysep=1mm,minimum width=2cm] (Rot2) {};
        \node[above=-1mm of Rot2.base,anchor=base](Rot1l){$R_{Z,S_2}(\theta_{S_2})$};
    \end{pgfonlayer}
\end{tikzpicture}
\end{center}

\noindent We see that the circuit acting on the $A$ qubits is as needed, up to Pauli corrections. 
These are determined by the $a,b$ measurement outcomes, which are known to Alice and Bob in the second round. 
Thus they can correct the Pauli operators as needed on their locally held qubits, and therefore implement exactly $U=C_1DC_0$.  
\end{proof}

Note that variants of this protocol allow one to obtain reductions to various other classes of $f$-measure unitary protocols. 
For instance, we could use $f$-measure$(R_X(\theta_1), R_X(\theta_2))$ oracles in place of the current $f$-measure$(I, R_X(\theta_2))$. 
This would allow the implementation of two different diagonal unitaries $D_0, D_1$ controlled on the value of $f(x,y)$. 
Further, one could replace the $f$-unitary$(C_1^\dagger C_0^\dagger, I)$ oracle with anything of the form $f$-unitary$(C,C')$ with $C,C'$ Clifford to obtain some further cases. 

\appendix

\section{Proofs of some lemmas}\label{sec:lemmaproofs}

\densitytoDN*

\begin{proof}
Let $\ket{\psi}_{RA}=\sum_{s}\sqrt{\lambda_s}\ket{s}_R\ket{s}_A$ be the state which achieves the maximum appearing in the definition of the diamond norm.
Then, 
\begin{align}
    \Vert \mathcal{N}-\mathcal{M}\Vert_\diamond &= \Vert \sum_{s,t} \sqrt{\lambda_s\lambda_t}\ketbra{s}{t}_R\otimes \mathcal{N}(\ketbra{s}{t})-\sum_{s,t} \sqrt{\lambda_s\lambda_t}\ketbra{s}{t}_R\otimes\mathcal{M}(\ketbra{s}{t})\Vert_1 \nonumber \\
    &\leq \sum_{s,t} \sqrt{\lambda_s\lambda_t}\Vert \ketbra{s}{t}\otimes\left( \mathcal{N}(\ketbra{s}{t})-\mathcal{M}(\ketbra{s}{t})\right)\Vert_1 \nonumber \\
    &\leq \sum_{s,t} \sqrt{\lambda_s\lambda_t}\Vert \ketbra{s}{t}\Vert_\infty \Vert \mathcal{N}(\ketbra{s}{t})-\mathcal{M}(\ketbra{s}{t}) \Vert_1 \nonumber \\
    &=\sum_{s,t} \sqrt{\lambda_s\lambda_t}\Vert \mathcal{N}(\ketbra{s}{t})-\mathcal{M}(\ketbra{s}{t}) \Vert_1.
\end{align}
Next, we express the operator $\ketbra{s}{t}$ as a sum of density matrices. 
Define
\begin{align}
    \ket{\phi_\pm} &= \frac{1}{\sqrt{2}} \left( \ket{s}\pm \ket{t}\right) \nonumber \\
    \ket{\tilde{\phi}_\pm} &= \frac{1}{\sqrt{2}} \left( \ket{s}\pm i\ket{t}\right)
\end{align}
so that
\begin{align}
    \ketbra{s}{t} = \frac{1}{2}\left(\ketbra{\phi_+}{\phi_+} - \ketbra{\phi_-}{\phi_-} + i\ketbra{\tilde{\phi}_+}{\tilde{\phi}_+}- i\ketbra{\tilde{\phi}_-}{\tilde{\phi}_-} \right).
\end{align}
Then, using this to continue our calculation of the diamond norm we can expand $\ketbra{s}{t}$ into the four terms above, use the triangle inequality, and then (absolute) linearity of the one-norm to obtain
\begin{align}
    \Vert \mathcal{N}(\ketbra{s}{t})-\mathcal{M}(\ketbra{s}{t}) \Vert_1 \leq 2\epsilon.
\end{align}
Using this in our upper bound of the diamond norm, we have
\begin{align}
    \Vert \mathcal{N}-\mathcal{M}\Vert_\diamond \leq 2\epsilon \left(\sum_s\sqrt{\lambda_s}\right)^2.
\end{align}
It remains to bound the sum of the square root of the Schmidt coefficients. 
Since $\sum_i\lambda_i=1$ and there are at most $d$ coefficients, we obtain that
\begin{align}
    \sum_i\sqrt{\lambda_i} \leq \sqrt{d}.
\end{align}
This follows for instance by considering the vector $x=(\sqrt{\lambda_1}, \dots ,\sqrt{\lambda_d})$ and using that $\Vert x\Vert_1\leq \sqrt{d}\Vert x\Vert_2$, $\sum_i \lambda_i=1$. 
Thus finally we obtain
\begin{align}
    \Vert \mathcal{N}-\mathcal{M}\Vert_\diamond \leq 2\epsilon d.
\end{align}
as claimed. 
\end{proof}

\fourierdecomposition*

\begin{proof}
Consider an arbitrary diagonal unitary, 
\begin{align}
    D=\sum_x e^{-i\theta(x)/2}\ketbra{x}{x}
\end{align}
where the $1/2$ is inserted for convenience later on. 
Any real valued function on $n$ bits can be expanded in its Fourier basis. 
We apply this to the function $\theta(x)$,
\begin{align}
    \theta(x)=\sum_S  (-1)^{\sum_{i\in S}x_i} \hat{\theta}_S = \sum_S  \chi_S(x) \hat{\theta}_S,
\end{align}
where the $\hat{\theta}_S$ are real numbers.
The second equality defines $\chi_S(x)$. 
Using this we can expand the exponential $e^{-(i/2)\theta(x)}$ as
\begin{align}
    e^{-(i/2)\theta(x)} = \prod_S e^{-(i/2)\chi_S(x) \hat{\theta}_S},
\end{align}
so that 
\begin{align}
    D=\sum_x \prod_{S} e^{-(i/2)\chi_S(x) \hat{\theta}_S} \ketbra{x}{x}.
\end{align}

Now consider $R_{Z,S}(\phi_S)$:
\begin{align}
    R_{Z,S}(\phi_S) &= \exp\left(-\frac{i\phi_S}{2} \bigotimes_{i\in S}Z_i\right) \nonumber \\
    &= \sum_ x\exp\left(-\frac{i\phi_S}{2} (-1)^{\sum_{i\in S}x_i}\right) \ketbra{x}{x}\nonumber \\
    &= \sum_x \exp\left(-\frac{i\phi_S}{2} \chi_S(x)\right) \ketbra{x}{x}
\end{align}
where in the second line we inserted the identity in the form $I=\sum_x\ketbra{x}{x}$. 
Now choose $\phi_S=\hat{\theta}_S$, and observe that
\begin{align}
    \prod_S R_{Z,S}(\phi_S) &= \prod_S \sum_x \exp\left(-\frac{i\hat{\theta}_S}{2} \chi_S(x)\right) \ketbra{x}{x} \nonumber \\
    &= \sum_x \prod_S \exp\left(-\frac{i\hat{\theta}_S}{2} \chi_S(x)\right) \ketbra{x}{x} \nonumber \\
    &= \sum_x \exp\left(-\frac{i}{2}\sum_S \hat{\theta}_S\chi_S(x)\right) \ketbra{x}{x} \nonumber \\
    &= \sum_x \exp\left(-\frac{i}{2}\theta(x)\right) \ketbra{x}{x}\nonumber \\
    &= D
\end{align}
as needed. 
\end{proof}

\section{Redistribution of quantum inputs}\label{sec:PTtrick}

In this section we explain why, up to $O(1)$ overheads, classically controlled $2\rightarrow 2$ tasks with different distributions of the quantum inputs are equivalent.
To see this, we give the argument in two steps:
First, we show that a task with inputs $Q,Q'$ starting on the left and right respectively can be reduced to one with $QQ'$ on one side. 
Second, we show that tasks with input $QQ'$ on the left can be reduced to one with $Q$ on the left and $Q'$ on the right. 
Combining these two reductions we can turn any input distribution into any other. 
Note that the reductions require a number of maximally entangled states as overhead which scales polynomially in the dimension of the input systems, so that this number is exponential in the number of qubits in $QQ'$. 
Since this is taken to be $O(1)$ in our context, the transformation is efficient, but in other contexts (where $n_Q, n_{Q'}$ are not constant) these reductions incur a large overhead. 

The reductions are based on the port-based teleportation primitive originally introduced in \cite{ishizaka2008asymptotic}, and applied to non-local quantum computation in \cite{beigi2011simplified}.
In the following we will denote the maximally entangled Bell state on qudit systems by
\begin{equation}
    \ket{\Psi^+_d}:=\frac{1}{\sqrt{d}}\sum_{j=1}^{d}\ket{j}\ket{j} \,.
\end{equation}

We first consider the case which we will call merging, where we wish to implement a specified $2\rightarrow 2$ task, call it $G$, which takes $QQ'$ as input, with $Q$ on the left and $Q'$ on the right.
We will see how to implement this using (some number of copies of) protocols that implement $G'$, which has the same outputs but accepts inputs $QQ'$ both on the left. 
 
To achieve this, we first need some additional resources shared between the left and right:
\begin{itemize}
    \item Share $\ket{\Psi^+_d}_{XY}$ with $X$ on the left, $Y$ on the right, and $d_X=d_Y=d_Q=d$.
    \item Share $\bigotimes_{i=1}^N \ket{\Psi^+_d}_{A_iB_i}$, where we specify $N$ later.
\end{itemize}
Then, we are ready to execute the following protocol. 
\begin{enumerate}
    \item Alice measures $QX$ in the Bell basis, obtaining measurement outcome $k$. The value of $k$ is sent to both players in the communication round.
    \item Bob measures $YB$ using the port-based teleportation POVM. Label the measurement outcome as $i^*$.
    \item Alice undoes the Pauli corresponding to the measurement outcome $k$ on every port $A_i$.
    \item Alice and Bob implement $N$ copies of $G'$, taking $A_i$ as input to the $i$th copy. As the classical inputs to these protocols they use copies of the classical information $x,y$ they receive. During the communication round, Alice sends $i^*$ to Bob.
    \item After the $G'$ protocols are completed, Alice and Bob trace out all outputs from the $G'$ instances except those from the $i^*$ copy. 
\end{enumerate}

Next consider the case where we wish to implement $G'$, which takes both inputs on the left, using copies of $G$, which takes input $Q$ on the left and input $Q'$ on the right. 
We call this procedure splitting.
To achieve this, we first share some extra quantum resources:
\begin{itemize}
    \item Share $\bigotimes_{i=1}^N \ket{\Psi^+}_{A_iB_i}$. 
\end{itemize}
Then, we are ready to execute the following protocol:
\begin{enumerate}
    \item Alice measures $Q'A$ using the port-teleportation POVM, obtaining measurement outcome $i^*$. The value of $i^*$ is sent to Bob in the communication round.
    \item Alice and Bob run $N$ instances of $G$; Bob uses $B_i$ as the input to instance $i$, Alice uses $Q$ as input to instance $i^*$, and inserts $\ket{0}$ states into the remaining copies. Again they use copies of the classical inputs $x,y$ as needed.
    \item After the $G$ protocols are completed, Alice and Bob trace out all outputs from the $G$ instances, except those from the $i^*$ copy. 
\end{enumerate}
The properties of these two reductions are given in \cref{table:Port based}.
They follow from \cref{remark:diamondadditive} and Corollary II.2 in \cite{beigi2011simplified}.
A small subtlety is that, even though we are using $N$ copies of an oracle of precision $\delta$ in parallel, the error does not accumulate to $N\delta$, due to the fact that Alice and Bob discard all of the outputs of the oracles, with the exception of the one corresponding to the outcomes of the port-based schemes.
Instead, only $\delta$ (not $N\delta$) is added to the error.

\begin{figure}
\centering
\makebox[\textwidth][c]{%
\begin{tabular}{|p{1.5cm}|p{10cm}|p{1.75cm}|}
     \hline 
     Task & Resources used & Precision\\
     \hline
     Merging & A single copy of $\ket{\Psi^+_{d_Q}}$, $N$ copies of $\ket{\Psi^+_{d_{QQ'}}}$, $N$ oracles with $\delta$ precision & $\delta+\frac{4d_{QQ'}^2}{\sqrt{N}}$
     \\
     \hline
     Splitting & $N$ copies of $\ket{\Psi^+_{d_{Q'}}}$, $N$ oracles with $\delta$ precision & $\delta+\frac{4d_{Q'}^2}{\sqrt{N}}$
     \\
     \hline
\end{tabular}
}
\caption{Properties of the reductions between non-local strategies of different (spatial) input distributions. $\ket{\Psi^+_d}$ denotes the usual Bell state on two qudit systems, and the precision denotes how well a strategy approximates the desired non-local task with respect to the diamond norm. $N>0$ denotes the number of ports used in the port-based teleportation scheme and can be chosen freely.}
\label{table:Port based}
\end{figure}

\bibliographystyle{unsrt}
\bibliography{biblio}

@article{bluhm2025complexity,
  title={A complexity theory for non-local quantum computation},
  author={Bluhm, Andreas and H{\"o}fer, Simon and May, Alex and Stasiuk, Mikka and Lunel, Philip Verduyn and Yuen, Henry},
  journal={arXiv preprint arXiv:2505.23893},
  year={2025}
}

@article{vaidman2003,
  title = {Instantaneous Measurement of Nonlocal Variables},
  author = {Vaidman, Lev},
  journal = {Phys. Rev. Lett.},
  volume = {90},
  issue = {1},
  pages = {010402},
  numpages = {4},
  year = {2003},
  month = {Jan},
  publisher = {American Physical Society},
  doi = {10.1103/PhysRevLett.90.010402},
  url = {https://link.aps.org/doi/10.1103/PhysRevLett.90.010402}
}

@article{Zhou_2000,
   title={Methodology for quantum logic gate construction},
   volume={62},
   ISSN={1094-1622},
   url={http://dx.doi.org/10.1103/PhysRevA.62.052316},
   DOI={10.1103/physreva.62.052316},
   number={5},
   journal={Physical Review A},
   publisher={American Physical Society (APS)},
   author={Zhou, Xinlan and Leung, Debbie W. and Chuang, Isaac L.},
   year={2000},
   month=oct }

@article{allerstorfer2024relating,
  title={Relating non-local quantum computation to information theoretic cryptography},
  author={Allerstorfer, Rene and Buhrman, Harry and May, Alex and Speelman, Florian and Lunel, Philip Verduyn},
  journal={Quantum},
  volume={8},
  pages={1387},
  year={2024},
  publisher={Verein zur F{\"o}rderung des Open Access Publizierens in den Quantenwissenschaften}
}

@article{kent2011quantum,
  title={Quantum tagging: Authenticating location via quantum information and relativistic signaling constraints},
  author={Kent, Adrian and Munro, William J and Spiller, Timothy P},
  journal={Physical Review A—Atomic, Molecular, and Optical Physics},
  volume={84},
  number={1},
  pages={012326},
  year={2011},
  publisher={APS}
}

@inproceedings{buhrman2013garden,
  title={The garden-hose model},
  author={Buhrman, Harry and Fehr, Serge and Schaffner, Christian and Speelman, Florian},
  booktitle={Proceedings of the 4th Conference on Innovations in Theoretical Computer Science},
  pages={145--158},
  year={2013}
}

@article{buhrman2014position,
  title={Position-based quantum cryptography: Impossibility and constructions},
  author={Buhrman, Harry and Chandran, Nishanth and Fehr, Serge and Gelles, Ran and Goyal, Vipul and Ostrovsky, Rafail and Schaffner, Christian},
  journal={SIAM Journal on Computing},
  volume={43},
  number={1},
  pages={150--178},
  year={2014},
  publisher={SIAM}
}

@article{beigi2011simplified,
  title={Simplified instantaneous non-local quantum computation with applications to position-based cryptography},
  author={Beigi, Salman and K{\"o}nig, Robert},
  journal={New Journal of Physics},
  volume={13},
  number={9},
  pages={093036},
  year={2011}
}

@article{bluhm2022single,
  title={A single-qubit position verification protocol that is secure against multi-qubit attacks},
  author={Bluhm, Andreas and Christandl, Matthias and Speelman, Florian},
  journal={Nature Physics},
  volume={18},
  number={6},
  pages={623--626},
  year={2022},
  publisher={Nature Publishing Group UK London}
}

@article{asadi2024rank,
  title={Rank lower bounds on non-local quantum computation},
  author={Asadi, Vahid R and Culf, Eric and May, Alex},
  journal={arXiv preprint arXiv:2402.18647},
  year={2024}
}

@article{asadi2025linear,
  title={Linear gate bounds against natural functions for position-verification},
  author={Asadi, Vahid and Cleve, Richard and Culf, Eric and May, Alex},
  journal={Quantum},
  volume={9},
  pages={1604},
  year={2025},
  publisher={Verein zur F{\"o}rderung des Open Access Publizierens in den Quantenwissenschaften}
}

@article{cree2023code,
  title={Code-routing: a new attack on position verification},
  author={Cree, Joy and May, Alex},
  journal={Quantum},
  volume={7},
  pages={1079},
  year={2023},
  publisher={Verein zur F{\"o}rderung des Open Access Publizierens in den Quantenwissenschaften}
}

@inproceedings{gertner1998protecting,
  title={Protecting data privacy in private information retrieval schemes},
  author={Gertner, Yael and Ishai, Yuval and Kushilevitz, Eyal and Malkin, Tal},
  booktitle={Proceedings of the thirtieth annual ACM symposium on Theory of computing},
  pages={151--160},
  year={1998}
}

@inproceedings{gay2015communication,
  title={Communication complexity of conditional disclosure of secrets and attribute-based encryption},
  author={Gay, Romain and Kerenidis, Iordanis and Wee, Hoeteck},
  booktitle={Annual Cryptology Conference},
  pages={485--502},
  year={2015},
  organization={Springer}
}

@article{applebaum2021placing,
  title={Placing conditional disclosure of secrets in the communication complexity universe},
  author={Applebaum, Benny and Vasudevan, Prashant Nalini},
  journal={Journal of Cryptology},
  volume={34},
  number={2},
  pages={11},
  year={2021},
  publisher={Springer}
}

@inproceedings{feige1994minimal,
  title={A minimal model for secure computation},
  author={Feige, Uri and Killian, Joe and Naor, Moni},
  booktitle={Proceedings of the twenty-sixth annual ACM symposium on Theory of computing},
  pages={554--563},
  year={1994}
}

@article{olivo2020breaking,
  title={Breaking simple quantum position verification protocols with little entanglement},
  author={Olivo, Andrea and Chabaud, Ulysse and Chailloux, Andr{\'e} and Grosshans, Fr{\'e}d{\'e}ric},
  journal={arXiv preprint arXiv:2007.15808},
  year={2020}
}

@article{barenco1995elementary,
  title={Elementary gates for quantum computation},
  author={Barenco, Adriano and Bennett, Charles H and Cleve, Richard and DiVincenzo, David P and Margolus, Norman and Shor, Peter and Sleator, Tycho and Smolin, John A and Weinfurter, Harald},
  journal={Physical Review A},
  volume={52},
  number={5},
  pages={3457},
  year={1995},
  publisher={APS}
}

@article{hoeffding1963probability,
  title={Probability inequalities for sums of bounded random variables},
  author={Hoeffding, Wassily},
  journal={Journal of the American statistical association},
  volume={58},
  number={301},
  pages={13--30},
  year={1963},
  publisher={Taylor \& Francis}
}

@article{ishizaka2008asymptotic,
  title={Asymptotic teleportation scheme as a universal programmable quantum processor},
  author={Ishizaka, Satoshi and Hiroshima, Tohya},
  journal={Physical Review Letters},
  volume={101},
  number={24},
  pages={240501},
  year={2008},
  publisher={APS}
}

@inproceedings{kavuri2025device,
  title={Device-independent quantum position verification},
  author={Kavuri, Gautam A and Gookin, Abigail and Zhang, Yanbao and Bienfang, Joshua C and Fu, Honghao and Alnawakhtha, Yusuf and Patra, Soumyadip and Reddy, Dileep V and Mazurek, Michael D and Abell{\'a}n, Carlos and others},
  booktitle={Quantum 2.0},
  pages={QM3B--3},
  year={2025},
  organization={Optica Publishing Group}
}

@article{allerstorfer2025making,
  title={Making existing quantum position verification protocols secure against arbitrary transmission loss},
  author={Allerstorfer, Rene and Bluhm, Andreas and Buhrman, Harry and Christandl, Matthias and Escol{\`a}-Farr{\`a}s, Lloren{\c{c}} and Speelman, Florian and Verduyn Lunel, Philip},
  journal={Physical Review Letters},
  volume={135},
  number={26},
  pages={260801},
  year={2025},
  publisher={APS}
}

@article{escola2025quantum,
  title={Quantum position verification in one shot: parallel repetition of the $ f $-{BB84} and $ f $-routing protocols},
  author={Escol{\`a}-Farr{\`a}s, Lloren{\c{c}} and Speelman, Florian},
  journal={arXiv preprint arXiv:2503.09544},
  year={2025}
}

@article{browne2016one,
  title={One-Way Quantum Computation},
  author={Browne, Dan and Briegel, Hans},
  journal={Quantum information: From foundations to quantum technology applications},
  pages={449--473},
  year={2016},
  publisher={Wiley Online Library}
}

@inproceedings{unruh2014quantum,
  title={Quantum position verification in the random oracle model},
  author={Unruh, Dominique},
  booktitle={Annual Cryptology Conference},
  pages={1--18},
  year={2014},
  organization={Springer}
}

@article{lau2011insecurity,
  title={Insecurity of position-based quantum-cryptography protocols against entanglement attacks},
  author={Lau, Hoi-Kwan and Lo, Hoi-Kwong},
  journal={Physical Review A—Atomic, Molecular, and Optical Physics},
  volume={83},
  number={1},
  pages={012322},
  year={2011},
  publisher={APS}
}

@article{chakraborty2015practical,
  title={Practical position-based quantum cryptography},
  author={Chakraborty, Kaushik and Leverrier, Anthony},
  journal={Physical Review A},
  volume={92},
  number={5},
  pages={052304},
  year={2015},
  publisher={APS}
}

@article{may2026entanglement,
  title={Entanglement cost in non-local quantum computation},
  author={May, Alex},
  journal={arXiv preprint arXiv:2605.02840},
  year={2026}
}

@article{fan2026relativistic,
  title={Relativistic Position Verification with Coherent States},
  author={Fan-Yuan, Guan-Jie and Shan, Yang-Guang and Zhang, Cong and Wang, Yu-Long and Fan, Yu-Xuan and Xie, Wei-Xin and He, De-Yong and Wang, Shuang and Yin, Zhen-Qiang and Chen, Wei and others},
  journal={arXiv preprint arXiv:2602.01787},
  year={2026}
}

@article{kanneworff2025towards,
  title={Towards experimental demonstration of quantum position verification using single photons},
  author={Kanneworff, Kirsten and Poortvliet, Mio and Bouwmeester, Dirk and Allerstorfer, Rene and Lunel, Philip Verduyn and Speelman, Florian and Buhrman, Harry and Steindl, Petr and L{\"o}ffler, Wolfgang},
  journal={Quantum Science and Technology},
  volume={10},
  number={4},
  pages={045004},
  year={2025},
  publisher={IOP Publishing}
}

@inproceedings{ananth2024unclonable,
  title={Unclonable secret sharing},
  author={Ananth, Prabhanjan and Goyal, Vipul and Liu, Jiahui and Liu, Qipeng},
  booktitle={International Conference on the Theory and Application of Cryptology and Information Security},
  pages={129--157},
  year={2024},
  organization={Springer}
}

@article{yu2012fast,
  title={Fast protocols for local implementation of bipartite nonlocal unitaries},
  author={Yu, Li and Griffiths, Robert B and Cohen, Scott M},
  journal={Physical Review A—Atomic, Molecular, and Optical Physics},
  volume={85},
  number={1},
  pages={012304},
  year={2012},
  publisher={APS}
}

@inproceedings{girish2026magic,
  title={Magic and communication complexity},
  author={Girish, Uma and May, Alex and Parham, Natalie and Yuen, Henry},
  booktitle={Proceedings of the 58th Annual ACM Symposium on Theory of Computing},
  pages={845--856},
  year={2026}
}

@article{may2022connected,
  title={The connected wedge theorem and its consequences},
  author={May, Alex and Sorce, Jonathan and Yoshida, Beni},
  journal={Journal of High Energy Physics},
  volume={2022},
  number={11},
  pages={1--65},
  year={2022},
  publisher={Springer}
}

@article{may2022complexity,
  title={Complexity and entanglement in non-local computation and holography},
  author={May, Alex},
  journal={Quantum},
  volume={6},
  pages={864},
  year={2022},
  publisher={Verein zur F{\"o}rderung des Open Access Publizierens in den Quantenwissenschaften}
}

@article{may2019quantum,
  title={Quantum tasks in holography},
  author={May, Alex},
  journal={Journal of High Energy Physics},
  volume={2019},
  number={10},
  pages={1--39},
  year={2019},
  publisher={Springer}
}

@article{may2020holographic,
  title={Holographic scattering requires a connected entanglement wedge},
  author={May, Alex and Penington, Geoff and Sorce, Jonathan},
  journal={Journal of High Energy Physics},
  volume={2020},
  number={8},
  pages={1--34},
  year={2020},
  publisher={Springer}
}

@article{apel2024security,
  title={Security of quantum position-verification limits Hamiltonian simulation via holography},
  author={Apel, Harriet and Cubitt, Toby and Hayden, Patrick and Kohler, Tamara and P{\'e}rez-Garc{\'\i}a, David},
  journal={Journal of High Energy Physics},
  volume={2024},
  number={8},
  pages={1--40},
  year={2024},
  publisher={Springer}
}

@article{asadi2025conditional,
  title={Conditional disclosure of secrets with quantum resources},
  author={Asadi, Vahid R and Kuroiwa, Kohdai and Leung, Debbie and May, Alex and Pasterski, Sabrina and Waddell, Chris},
  journal={Quantum},
  volume={9},
  pages={1885},
  year={2025},
  publisher={Verein zur F{\"o}rderung des Open Access Publizierens in den Quantenwissenschaften}
}

@article{girish2026new,
  title={New bounds on private simultaneous quantum message passing},
  author={Girish, Uma and May, Alex and Parham, Natalie and Yuen, Henry},
  journal={arXiv preprint arXiv:2606.12557},
  year={2026}
}

@article{girish2026comparing,
  title={Comparing classical and quantum conditional disclosure of secrets},
  author={Girish, Uma and May, Alex and Orshansky, Leo and Waddell, Chris},
  journal={Quantum},
  volume={10},
  pages={2049},
  year={2026},
  publisher={Verein zur F{\"o}rderung des Open Access Publizierens in den Quantenwissenschaften}
}

@article{speelman2015instantaneous,
  title={Instantaneous non-local computation of low T-depth quantum circuits},
  author={Speelman, Florian},
  journal={arXiv preprint arXiv:1511.02839},
  year={2015}
}

\end{document}